\newtheorem{theorem}{Theorem}[section]
\newtheorem{lemma}[theorem]{Lemma}
\newtheorem{proposition}[theorem]{Proposition}
\newtheorem{conjecture}[theorem]{Conjecture}
\theoremstyle{definition}
\newtheorem{definition}[theorem]{Definition}
\newtheorem{example}[theorem]{Example}
\theoremstyle{remark}
\newtheorem{remark}[theorem]{Remark}
\newtheorem{notation}[theorem]{Notation}
\newtheoremstyle{TheoremRepeat}
    {\topsep}{\topsep}              %%% space between body and thm
    {\itshape}                      %%% Thm body font
    {}                              %%% Indent amount (empty = no indent)
    {\bfseries}                     %%% Thm head font
    {.}                             %%% Punctuation after thm head
    { }                             %%% Space after thm head
    {\thmname{#1}\thmnote{ \bfseries #3}}%%% Thm head spec
\theoremstyle{TheoremRepeat}
\newtheorem{theoremrep}{Theorem}
\newtheorem{lemmarep}{Lemma}
\newtheorem{propositionrep}{Proposition}
\newcommand{\Comment}[1]{}
\newcommand{\Defeq}{\stackrel{\mathrm{df}}{=}}
\newcommand{\bnfeq}{::=}
\newcommand{\card}[1]{|#1|}
\newcommand{\bsep}{\mid}
\newcommand{\union}{\cup}
\newcommand{\Union}{\bigcup}
\newcommand{\inter}{\cap}
\newcommand{\cross}{\times}
\newcommand{\pow}{\mc P}
\newcommand{\res}{\restriction}
\newcommand{\Nat}{\mathbb{N}}
\newcommand{\dom}[1]{\mathsf{dom}(#1)}
\newcommand{\rge}[1]{\mathsf{rge}(#1)}
\newcommand{\Par}{\mathrel{\mbox{$\! \mid\! $}}}
\newcommand{\m}[1]{\mbox{#1}}
\newcommand{\mc}[1]{\mathcal{#1}}
\newcommand{\lab}{\ell}
\newcommand{\Act}{\mathsf{Act}}
\newcommand{\co}{\mathrel{co}}
\newcommand{\tran}[1]{\stackrel{#1}{\rightarrow}}
\newcommand{\tranc}[1]{\stackrel{#1}{\rightarrow}_{\mc C}}
\newcommand{\trand}[1]{\stackrel{#1}{\rightarrow}_{\mc D}}
\newcommand{\rtran}[1]{\stackrel{#1}{\rightsquigarrow}}
\newcommand{\rtranc}[1]{\stackrel{#1}{\rightsquigarrow}_{\mc C}}
\newcommand{\rtrand}[1]{\stackrel{#1}{\rightsquigarrow}_{\mc D}}
\newcommand{\fd}[1]{\langle{#1}\rangle\! \rangle}
\newcommand{\dec}[1]{({#1})}
\newcommand{\di}[1]{\langle{#1}\rangle}
\newcommand{\rd}[1]{\langle\! \langle{#1}\rangle}
\newcommand{\fb}[1]{\mathrel{\, [{#1}]] }}
\newcommand{\rb}[1]{\mathrel{\, [[{#1}]}}
\newcommand{\Fi}[1]{\mathsf{fi}(#1)}
\newcommand{\ttrue}{\mathrm{t\! t}}
\newcommand{\ffalse}{\m{\rm ff}}
\newcommand{\oor}{\vee}
\newcommand{\aand}{\wedge}
\newcommand{\Oor}{\bigvee}
\newcommand{\Aand}{\bigwedge}
\newcommand{\depth}[1]{\mathsf{md}(#1)}
\newcommand{\EIL}[1]{\mathrm{EIL}_{\mathrm{#1}}}
\newcommand{\eql}[1]{\sim_{#1}}
\newcommand{\eqb}[1]{\approx_{\mathsf{#1}}}
\newcommand{\isom}{\cong}
\newcommand{\Id}{\mathsf{Id}}
\mathchardef\mh="2D
\newcommand{\Char}[2]{\chi_{#2}^{\mathrm{#1}}}
\newcommand{\Chard}[3]{\chi_{#2,#3}^{\mathrm{#1}}}
\newcommand{\fbc}[4]{(\boldsymbol{#1},\bar{\boldsymbol{#2}} < \mathsf{#3}\, #4 )}
\title{A Logic with Reverse Modalities for History-preserving Bisimulations}
\author{Iain Phillips
\institute{Department of Computing, Imperial College London, England}
\email{iccp@doc.ic.ac.uk}
%\author{Iain Phillips \\
%Department of Computing, Imperial College London, England \\
%iccp@doc.ic.ac.uk
\and
Irek Ulidowski
\institute{Department of Computer Science, University of Leicester, England}
\email{iu3@mcs.le.ac.uk}
%Irek Ulidowski \\
%Department of Computer Science, University of Leicester, England \\
%iu3@mcs.le.ac.uk
}
\begin{document}
\maketitle

\begin{abstract}
We introduce event identifier logic (EIL) which extends Hennessy-Milner logic by the addition of (1) reverse as well as forward modalities, and (2) identifiers to keep track of events.
We show that this logic corresponds to hereditary history-preserving (HH) bisimulation equivalence within a particular true-concurrency model, namely stable configuration structures.
We furthermore show how natural sublogics of EIL correspond to coarser equivalences.
In particular we provide logical characterisations of weak history-preserving (WH) and history-preserving (H) bisimulation.
Logics corresponding to HH and H bisimulation have been given previously, but
not to WH bisimulation (when autoconcurrency is allowed), as far as we are aware.
We also present characteristic formulas which characterise individual structures
with respect to history-preserving equivalences.
\end{abstract}

\section{Introduction}

The paper presents a modal logic that can express simple properties of computation in 
the true concurrency setting of stable configuration structures. We aim, like 
Hennessy-Milner logic (HML)~\cite{HM85} in the interleaving setting, to characterise
the main true concurrency equivalences and to develop characteristic formulas for them. 
We focus in this
paper on history-preserving bisimulation equivalences.

HML has a ``diamond'' modality $\di a \phi$
which says that an event labelled $a$ can be performed,
taking us to a new state which satisfies $\phi$.
The logic also contains negation ($\neg$), conjunction ($\aand$)
and a base formula which always holds ($\ttrue$).
HML is strong enough to distinguish any two processes which are not bisimilar.

We are interested in making true concurrency distinctions between processes.
These processes will be {\em event structures},
where the current state is represented by the set of events which have occurred so far.
Such sets are called {\em configurations}.
Events have labels (ranged over by $a,b,\ldots$), and different events may have the same 
label.  We shall refer to example event structures using a CCS-like notation,
with $a \Par b$ denoting an event labelled with $a$ in parallel with another labelled 
with $b$,
$a . b$ denoting two events labelled $a$ and $b$ where the first causes the second,
and $a + b$ denoting two events labelled $a$ and $b$ which conflict.

In the true concurrency setting bisimulation is referred to as 
\emph{interleaving bisimulation}, or IB for short.
The processes $a \Par b$ and $a.b + b.a$ are interleaving bisimilar,
but from the point of view of true concurrency they should be distinguished,
and HML is not powerful enough to do this.

We therefore look for a more powerful logic, and we base this logic on adding reverse moves.
Instead of the one modality $\di a \phi$ we have two:
{\em forward diamond} $\fd a \phi$
(which is just a new notation for the $\di a \phi$ of HML)
and {\em reverse diamond} $\rd a \phi$.  
The latter is satisfied if we can reverse some event labelled with $a$ and get to a configuration where $\phi$ holds.
Such an event would have to be {\em maximal} to enable us to reverse it,
i.e.\ it could not be causing some other event that has already occurred.

With this new reverse modality we can now distinguish  $a \Par b$ and $a.b + b.a$:
$a \Par b$ satisfies $\fd a \fd b \rd a \ttrue$, while $a.b + b.a$ does not.
The formula expresses the idea that $a$ and $b$ are {\em concurrent}.
Alternatively we see that 
$a.b + b.a$ satisfies $\fd a \fd b \neg \rd a \ttrue$, while $a \Par b$ does not.
This latter formula expresses the idea that $a$ {\em causes} $b$.

The new logic corresponds to \emph{reverse interleaving bisimulation} \cite{PU11}, 
or RI-IB for short.  In the absence of autoconcurrency,
Bednarczyk~\cite{Bed91} showed that this is as strong as 
\emph{hereditary history-preserving bisimulation}~\cite{Bed91}, or HH for short,
which is usually regarded as the strongest desirable true concurrency equivalence.
HH was independently proposed in~\cite{JNW96}, under the name of strong history-preserving bisimulation.

Auto-concurrency is where events can occur concurrently and have the same label.
To allow for this, we need to strengthen the logic.
For instance, we want to distinguish $a \Par a$ from $a.a$,
which is not possible with the logic as it stands:
$\fd a \fd a \rd a \ttrue$ is satisfied by both processes.
We need some way of distinguishing the two events labelled with $a$.
We change our modalities so that when we make a forward move we {\em declare} an 
{\em identifier} (ranged over by $x,y,\ldots$) which stands for that event,
allowing us to refer to it again when reversing it. 
%\cite{PU08c}.
Now we can write $\fd {x:a} \fd{y:a} \rd x \ttrue$,
and this is satisfied by $a \Par a$, but not by $a.a$.
Declaration is an identifier-binding operation, so that $x$ and $y$ are both bound in the formula.
Baldan and Crafa~\cite{BC10} also used such declarations in their forward-only logic.

With this simple change we now have a logic which is as strong as HH,
even with autoconcurrency.

We have to be careful that our logic does not become too strong.
%For instance, we want to ensure that processes $a$ and $a + a$ cannot be distinguished.
For instance, we want to ensure that processes $a$ and $a + a$ are indistinguishable.
One might think that $a+a$ satisfies $\fd{x:a} \rd x \fd{y:a} \neg \rd x \ttrue$,
while $a$ does not.
To avoid this, we need to ensure that $x$ is forgotten about once it is reversed,
and so cannot be used again.
One could make a syntactic restriction that in a formula $\rd x \phi$ the identifier $x$ is not allowed to occur (free) in $\phi$.
However this is not actually necessary, as our semantics will ensure that all identifiers must be assigned to events in the current configuration.
So in fact $\fd{x:a} \rd x \fd{y:a} \neg \rd x \ttrue$ is not satisfied by $a+a$,
since we are not allowed to reverse $x$ as it would take us to a configuration where
$x$ is mentioned in $\fd{y:a} \neg \rd x \ttrue$ but $x$ is assigned to an event outside the current configuration.
Baldan and Crafa~\cite{BC10} also had to deal with this issue.
%, in the context of their forward-only logic.

Our logic is not quite complete,
since we wish to express certain further properties.
For instance, we would like to express a reverse move labelled with $a$,
i.e.\ $\rd a \phi$.
Instead of adding this directly, we add {\em declarations} $\dec{x:a}\phi$.
We can now express $\rd a \phi$ by the formula
$\dec{x:a}\rd x \phi$ (where $x$ does not occur (free) in $\phi$).

We also wish to express so-called {\em step transitions}, which are transitions consisting of multiple events occurring concurrently.  For instance a forward step $\fd{a,a} \phi$ of two events labelled with $a$ can be achieved by
$\fd {x:a}\fd{y:a}(\phi \aand \rd x \ttrue)$ and a reverse step $\rd{a,a} \phi$ can be achieved by
$\dec{x:a}\dec{y:a}(\rd x \rd y \phi \aand \rd y \ttrue)$
(both formulas with $x$ and $y$ not free in $\phi$).
Thus the reverse steps employ declarations.
As well as expressing reverse steps, declarations allow us to obtain a sublogic 
which corresponds to \emph{weak history-preserving bisimulation} (WH).

\begin{figure}
\psfrag{HH}{HH}
\psfrag{H}{H}
\psfrag{HWH}{HWH}
\psfrag{WH}{WH}
\centering
\includegraphics[width=3cm]{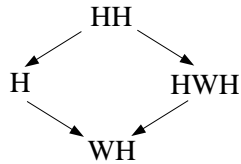}
\caption{The hierarchy of history-preserving equivalences.}\label{fig:Eqs}
\end{figure}

This completes a brief introduction of our logic,
which we call {\em Event Identifier Logic}, or $\EIL{}$ for short. 
Apart from corresponding to HH, EIL has natural sublogics for several 
other true concurrency equivalences.
Figure~\ref{fig:Eqs} shows a hierarchy of equivalences that we are able to
characterise,
where arrows denote proper set inclusion.  Apart from the mentioned HH and WH,
\emph{history-preserving bisimulation} (H) is a widely studied equivalence 
that employs history isomorphism.
%\emph{Step bisimulation} (SB) and \emph{pomset
%bisimulation} (PB) are generalisations of IB with steps and pomsets 
%of event labels as observations, respectively. \emph{Weak history-preserving
%pomset bisimulation} (WHPB) is a smallest bisimulation that contains WH and PB, 
\emph{Hereditary weak-history
preserving bisimulation} (HWH) is WH with the hereditary property \cite{Bed91}
that deals with reversing
of events. The definitions %and the background 
of these equivalences can be found
in \cite{vGG01,PU11}, and are outlined in Section~\ref{subsec:equivalences}.

It is natural to ask if, at least for a finite structure, there is a single logical formula which captures all of its behaviour, up to a certain equivalence.
Such formulas are called {\em characteristic} formulas.
They have been investigated previously for HML and other logics~\cite{GS86,SI94,AIS09}.
We look at characteristic formulas with respect to three of the equivalences we consider, namely HH, H and WH.

The main contribution of the paper is a logic EIL. It could be argued that EIL is a natural
and canonical logic for the true concurrency equivalences considered here
in the following sense.  Firstly, its forward and reverse modalities capture faithfully 
the information of the forward and reverse transitions 
in the definitions of the equivalences, Secondly, event identifier environments and
event declarations give rise naturally to order isomorphisms for HH, H, HWH and WH.
Finally, EIL extends HML and keeps with its spirit of having simple modalities
defined seamlessly over a general computation model. 

Other contributions include the first to 
our knowledge logics for WH and HWH.
%We also give a full proof of EIL's characterisation of 
%HH in the presence of autoconcurrency.
Finally, we present the first to our knowledge characteristic formulas for HH, H and WH.

The paper is organised as follows.
We look at related work in Section~\ref{sec:related}.
Then we recall the definitions of configuration structures and the bisimulation-based equivalences that we shall need in Section~\ref{sec:config eqnces}.
We then introduce $\EIL{}$ in Section~\ref{sec:EIL}, giving examples of its usage.
Next we look at how to characterise various equivalences using $\EIL{}$ and its sublogics (Section~\ref{sec:EIL eqnces}).
In Section~\ref{sec:char} we investigate characteristic formulas.
We finish with conclusions and future work.

\section{Related work}\label{sec:related}
Previous work on logics for true concurrency can be categorised loosely according to
the type of semantic structure (model) that the satisfaction relation of the logic
is defined for.  There are logics over 
configurations (sets of consistent events) \cite{GKP92,BC10} and logics over 
paths (or computations) \cite{Che92,NC94,NC94a,NC95,PLS94}, although logics in 
\cite{NC94,NC94a,NC95} can be seen also as logics over configurations.
Other structures such as trees, graphs and Kripke frames are used as models 
in, for example, \cite{DNF90,MT92,Gut09,GB09}. 
%We discuss below the main results in these papers.

%Event identifiers in a modal logic for true concurrency were first proposed 
%in \cite{PU08c,BC10}.
The logic in this paper uses simple forward and reverse
event identifier modalities that are sufficient to characterise HH.
% \cite{Bed91,vGG01,PU08c}. 
In contrast, Baldan and Crafa \cite{BC10}
achieved an alternative characterisation of HH with a different modal logic that uses solely
forward-only event identifier modalities $\di{x}$ and $\fbc{x}{y}{a}{z}$. 
The formula $\fbc{x}{y}{a}{z} \phi$ holds in a configuration if in its future
there is an $\mathsf{a}$-labelled event $e$ that can be bound to $z$, and $\phi$ holds.
Additionally,
$e$ must be (1) caused at least by the events already bound to the events in $\boldsymbol{x}$
and (2) concurrent with at least the events already bound to the events in $\boldsymbol{y}$.
Several interesting sublogics were also identified in \cite{BC10} that characterise
H, pomset bisimulation~\cite{BC87,vGG01} and step bisimulation~\cite{Pom86,vGG01}
respectively.

Goltz, Kuiper and Penczek \cite{GKP92} researched configurations of prime
event structures \emph{without autoconcurrency}. In such a setting HH
coincides with reverse interleaving bisimulation RI-IB
%, namely HH=RI-IB.
(shown in~\cite{Bed91}).
% \cite{PU11}.
%which is called \emph{backward-forward bisimulation} in \cite{GKP92}
%and is written as bf-b, so HH=bf-b.
Moreover, H coincides with WH.
%:  H=WH.  
\emph{Partial Order Logic} (POL) is proposed
in \cite{GKP92}. POL contains past modalities and the authors stated that it
characterises RI-IB (and thus HH).  Also, it is conjectured that if one restricts POL 
in such a way that 
no forward modalities can be nested in a past modality, then such a logic characterises H
(and thus WH).

%Next, we discuss logics with paths as models. 
Cherief \cite{Che92} defined a pomset bisimulation relation over paths and
shows that it coincides with H (defined over configurations). The author
then predicted that an extension of HML with forward and reverse pomset modalities
characterises H. This idea was then developed further by 
Pinchinat, Laroussinie and Schnoebelen in \cite{PLS94}. 
%It looks that only nfar version %of this logic is necessary: 
%$\di{a}\di{a}\!\bo{b}\di{\leftarrow a\mid b}\ttrue$ distinguishes
%the ESs in Figure~15 of \cite{PU11}.

Nielsen and Clausen defined a $\delta$-bisimulation relation ($\delta$b) over paths 
\cite{NC94,NC95}.  Unlike in \cite{Che92,PLS94}, one is allowed to reverse  
independent maximal events in any order. This seemingly small change has a profound 
effect on the strength of the equivalence: $\delta$b coincides with HH. 
It was shown that an extension of HML with a reverse modality characterises HH 
when there is no autoconcurrency \cite{NC94,NC95}.
Additionally, it was stated (without a proof) \cite{NC94a}
that an extension of HML with a reverse $\emph{event index}$ modality characterises HH
even in the presence of autoconcurrency. The notion of paths used in \cite{NC94,NC94a,NC95}
induces a notion of configuration. Hence, their  logics could be understood as logics
% Note that the event index in a path is 
%the position of the event in the path.
%It seems that if we consider the nfar version of this logic we can characterise H.
over configurations and reverse index modality could be seen as a form of our reverse 
event identifier modality. We would argue, however, that many properties of configurations
related to causality and concurrency between events 
are expressed more naturally with reverse identifier modalities.

\Comment{
Gutierrez introduced a modal logic for transition systems with independence 
\cite{Gut09,GB09}.
The logic has two diamond modalities: $\di{a}_c$ (respectively $\di{a}_{nc}$) asserts that,
having performed a transition $t$ to reach the current state, there is a
causally dependent (respectively concurrent) wrt $t$ transition $t'$ with label $a$
that can be performed. The logic is as strong as H.
}

Past or reverse modalities, which are central to our logic, were used before 
in a number of modal logics and temporal logics 
\cite{HS85,DNV90,DNMV90,DNF90,GKP92,LPS95,LS95,Pen95} but only 
\cite{DNF90,GKP92} proposed logical characterisations of true concurrency equivalences.  
Among the rest, HML with backward modalities in \cite{DNV90,DNMV90} defined over paths
is shown to characterise branching bisimulation. Finally, Gutierrez introduced a modal logic 
for transition systems with independence \cite{Gut09,GB09} that has
two diamond modalities: one for causally dependent transitions and the other for
concurrent transitions with respect to a given transition.

\section{Configuration structures and equivalences}\label{sec:config eqnces}

In this section we define our computational model (stable configuration structures) and the various bisimulation equivalences for which we shall present logical characterisations.

\subsection{Configuration structures}\label{subsec:config}
We work with stable configuration structures~\cite{vGP95,vGP09,vGG01}, which are equivalent to stable event structures~\cite{Win87}.

%Question: what properties of stable configuration structures do we require for our results?

\begin{definition}\label{def:cs}
A {\em configuration structure} (over an alphabet $\Act$) is a pair $\mc C = (C,\lab)$ where $C$ is a family of finite sets (configurations) and $\lab:\Union_{X \in C} X \to \Act$ is a labelling function.
\end{definition}
%We denote the domain of stable configuration structures by $\Cstable$.
We use $C_{\mc C}, \lab_{\mc C}$ to refer to the two components of a configuration structure $\mc C$.
Also we let $E_{\mc C} = \Union_{X \in C} X$, the {\em events} of $\mc C$.
We let $e,\ldots$ range over events, and $E,F,\ldots$ over sets of events.
We let $a,b,c,\ldots$ range over labels in $\Act$.
\begin{definition}[\cite{vGG01}]
A configuration structure $\mc C = (C,\lab)$ is {\em stable} if it is
\begin{itemize}
\item
rooted: $\emptyset \in C$;\quad 
connected: $\emptyset \neq X \in C$ implies $\exists e \in X: X \setminus\{e\} \in C$;
\item
closed under bounded unions: if $X,Y,Z \in C$ then $X \union Y \subseteq Z$ implies $X \union Y \in C$;
\item
closed under bounded intersections: if $X,Y,Z \in C$ then $X \union Y \subseteq Z$ implies $X \inter Y \in C$.
\end{itemize}
\end{definition}
%Note that in the presence of the ``closed under bounded unions" condition 
%the ``closed under bounded intersections" condition can be simplified to
%``if $X,Y,X \union Y \in C$ then $X \inter Y \in C$".

Any stable configuration structure is the set of configurations of a 
stable event structure~\cite[Thm~5.3]{vGG01}.

\begin{definition}
Let $\mc C = (C,\lab)$ be a stable configuration structure, and let $X \in C$.
\begin{itemize}
\item
Causality: $d \leq_X e$ iff for all $Y \in C$ with $Y \subseteq X$ we have $e \in Y$ implies $d \in Y$.
Furthermore $d <_X e$ iff $d \leq_X e$ and $d \neq e$.
\item
Concurrency: $d \co_X e $ iff $d \not <_X e$ and $e \not <_X d$.
\end{itemize}
\end{definition}

It is shown in~\cite{vGG01} that $<_X$ is a partial order and that the sub-configurations of $X$ are precisely those subsets $Y$ which are left-closed w.r.t.\ $<_X$, i.e.\ if $d <_X e \in Y$ then $d \in Y$.
Furthermore, if $X,Y \in C$ with $Y \subseteq X$, then ${<_Y} = {<_X \res Y}$.

Recall that a prime event structure is a set of events with 
a labelling function, together with a causality relation and a conflict 
relation (between events that cannot be members of the same 
configuration)~\cite{Win87}.
The set of configurations of a prime event structure forms a stable 
configuration structure; prime event structures are a proper subclass 
of stable event structures.  All of our examples are given as prime event structures 
or the corresponding CCS expressions.
When drawing diagrams of prime event structures we shall, as usual, depict 
the causal relation with arrows, and the conflict relation with dotted lines.  
We shall also suppress
the actual events and write their labels instead.  Thus if we have two events $e_1$ and 
$e_2$, both labelled with $a$, in diagrams we shall denote them as $a_1$ and $a_2$, 
respectively, when we wish to distinguish between them. This is justified, since all 
the notions of equivalence we shall discuss depend on the labels of the events, 
rather than the events themselves.

\begin{example}\label{simple_pes}
Consider a prime event structure with events $e_1, e_2, e_3$ all labelled with
$a$, where $e_1$ causes $e_2$ and $e_1,e_2$ are concurrent with $e_3$. The corresponding CCS
expression is $(a.a) \Par a$. The set of configurations
consists of $\emptyset$, $\{e_1\}, \{e_3\}, \{e_1, e_2\}, \{e_1,e_3\}$ and $\{e_1, e_2,e_3\}$. 

\end{example}

\begin{definition}\label{def:tran}
Let $\mc C = (C,\lab)$ be a stable configuration structure and let $a \in \Act$.
We let $X \tranc e X'$ iff $X,X' \in C$, $X \subseteq X'$ and $X' \setminus X = \{e\}$.
Furthermore we let $X \tranc a X'$ iff $X \tranc e X'$ for some $e$ with $\lab(e) = a$.
We also define reverse transitions:
$X \rtranc e X'$ iff $X' \tranc e X$, and $X \rtranc a X'$ iff $X' \tranc a X$.
The overloading of notation whereby transitions can be labelled with events or with event labels should not cause confusion.
%We say that $e$ is the {\em underlying event} of transition $X \tranc a X'$, and similarly for other notions of labelled transition relation.
\end{definition}
For a set of events $E$, let $\lab(E)$ be the multiset of labels of events in $E$.
We define a {\em step} transition relation where concurrent events are executed in a single step:
\begin{definition}
Let $\mc C = (C,\lab)$ be a stable configuration structure and let $A \in \Nat^\Act$ ($A$ is a multiset over $\Act$).
We let $X \tranc A X'$ iff $X,X' \in C$, $X \subseteq X'$, and $X' \setminus X = E$ with $d \co_{X'} e$ for all $d,e \in E$ and $\lab(E) = A$.
\end{definition}
We shall assume in what follows that stable configuration structures are {\em image finite} with respect to forward transitions,
i.e.\ for any configuration $X$ and any label $a$, the set $\{X':X \tranc a X'\}$ is finite.

\subsection{Equivalences}\label{subsec:equivalences}

%***Definitions of IB, SB, PB, WHPB moved to Appendix~\ref{sec:pomset step}.***
We define history-preserving bisimulations and illustrate the differences between them 
with examples.

\begin{definition}\label{def:isom pomset}
Let $\mc X = (X,<_X,\lab_X)$ and $\mc Y = (Y,<_Y,\lab_Y)$ be partial orders which are labelled over $\Act$.
We say that $\mc X$ and $\mc Y$ are {\em isomorphic} ($X \isom Y$) iff there is a bijection from $X$ to $Y$ respecting the ordering and the labelling.  The isomorphism class $[\mc X]_{\isom}$ of a partial order labelled over $\Act$ is called a {\em pomset} over $\Act$.
% We let $p,\ldots$ range over pomsets.
\end{definition}

\begin{definition}[\cite{DDNM87,vGG01}]\label{def:wh}
Let $\mc C, \mc D$ be stable configuration structures.
A relation $\mc R \subseteq C_{\mc C} \cross C_{\mc D}$ is a {\em weak history-preserving (WH) bisimulation} between $\mc C$ and $\mc D$ if $\mc R(\emptyset,\emptyset)$ and if $\mc R(X,Y)$ and $a \in \Act$ then:
\begin{itemize}
\item
$(X, <_X,\lab_{\mc C}\res X) \isom (Y, <_Y,\lab_{\mc D}\res Y)$;
\item
if $X \tranc a X'$ then $\exists Y'.\ Y \trand a Y'$ and $\mc R(X',Y')$;
\item
if $Y \trand a Y'$ then $\exists X'.\ X \tranc a X'$ and $\mc R(X',Y')$.
\end{itemize}
We say that $\mc C$ and $\mc D$ are WH equivalent ($\mc C \eqb{wh} \mc D$) iff there is a WH bisimulation between $\mc C$ and $\mc D$.
\end{definition}

\begin{definition}[\cite{RT88,vGG01}]\label{def:h}
Let $\mc C, \mc D$ be stable configuration structures.
A relation $\mc R \subseteq C_{\mc C} \cross C_{\mc D} \cross \pow(E_{\mc C} \cross E_{\mc D})$ is a {\em history-preserving (H) bisimulation} between $\mc C$ and $\mc D$ iff $\mc R(\emptyset,\emptyset,\emptyset)$ and if $\mc R(X,Y,f)$ and $a \in \Act$
\begin{itemize}
\item
$f$ is an isomorphism between $(X, <_X,\lab_{\mc C}\res X)$ and $(Y, <_Y,\lab_{\mc D}\res Y)$;
\item
if $X \tranc a X'$ then $\exists Y',f'.\ Y \trand a Y'$, $\mc R(X',Y',f')$ and 
$f'\res X = f$;
\item
if $Y \trand a Y'$ then $\exists X',f'.\ X \tranc a X'$, $\mc R(X',Y',f')$ and 
$f'\res X = f$.
\end{itemize}
We say that $\mc C$ and $\mc D$ are H equivalent ($\mc C \eqb{h} \mc D$) iff there is 
an H bisimulation between $\mc C$ and $\mc D$.
\end{definition}

Both H and WH have associated hereditary versions:
\begin{definition}[\cite{Bed91,JNW96,vGG01}]\label{def:hh}
Let $\mc C, \mc D$ be stable configuration structures and let $a \in \Act$.
Then $\mc R \subseteq C_{\mc C} \cross C_{\mc D} \cross \pow(E_{\mc C} \cross E_{\mc D})$
is a hereditary H (HH) bisimulation iff $\mc R$ is an H bisimulation and if $\mc R(X,Y,f)$ then
for any $a \in \Act$,
\begin{itemize}
\item
if $X \rtranc a X'$ then $\exists Y',f'.\ Y \rtrand a Y'$, $\mc R(X',Y',f')$  and
$f\res X' = f'$;
\item
if $Y \rtrand a Y'$ then $\exists X',f'.\ X \rtranc a X'$, $\mc R(X',Y',f')$
and $f\res X' = f'$.
\end{itemize}
We say that $\mc C$ and $\mc D$ are HH equivalent ($\mc C \eqb{hh} \mc D$) iff 
there is an HH bisimulation between $\mc C$ and $\mc D$.
\end{definition}

\begin{definition}\label{def:hwh}
Let $\mc C, \mc D$ be stable configuration structures and let $a \in \Act$.
Then $\mc R \subseteq C_{\mc C} \cross C_{\mc D} \cross \pow(E_{\mc C} \cross E_{\mc D})$
is a hereditary WH (HWH) bisimulation if $\mc R(\emptyset,\emptyset,\emptyset)$ and if $\mc R(X,Y,f)$ and $a \in \Act$ then:
\begin{itemize}
\item
$f$ is an isomorphism between $(X, <_X,\lab_{\mc C}\res X)$ and $(Y, <_Y,\lab_{\mc D}\res Y)$;
\item
if $X \tranc a X'$ then $\exists Y',f'.\ Y \trand a Y'$ and $\mc R(X',Y',f')$;
\item
if $Y \trand a Y'$ then $\exists X',f'.\ X \tranc a X'$ and $\mc R(X',Y',f')$;
\item
if $X \rtranc a X'$ then $\exists Y',f'.\ Y \rtrand a Y'$, $\mc R(X',Y',f')$  and
$f\res X' = f'$;
\item
if $Y \rtrand a Y'$ then $\exists X',f'.\ X \rtranc a X'$, $\mc R(X',Y',f')$
and $f\res X' = f'$.
\end{itemize}
Also $\mc C$ and $\mc D$ are HWH equivalent ($\mc C \eqb{hwh} \mc D$) iff 
there is an HWH bisimulation between $\mc C$ and $\mc D$.
\end{definition}
The inclusions in Figure~\ref{fig:Eqs} are immediate from the definitions.
They are strict inclusions:
%the Absorption Law~\cite{BC87,Bed91,vGG01} holds for H equivalence but not for HWH 
%equivalence; we refer to ~\cite[Examples~4.7,4.8]{PU11} for structures which are 
%HWH equivalent but not H equivalent.

\begin{example}[\cite{PU11}]\label{ex:hwh-not-h}
Consider event structures $\mc E$, $\mc F$ in Figure~\ref{fig:hwh-not-h}, where each
event structure has four $a$-labelled and four $b$-labelled events.
$\mc E = \mc F$ holds for $\eqb{hwh}\,$, and hence for $\eqb{wh}\,$, but not for $\eqb{h}\,$,
and hence not for $\eqb{hh}\,$. We now show this.  $\mc E$, $\mc F$ have the same
configurations except that $\{a_2,a_3,b_3\}$ is missing in $\mc F$. We define a bisimulation
by relating all isomorphic states, and check that it is an HWH.
\begin{figure}
\centering
\psfrag{E}{$\mc E$}
\psfrag{F}{$\mc F$}
\psfrag{a1}{$a_1$}
\psfrag{a2}{$a_2$}
\psfrag{a3}{$a_3$}
\psfrag{a4}{$a_4$}
\psfrag{b1}{$b_1$}
\psfrag{b2}{$b_2$}
\psfrag{b3}{$b_3$}
\psfrag{b4}{$b_4$}
\includegraphics[width=4in]{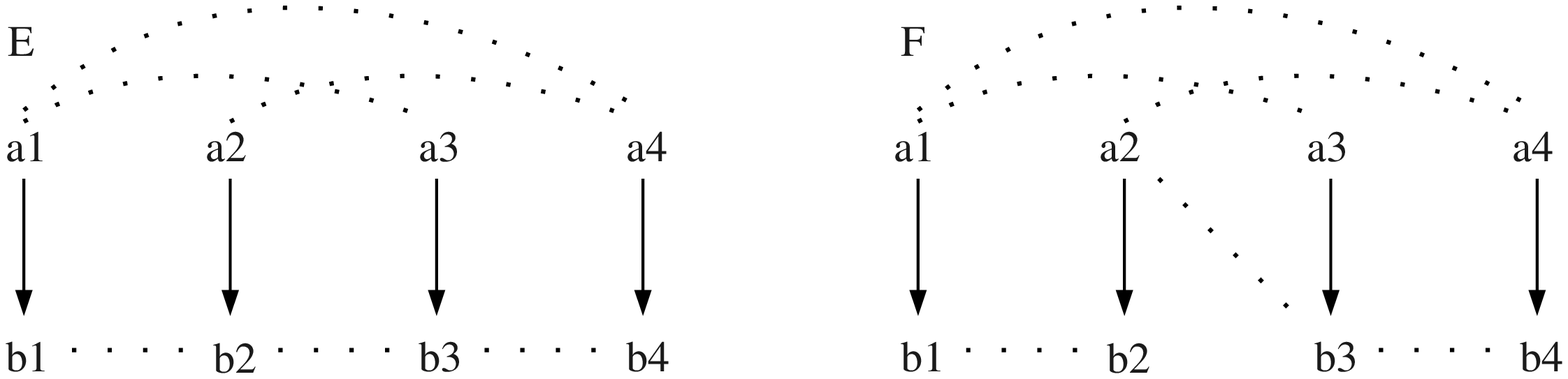}
\caption{Example~\ref{ex:hwh-not-h}.}\label{fig:hwh-not-h}
\end{figure}
To see that $\mc E$ and $\mc F$ are not H-equivalent,
consider $\emptyset \tran {a_2} \tran {a_3} \{a_2,a_3\}$ in $\mc F$.
This must be matched by moving to configuration $\{a_i,a_{i+1}\}$ in $\mc E$, 
where
$i\in\{1,2,3\}$.
But then both $b_i$ and $b_{i+1}$ are possible.
However $\{a_2,a_3\}$ in $\mc F$ can only do $b_2$.
Hence one of the $b_i$ and $b_{i+1}$ in $\mc E$ cannot be matched to $b_2$
in such way that the resulting isomorphism contains the already established 
pairs (either $(a_2,a_i),(a_3,a_{i+1})$ or $(a_2,a_{i+1}),(a_3,a_{i})$)
and is history-preserving.
\end{example}

\begin{example}\label{ex:absorption}
The Absorption Law \cite{BC87,Bed91,vGG01}
\[
(a \Par (b + c)) + (a \Par b) + ((a + c) \Par b) =
(a \Par (b + c)) + ((a + c) \Par b)
\]
holds for $\eqb{h}\,$, and thus for $\eqb{wh}\,$, but not for $\eqb{hwh}\,$.
\end{example}

\section{Event Identifier Logic}\label{sec:EIL}

We now introduce our logic, which we call Event Identifier Logic ($\EIL{}$).
We assume an infinite set of identifiers $\Id$, ranged over by $x,y,z,\ldots$.
The syntax of $\EIL{}$ is as follows:
\[
\phi \bnfeq \ttrue  \bsep \neg\phi \bsep \phi \aand \phi' \bsep \fd{x:a}\phi \bsep \dec{x:a}\phi \bsep \rd{x}\phi
\]
We include the usual operators of propositional logic: truth $\ttrue$, negation $\neg\phi$ and conjunction $\phi \aand \phi'$.
We then have {\em forward diamond} $\fd {x:a} \phi$, which says that it is possible to perform an event labelled with $a$ and reach a new configuration where $\phi$ holds.
In the formula $\fd {x:a} \phi$, the modality $\fd {x:a}$ binds all free occurrences of $x$ in $\phi$.
Next we have {\em declaration} $\dec {x:a} \phi$.
This says that there is some event with label $a$ in the current configuration which can be bound to $x$, in such a way that $\phi$ holds.
Here the declaration $\dec {x:a}$ binds all free occurrences of $x$ in $\phi$.
Finally we have {\em reverse diamond} $\rd x \phi$.
This says that it is possible to perform the reverse event bound to identifier $x$, and reach a configuration where $\phi$ holds.
Note that $\rd x$ does not bind $x$.
Clearly any occurrences of $x$ that get bound by $\dec{x:a}$ must be of the form $\rd x$.
We allow alpha-conversion of bound names.
We use $\phi,\psi,\ldots$ to range over formulas of $\EIL{}$.
\begin{example}
The formula $\fd{x:a} \fd{y:a} \rd x \ttrue$ says that there are events with label $a$,
say $e_1$ and $e_2$,  
that can be bound to $x$ and $y$ such that, after performing $e_1$ and then $e_2$, we can reverse $e_1$. Obviously, after performing $e_1$ followed by $e_2$, we can always reverse $e_2$. This formula could be interpreted as saying that an event bound to $x$ 
is \emph{concurrent} with an
event bound to $y$. Next, consider $\fd{x:a} \fd{y:a} \neg \rd x \ttrue$. The formula
expresses that an event bound to $x$ \emph{causes} an event bound to $y$ (because 
if we could reverse $x$ before $y$, we would reach a configuration containing $y$ and
not $x$, which contradicts $x$ being a cause of $y$).

\end{example}

\begin{definition}\label{def:fi}
We define $\Fi\phi$, the set of free identifiers of $\phi$, by induction on formulas:.
\[
\begin{array}{lll}
\Fi{\ttrue} = \emptyset
& \Fi{\phi_1 \aand \phi_2} = \Fi{\phi_1} \union \Fi{\phi_2}
& \Fi{\dec{x:a}\phi} = \Fi\phi \setminus \{x\}
\\
\Fi{\neg\phi} = \Fi\phi
& \Fi{\fd{x:a}\phi} = \Fi\phi \setminus \{x\}
& \Fi{\rd x \phi} = \Fi\phi \union \{x\}
\\
\end{array}
\]
We say that $\phi$ is {\em closed} if $\Fi\phi = \emptyset$; otherwise $\phi$ is {\em open}.
\end{definition}
In order to assign meaning to open formulas, as usual we employ environments which tell us what events the free identifiers are bound to.
\begin{definition}\label{def:envt}
An {\em environment} $\rho$ is a partial mapping from $\Id$ to events.
We say that {\em $\rho$ is a permissible environment for $\phi$ and $X$} if
$\Fi\phi \subseteq \dom{\rho}$ and 
$\rge{\rho \res \Fi\phi} \subseteq X$.
%We shall use $\rho_\phi$ as an abbreviation for $\rho\res\Fi\phi$,
%so that the latter condition can be written as $\rge{\rho_\phi} \subseteq X$.

We let $\emptyset$ denote the empty environment.
We let $\rho[x \mapsto e]$ denote the environment $\rho'$ which agrees with $\rho$ except possibly on $x$,
where $\rho'(x) = e$ (and $\rho(x)$ may or may not be defined).
We abbreviate $\emptyset[x \mapsto e]$ by $[x \mapsto e]$.
We let $\rho \setminus x$ denote $\rho$ with the assignment to $x$ deleted
(if defined in $\rho$).
\end{definition}
Now we can formally define the semantics of $\EIL{}$:
\begin{definition}\label{def:sat}
Let $\mc C$ be a stable configuration structure.
We define a satisfaction relation $\mc C, X, \rho \models \phi$
where $X$ is a configuration of $\mc C$,
and $\rho$ is a permissible environment for $\phi$ and $X$,
by induction on formulas as follows
(we suppress the $\mc C$ where it is clear from the context):
\begin{itemize}
\item
$X,\rho \models \ttrue$ always
\item
$X,\rho \models \neg \phi$ iff $X,\rho \not\models \phi$
\item
$X,\rho \models \phi_1 \aand \phi_2$ iff
$X,\rho \models \phi_1$
and $X,\rho \models \phi_2$
\item
$X,\rho \models \fd{x:a}\phi$ iff $\exists X',e$ such that
$X \tranc e X'$ with $\lab(e) = a$ and
$X',\rho[x \mapsto e] \models \phi$
\item
$X,\rho \models \dec{x:a}\phi$ iff $\exists e \in X$ such that
$\lab(e) = a$ and
$X,\rho[x \mapsto e] \models \phi$
% and if $x \in \Fi\phi$ then $\rho(x) = e$
\item
$X,\rho \models \rd x \phi$ iff $\exists X',e$ such that
$X \rtranc e X'$ with $\rho(x) = e$
and $X',\rho \models \phi$
(and $\rho$ is a permissible environment for $\phi$ and $X'$)
\end{itemize}
For closed $\phi$ we further define $\mc C, X \models \phi$ iff $\mc C, X, \emptyset \models \phi$,
and $\mc C \models \phi$ iff $\mc C, \emptyset \models \phi$.
\end{definition}
In the case of $\rd x \phi$, note that even though according to the syntax $x$ is allowed to occur free in $\phi$, if $x$ does occur free in $\phi$ then $X,\rho \models \rd x \phi$ can never hold: if $\rho(x) = e$ and $X \rtranc e X'$ then $X',\rho \models \phi$ cannot hold, since $\rho$ is not a permissible environment for $\phi$ and $X'$, as $\rho$ assigns a free identifier of $\phi$ to an event outside $X'$.

\begin{example}
Consider the configuration structure from Example \ref{simple_pes}. The empty configuration
satisfies $\fd{x:a} \fd{y:a} \rd x \ttrue$: we have $\emptyset,\emptyset
\models \fd{x:a} \fd{y:a} \rd x \ttrue$ since $\{e_1,e_3\}, [x\mapsto e_1,y\mapsto e_3]
\models \rd x \ttrue$; the latter holds because $\{e_1,e_3\}\rtran{e_1}\{e_3\}$
and $\rho(x)=e_1$. Also, $\emptyset,\emptyset\models \fd{x:a} \fd{y:a} \neg \rd x \ttrue$. We have
$\emptyset,\emptyset \models \fd{x:a} \fd{y:a} \neg \rd x \ttrue$ since $\{e_1,e_2\}, 
[x\mapsto e_1,y\mapsto e_2] \models \neg \rd x \ttrue$. This is because $\{e_1,e_2\} 
\not\rtran{e_1} \{e_2\}$ as $\{e_2\}$ is not a configuration.

The closed formula $\dec{x:a}\ttrue$ says that there is some event labelled with $a$ in 
the current configuration: $X \models \dec{x:a}\ttrue$ iff $\exists e \in X.\ \lab(e) = a$.
Returning to Example \ref{simple_pes}, note that as well as $\{e_1,e_2\},
[x\mapsto e_1,y\mapsto e_2] \models \neg \rd x \ttrue$ this also holds: 
$\{e_1,e_2\}, [x\mapsto e_1,y\mapsto e_2] \models \dec{x:a} \rd x \ttrue$. By the definition 
of $\dec{x:a}$, the current environment is updated to $[x\mapsto e_2,y\mapsto e_2]$ and
 we obtain $\{e_1,e_2\}, [x\mapsto e_2,y\mapsto e_2] \models \rd x \ttrue$. Correspondingly,
$\{e_1,e_2\}, [x\mapsto e_1,y\mapsto e_2] \models \dec{x:a} \rd x \dec{y:a} \rd y \ttrue$.
However, $\{e_1,e_2\}, [x\mapsto e_1,y\mapsto e_2] \not\models \dec{x:a} \rd x \rd y \ttrue$
since $\{e_1\}, [x\mapsto e_2,y\mapsto e_2] \not\models \rd y \ttrue$.

\end{example}
We introduce further operators as derived operators of $\EIL{}$:
\begin{notation}[Derived operators]\label{not:abbrev}
Let $A = \{a_1,\ldots,a_n\}$ be a multiset of labels.
\begin{itemize}
\item
$\ffalse \Defeq \neg\ttrue$, \quad 
$\fb {x:a} \phi \Defeq \neg \fd{x:a}\neg\phi$, \quad
$\phi_1 \oor \phi_2 \Defeq \neg(\neg \phi_1 \aand \neg \phi_2)$
\item
Forward step
$\fd A \phi \Defeq
\fd{x_1:a_1}\cdots \fd{x_n:a_n}(\phi \, \aand \, \Aand_{i=1}^{n-1} \rd {x_i} \ttrue)$ 
where $x_1,\ldots,x_n$ are fresh and distinct (and in particular are not free in $\phi$).
We write $\fd {a_1,\ldots,a_n}\phi$ instead of $\fd {\{a_1,\ldots,a_n\}}\phi$.
In the case $n=1$ we have $\fd a \phi \Defeq \fd{x:a}\phi$ where $x$ is fresh.
\item
Reverse step
$\rd A \phi \Defeq
(x_1:a_1)\cdots (x_n:a_n) (\rd{x_1}\cdots\rd{x_n}\phi \, \aand \, \Aand_{i=2}^{n} \rd {x_i} \ttrue)$
where $x_1,\ldots,x_n$ are fresh and distinct (and in particular are not free in $\phi$).
We write $\rd {a_1,\ldots,a_n}\phi$ instead of $\rd {\{a_1,\ldots,a_n\}}\phi$.
In the case $n=1$ we have $\rd a \phi \Defeq \dec{x:a} \rd x \phi$ where $x$ is fresh.
\end{itemize}
\end{notation}

\begin{example}
Consider $\mc E$, $\mc F$ in Figure~\ref{fig:hwh-not-h} and $\phi\equiv
\fb{x:a} \fb{y:a} (\fd{z:b} \neg \rd{x}\ttrue \aand \fd{w:b}\neg \rd{y}\ttrue)$. 
We easily check
that $\mc E$ satisfies $\phi$ and $\mc F$ does not.
Next, consider $\psi\equiv \fd{x:a} (\fb{w:c}\ffalse \aand \fd{y:b}\rd{x}\fb{z:c}\ffalse)$.
Then the LHS structure of the Absorption Law in Example~\ref{ex:absorption} satisfies
$\psi$ and the RHS does not. Strictly speaking, event identifiers are not necessary
to distinguish the two pairs of configuration structures. A formula with simple
label modalities $ \fd{a} (\fb{c}\ffalse \aand \fd{b}\rd{a}\fb{c}\ffalse)$ is sufficient
for the the Absorption Law, and $\mc E$, $\mc F$ in Figure~\ref{fig:hwh-not-h} can be
distinguished by a logic with pomset modalities (both reverse and forward) defined over
runs \cite{Che92,PLS94}.
\end{example}

\begin{example}\label{ex:h-not-ri-h-pes}
Consider $\mc E$, $\mc F$ in Figure~\ref{fig:h-not-ri-h-pes}. There is a non-binary
conflict among the three initial $a$-events (indicated by a dashed ellipsis) defined by
requiring that at most two of these events can appear in any configuration. 
$\mc E$ and $\mc F$ are H equivalent:  we define a bisimulation by relating configurations
of identically labelled events (including where $a_4$ is matched with $a_4'$) and check
that it is an H. The structures are also HWH equivalent. This time we define a bisimulation
between order isomorphic configurations (of which there only five isomorphism
classes: $\emptyset$, $\{a\}$,
$\{a,a\}$, $\{a<a\}$ and $\{a<a,a\}$, where events separated by commas are concurrent)
and check that it is an HWH. However, $\mc E$ and $\mc F$ are not HH equivalent and event
identifiers are indeed necessary to distinguish them. The formula
$\fd{x:a}\fd{y:a}(\neg \rd{x}\ttrue \aand \fd{z:a}\rd{y}\fd{w:a}\neg \rd{z}\ttrue \aand 
\fd{z':a}\rd{y}\neg \fd{w':a}\neg \rd{z'}\ttrue)$ is only satisfied by $\mc E$.  
It requires that $x$ causes $y$ and that $z$ and $z'$ are bound to different
events because $\fd{z:a}$ and $\fd{z':a}$ are followed by mutually
contradictory behaviours. 
This is possible in $\mc E$ ($a_1,a_4$ can be followed by either $a_3$ or $a_2$) 
but not in $\mc F$: none of the pairs of causally dependent events offers 
two different $a$-events.
%To see that $\mc E$ satisfies the formula, simply consider $x,y$ bound to 
%$a_1,a_4$, $z,w$ bound to $a_3,a_6$, and both $z',w'$  bound to $a_2$. 
\begin{figure}
\centering
\psfrag{E}{$\mc E$}
\psfrag{F}{$\mc F$}
\psfrag{a1}{$a_1$}
\psfrag{a2}{$a_2$}
\psfrag{a3}{$a_3$}
\psfrag{a4}{$a_4$}
\psfrag{a5}{$a_5$}
\psfrag{a6}{$a_6$}
\psfrag{a7}{$a_4'$}
\includegraphics[width=4in]{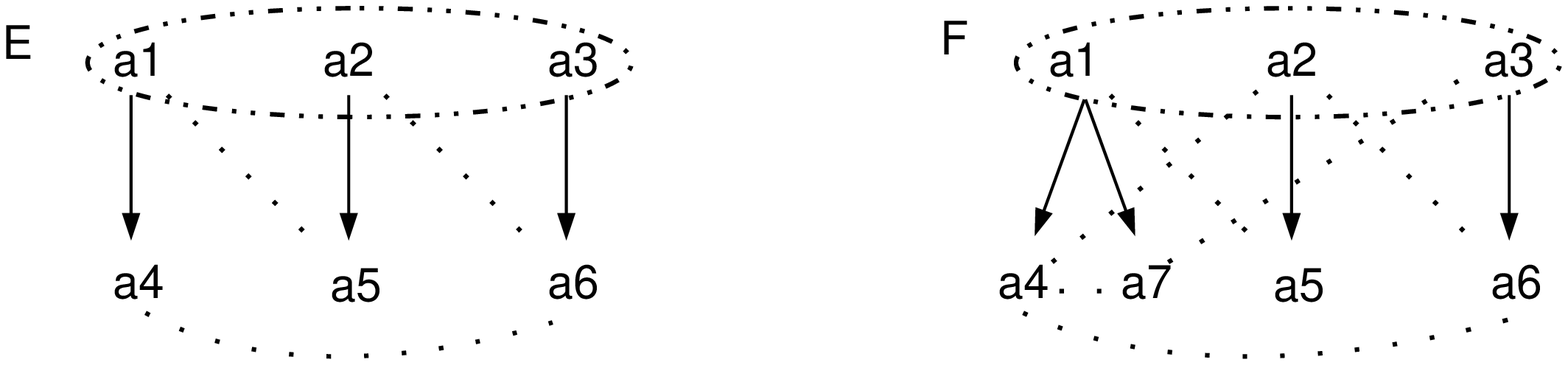}
\caption{Example~\ref{ex:h-not-ri-h-pes}.}\label{fig:h-not-ri-h-pes}
\end{figure}

\end{example}

\Comment{
\begin{lemma}\label{lem:env agree}
Let $X$ be a configuration of a stable configuration structure $\mc C$,
and let $\phi \in \EIL{}$.
Suppose that $\rho,\rho'$ are permissible environments for $\phi$ and $X$, which agree on $\Fi\phi$.
Then $X,\rho \models \phi$ iff $X,\rho' \models \phi$.
\end{lemma}
\begin{proof}
See Appendix~\ref{sec:proofs EIL}.
\end{proof}
Probably use Lemma~\ref{lem:env fresh} instead of the following:
\begin{lemma}\label{lem:new env fresh}
Let $X$ be a configuration of a stable configuration structure $\mc C$.
Let $\{z_e:e \in X\}$ be a set of distinct fresh identifiers,
and let $\rho_X$ be defined by $\rho_X(z_e) = e$.
Then for any $\phi \in \EIL{}$ and any $\rho$ a permissible environment for $\phi$ and $X$,
we have $X,\rho \models \phi$ iff $X,\rho_X \models \hat\phi$,
where $\hat\phi$ is got from $\phi$ by replacing each occurrence of a free identifier $x$ by $z_{\rho(x)}$.
\end{lemma}
Note that in Lemma~\ref{lem:new env fresh} the environment $\rho_X$ is injective,
but $\rho$ need not be injective, so that distinct free identifiers in $\phi$ may be mapped to the same identifier in $\hat\phi$.
}

\section{Using EIL to characterise equivalences}\label{sec:EIL eqnces}

We wish to show that $\EIL{}$ and its various sublogics characterise the equivalences defined in Section~\ref{subsec:equivalences}.
Each sublogic of $\EIL{}$ induces an equivalence on configuration structures in a standard 
fashion:
\begin{definition}
Let $L$ be any sublogic of $\EIL{}$.
Then $L$ induces an equivalence on stable configuration structures as follows:
$\mc C \eql{L} \mc D$ iff
for all closed $\phi \in L$ we have $\mc C \models \phi$ iff $\mc D \models \phi$.
\end{definition}
First we introduce a simple sublogic that allows us to characterise order isomorphism.
\subsection{Reverse-only logic and order isomorphism}\label{subsec:EILro}

We define sublogics of $\EIL{}$, consisting of formulas where only reverse transitions are allowed.
\begin{definition}\label{def:EILro}
{\em Reverse-only} logic $\EIL{ro}$:
\[
\phi \bnfeq \ttrue \bsep \neg \phi \bsep \phi \aand \phi' \bsep \dec{x:a}\phi \bsep \rd{x} \phi 
\]
We further define {\em declaration-free} reverse-only logic $\EIL{dfro}$:
\[
\phi \bnfeq \ttrue \bsep \neg \phi \bsep \phi \aand \phi' \bsep \rd{x} \phi 
\]
\end{definition}
These logics are preserved between isomorphic configurations,
and characterise configurations up to isomorphism.
\begin{lemma}\label{lem:ro isom}
Let $\mc C,\mc D$ be stable configuration structures,
and let $X,Y$ be configurations of $\mc C, \mc D$ respectively.
Suppose that $f: X \isom Y$.
Then for any $\phi \in \EIL{ro}$, and any $\rho$ (permissible environment for $\phi$ and $X$),
we have $X, \rho \models \phi$ iff $Y, f \circ \rho_\phi \models \phi$.
\end{lemma}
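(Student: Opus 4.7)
The plan is to prove the lemma by structural induction on $\phi \in \EIL{ro}$, reading $\rho_\phi$ as the restriction of $\rho$ to $\Fi\phi$ (so that $f \circ \rho_\phi$ is well-defined, since permissibility of $\rho$ for $\phi,X$ ensures $\rho_\phi$ takes values in $X$, the domain of $f$). The base case $\phi = \ttrue$ is trivial. The cases $\phi = \neg\psi$ and $\phi = \psi_1 \aand \psi_2$ are immediate from the induction hypothesis, noting that $\Fi{\neg\psi} = \Fi\psi$ and $\Fi{\psi_1 \aand \psi_2} = \Fi{\psi_1} \union \Fi{\psi_2}$, and that $f \circ \rho_\phi$ agrees with $f \circ \rho_{\psi_i}$ on $\Fi{\psi_i}$; this is enough by an auxiliary lemma (or a direct observation from the definition of $\models$) that only the values of an environment on free identifiers matter.

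For the declaration case $\phi = \dec{x:a}\psi$, observe that $X,\rho \models \dec{x:a}\psi$ iff there exists $e \in X$ with $\lab_{\mc C}(e) = a$ and $X,\rho[x \mapsto e] \models \psi$. Since $f:X \isom Y$ is a label-preserving bijection, $e$ ranges over $X$ with label $a$ iff $f(e)$ ranges over $Y$ with label $a$. Applying the induction hypothesis to $\psi$, using the environment $\rho[x \mapsto e]$ on the $\mc C$-side and the environment $(f \circ \rho_\phi)[x \mapsto f(e)]$ on the $\mc D$-side, and checking that these agree with $f \circ (\rho[x \mapsto e])_\psi$ on $\Fi\psi$, yields the equivalence.

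The main case, and the main obstacle, is $\phi = \rd x \psi$. Here we need that reversing an event $e$ on the $\mc C$-side corresponds to reversing $f(e)$ on the $\mc D$-side. By Definition~\ref{def:sat}, $X,\rho \models \rd x \psi$ requires $\rho(x) = e$ with $X \rtranc e X'$, and hence $e$ is maximal in $X$ with respect to $<_X$. Because $f$ is an order isomorphism, $f(e)$ is maximal in $Y$ with respect to $<_Y$, and since sub-configurations of a stable configuration are exactly its left-closed subsets with respect to the induced order, $Y \setminus \{f(e)\}$ is a configuration $Y'$ of $\mc D$, so $Y \rtrand{f(e)} Y'$. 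Furthermore $f \res X' : X' \isom Y'$ (using ${<_{X'}} = {<_X \res X'}$ and analogously for $Y'$). The semantics also demands that $\rho$ remain permissible for $\psi$ and $X'$; this forces $\rho(\Fi\psi) \subseteq X'$, and then $(f \res X') \circ \rho_\psi = (f \circ \rho_\phi) \res \Fi\psi$, so the induction hypothesis applied to $\psi$, the isomorphism $f \res X'$, and the environment $\rho$ delivers $X',\rho \models \psi$ iff $Y', (f \res X') \circ \rho_\psi \models \psi$. Symmetry in $f$ and $f^{-1}$ gives the converse direction, completing the induction.

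The delicate point throughout is environment bookkeeping: showing that permissibility transfers across $f$ and that only the values on free identifiers matter. This can be handled either by stating a short auxiliary lemma to that effect (mirroring the commented-out \texttt{env agree} lemma in the source) or by carrying the explicit set of free identifiers at each inductive step. Apart from this, no new ideas are needed beyond the order-theoretic fact that maximality of events, and hence reversibility, is preserved by configuration isomorphisms on stable structures.
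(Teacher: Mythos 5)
Your proposal is correct and follows essentially the same route as the paper's own proof: structural induction on $\phi$, with the propositional cases discharged via an auxiliary ``environments agreeing on free identifiers'' lemma, the declaration case via label-preservation of $f$, and the reverse-diamond case via the observation that $f$ maps the reversed (maximal) event to a maximal event of $Y$ and restricts to an isomorphism $X' \isom Y'$. Your explicit justification that $Y \setminus \{f(e)\}$ is a configuration (via left-closedness) and your handling of the permissibility side-condition are if anything slightly more detailed than the paper's, but no new ideas are involved.
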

%See Appendix~\ref{sec:proofs EILro} for the proof.
Recall that $\rho_\phi$ is an abbreviation for $\rho\res\Fi\phi$.
Function composition is in applicative rather than diagrammatic order.

Given any configuration $X$ we can create a closed formula $\theta_X \in \EIL{ro}$ which gives the order structure of $X$.
We make this precise in the following lemma:
\begin{lemma}\label{lem:ro X}
Let $X$ be a configuration of a stable configuration structure $\mc C$.
There is a {\em closed} formula $\theta_X \in \EIL{ro}$,
such that
if $Y$ is any configuration of a stable configuration structure $\mc D$
and $\card Y = \card X$, then $Y \isom X$ iff 
$Y \models \theta_X$.
\end{lemma}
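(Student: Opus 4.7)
My plan is to construct $\theta_X$ by naming every event of $X$ via a sequence of declarations and then recording the entire partial order $<_X$ through which linear orders of reversals are (and are not) possible. Enumerate $X=\{e_1,\dots,e_n\}$ and set $a_i=\lab(e_i)$. Pick fresh distinct identifiers $x_1,\dots,x_n$ and define
\[
\theta_X \Defeq \dec{x_1:a_1}\dec{x_2:a_2}\cdots\dec{x_n:a_n}\Bigl(\Aand_{\pi\in S_n} \Psi_\pi\Bigr),
\]
where, for each permutation $\pi$ of $\{1,\dots,n\}$,
\[
\Psi_\pi \Defeq \rd{x_{\pi(1)}}\rd{x_{\pi(2)}}\cdots\rd{x_{\pi(n)}}\ttrue
\]
when $(e_{\pi(1)},\dots,e_{\pi(n)})$ is a valid reverse execution of $X$ (equivalently, its reverse is a linear extension of $<_X$), and $\Psi_\pi\Defeq\neg\rd{x_{\pi(1)}}\cdots\rd{x_{\pi(n)}}\ttrue$ otherwise. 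Note that $\theta_X\in\EIL{ro}$ is closed.

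For the ``only if'' direction I would first verify $X\models\theta_X$ by choosing the canonical assignment $\rho(x_i)=e_i$: for each valid $\pi$ the event $e_{\pi(1)}$ is maximal in $X$, $e_{\pi(2)}$ is maximal in $X\setminus\{e_{\pi(1)}\}$, and so on, so $\Psi_\pi$ holds, while for an invalid $\pi$ some intermediate $e_{\pi(k)}$ fails to be maximal and the reversal sequence breaks. Then, if $f:X\isom Y$, Lemma~\ref{lem:ro isom} applied to the closed formula $\theta_X$ (so that $\rho_{\theta_X}=\emptyset$ and $f\circ\emptyset=\emptyset$) immediately gives $Y\models\theta_X$.

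For the converse, suppose $|Y|=|X|=n$ and $Y\models\theta_X$. Unwinding the declarations yields events $e'_1,\dots,e'_n\in Y$ with $\lab(e'_i)=a_i$ such that $Y,[x_i\mapsto e'_i]\models\Aand_\pi\Psi_\pi$. Since $<_X$ admits at least one (reverse) linear extension, at least one positive $\Psi_\pi$ is asserted; its satisfaction requires a full-length reverse sequence, and the permissibility clause in the semantics of $\rd{x}\phi$ forces the $e'_i$ to be pairwise distinct — otherwise after some reverse step a still-free identifier would point outside the current configuration, violating permissibility. As $|Y|=n$, we get $\{e'_1,\dots,e'_n\}=Y$, so $f:X\to Y,\ e_i\mapsto e'_i$ is a label-preserving bijection. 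To see that $f$ is an order isomorphism, I would use the fact that a finite partial order is determined by its set of linear extensions: by construction of $\theta_X$, the permutations $\pi$ for which $(e_{\pi(1)},\dots,e_{\pi(n)})$ is a reverse execution of $X$ are exactly those for which $(e'_{\pi(1)},\dots,e'_{\pi(n)})$ is a reverse execution of $Y$, and these sets of permutations determine $<_X$ and $<_Y$ respectively. Hence $f$ transports $<_X$ to $<_Y$, giving $Y\isom X$.

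The only delicate step is the distinctness/permissibility argument: one must be careful that the formal bookkeeping of environments in the semantics of $\rd{x}$ really does rule out repeated assignments, and that the ``linear extensions determine the poset'' observation is applied under the correct direction (reverse executions correspond to linear extensions of the dual order). Everything else is routine, and no forward modalities are needed since $\dec{x:a}$ already selects events from the current configuration $Y$.
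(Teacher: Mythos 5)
Your proof is correct, but its core diverges from the paper's. Both constructions share the same skeleton: a prefix of declarations $\dec{z_1:a_1}\cdots\dec{z_n:a_n}$ naming the events of $X$, a full reversal sequence that forces the declared events to be pairwise distinct via the permissibility side-condition on $\rd{x}$ (your positive $\Psi_\pi$'s play exactly the role of the paper's conjunct $\theta_X^0 = \rd{z_n}\cdots\rd{z_1}\ttrue$), and an appeal to Lemma~\ref{lem:ro isom} for the easy direction. Where you differ is in how $<_X$ is pinned down. The paper fixes one linear extension $e_1,\ldots,e_n$ of $<_X$ and, for each $k$, writes a single conjunct $\theta_X^k$ that reverses down to the downward closure of $e_k$ and then asserts $\Aand_{e_j <_X e_k}\neg\rd{z_j}\ttrue$, thereby recording the strict lower set of each event directly; this gives $n+1$ conjuncts, each of modal depth at most $n$. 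You instead record, for every one of the $n!$ permutations, whether it is a valid reverse execution, and recover the order from the standard fact that a finite poset is determined by its set of linear extensions (reverse executions being exactly reversed linear extensions, by the left-closedness characterisation of sub-configurations). Your route is conceptually cleaner --- one uniform order-theoretic fact replaces the paper's somewhat fiddly ``reverse as many as possible'' bookkeeping --- and the delicate points you flag (distinctness via permissibility, which also guarantees that the negated $\Psi_\pi$'s can only fail for order-theoretic reasons; and the dual-order direction) are handled correctly, with the existence of at least one positive $\Psi_\pi$ secured by connectedness. The price is an exponentially larger formula ($n!$ conjuncts versus $O(n)$), though the modal depth is the same ($\leq\card X$), so the depth bounds the paper later needs for characteristic formulas would be unaffected.
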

\Comment{
\begin{proof}
Let $\card X = n$ and let the events of $X$ be enumerated as $\{e_1,\ldots,e_n \}$
in such a way that if $e_i <_X e_j$ then $i < j$.
This is always possible.
Let $\lab(e_i) = a_i$ ($i = 1,\ldots,n$).
Let $z_1,\ldots,z_n$ be distinct identifiers.
Let $\rho_X$ be the environment which maps $z_i$ to $e_i$ ($i = 1,\ldots,n$).
For each $k = 1,\ldots,n$, let $\theta_X^k$ be the formula
got by (1) reversing $z_n,\ldots,z_{k+1}$, (2)
not reversing $z_k$, then (3) reversing as many as possible of $z_{k-1},\ldots,z_1$,
starting with $z_{k-1}$ and working down to $z_1$ --- call these
$z_{i_1},\ldots,z_{i_{r_k}}$, and 
finally (4) stating that it is impossible to reverse the remaining members
of $z_{k-1},\ldots,z_1$ --- these are precisely those $j$ such that $e_j <_X e_k$,
as is not hard to see.
Thus
\[
\theta_X^k \Defeq \rd{z_n}\cdots\rd{z_{k+1}}\rd{z_{i_1}}\cdots\rd{z_{i_{r_k}}}
\Aand_{e_j <_X e_k} \neg \rd{z_j}\ttrue
\]
Furthermore let \(
\theta_X^0 \Defeq \rd{z_n}\rd{z_{n-1}}\cdots\rd{z_1}\ttrue
\).
Now let $\theta'_X \Defeq \Aand_{k = 0}^n \theta_X^k$.
Plainly $\theta'_X \in \EIL{dfro}$ and $X,\rho_X \models \theta'_X$.
Finally let $\theta_X \Defeq \dec{z_1:a_1}\cdots\dec{z_n:a_n}\theta'_X$.
Clearly $\theta_X \in \EIL{ro}$ and $X \models \theta_X$.
Note that if $n = 0$ then $\theta_X = \ttrue$.

Now suppose that $\card Y = \card X$, and $Y \models \theta_X$.
Then there is $\rho$ such that $Y,\rho \models \theta'_X$.
We know that $\rho$ assigns the $n$ identifiers of $\theta'_X$ to different events of $Y$ (since $Y,\rho \models \theta_X^0$),
and so $\rho$ is onto $Y$.  Let $e'_i = \rho(z_i)$ ($i = 1,\ldots,n$).
Then $\lab(e'_i) = a_i = \lab(e_i)$.
Take any $k \leq n$.
Since $Y,\rho \models \theta_X^k$ we know that for all $j$,
$e'_j <_Y e'_k$ iff $e_j <_X e_k$.
Thus $Y \isom X$ via isomorphism $f(e_i) = e'_i$.

Conversely, suppose that $Y \isom X$ via isomorphism $f:X \to Y$.
Since $X \models \theta_X$ we have $Y \models \theta_X$ by Lemma~\ref{lem:ro isom}.
\end{proof}

\begin{remark}
We can remove the condition $\card Y = \card X$ in %the statement of 
Lemma~\ref{lem:ro X} if we have a formula $\zeta$ which holds precisely in empty 
configurations.  We can then amend $\theta_X$ by redefining $\theta_X^0$ to be
\(
\rd{z_n}\rd{z_{n-1}}\cdots\rd{z_1}\zeta
\).
If the set $\Act$ of labels is finite we can set
$\zeta \Defeq \Aand_{a \in \Act} \neg \dec{x:a}\ttrue$.
\end{remark}
}
The next lemma follows fairly immediately from the proof of Lemma~\ref{lem:ro X} and from Lemma~\ref{lem:ro isom}:
\begin{lemma}\label{lem:dfro X}
Let $X$ be a configuration of a stable configuration structure $\mc C$.
Let $\{z_e :e \in X\}$ be distinct identifiers.
Let the environment $\rho_X$ be defined by $\rho_X(z_e) = e$ ($e \in X$).
There is a formula $\theta'_X \in \EIL{dfro}$ with $\Fi{\theta'} = \{z_e :e \in X\}$,
such that $X,\rho_X \models \theta'_X$ and
if $Y$ is any configuration of a stable configuration structure $\mc D$
and $\card Y = \card X$, then $Y \isom X$ iff 
$\exists \rho.\;Y, \rho \models \theta'_X$.
\end{lemma}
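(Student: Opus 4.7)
The plan is to reuse the construction already sketched in the (commented) proof of Lemma~\ref{lem:ro X}: the formula produced there is built in two stages, first an intermediate declaration-free formula in $\EIL{dfro}$ and then an outer wrapping of declarations $\dec{z_1:a_1}\cdots\dec{z_n:a_n}$ purely to close it. For this lemma we take exactly that intermediate formula as $\theta'_X$.

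Concretely, first enumerate the events of $X$ as $e_1,\ldots,e_n$ using a linear extension of $<_X$ (so $e_i<_X e_j$ forces $i<j$), set $a_i=\lab(e_i)$, and identify $z_i$ with $z_{e_i}$. For each $k\in\{1,\ldots,n\}$ define $\theta_X^k$ as the prefix $\rd{z_n}\cdots\rd{z_{k+1}}$ (reversing events that are after $e_k$ in the enumeration), followed by reversing, greedily from $z_{k-1}$ down to $z_1$, those $z_{i_1},\ldots,z_{i_{r_k}}$ for which $e_{i_s}\not\leq_X e_k$ (these are precisely the ones that remain maximal in $X$ as the greedy peeling proceeds), and ending with $\Aand_{e_j<_X e_k}\neg\rd{z_j}\ttrue$. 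Also let $\theta_X^0\Defeq \rd{z_n}\cdots\rd{z_1}\ttrue$, and set $\theta'_X\Defeq \Aand_{k=0}^{n}\theta_X^k$. By construction $\theta'_X\in\EIL{dfro}$, $\Fi{\theta'_X}=\{z_e:e\in X\}$ (the only binders of identifiers in $\EIL{dfro}$ are declarations, which do not appear), and $X,\rho_X\models\theta'_X$ since each reversal inside a $\theta_X^k$ succeeds in $X$ and each conjunct $\neg\rd{z_j}\ttrue$ is witnessed by $e_k$ sitting above $e_j$.

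For the characterisation, one direction is immediate from Lemma~\ref{lem:ro isom}: if $f:X\isom Y$ then $Y, f\circ\rho_X\models\theta'_X$, using $\EIL{dfro}\subseteq\EIL{ro}$. For the other direction, assume $|Y|=|X|=n$ and $Y,\rho\models\theta'_X$. From $\theta_X^0$, $\rho$ is defined and injective on $z_1,\ldots,z_n$, so by the cardinality assumption it is a bijection onto $Y$; write $e'_i\Defeq\rho(z_i)$. Each $\theta_X^k$ forces every reversal in its prefix to succeed in $Y$, and an easy induction along the prefix shows that after executing it one reaches exactly $\{e'_j : j\leq k,\ e_j\leq_X e_k\}$. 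The final conjuncts then yield, for every $j$ with $e_j<_X e_k$, that $e'_j$ is non-maximal in this set, whence $e'_j<_Y e'_k$ (nothing else in the set can lie above $e'_j$, by another induction using the $\theta_X^{k'}$ for smaller $k'$). Conversely, if $e_j\not<_X e_k$ with $j<k$, then $j$ appears among the $i_s$ of $\theta_X^k$, so the reversal of $z_j$ is forced to succeed in $Y$; this rules out $e'_j<_Y e'_k$. Hence $e'_j<_Y e'_k\iff e_j<_X e_k$ and labels match by construction, so $i\mapsto e'_i$ is the required isomorphism $X\isom Y$.

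The main obstacle is the bookkeeping in the previous paragraph: one must argue simultaneously that (i) the greedy reversal sequence chosen using the order in $X$ still succeeds in $Y$ whenever $Y,\rho\models\theta_X^k$, and (ii) the blocked-reversal conjuncts $\neg\rd{z_j}\ttrue$ in $Y$ pin down exactly the strict causality in $Y$, not merely implications. This dovetails the successful reversals (witnessing concurrency in $Y$) with the failed ones (witnessing causality in $Y$), but the assertion is local to each $\theta_X^k$ and the global picture assembles by induction on $k$; nothing deeper than a careful unwinding of the semantics of $\rd{\cdot}$ is needed.
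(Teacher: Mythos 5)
Your construction is exactly the paper's: the same conjuncts $\theta_X^k$ (reverse the suffix $z_n,\ldots,z_{k+1}$, greedily reverse the non-causes of $e_k$, assert that the causes are blocked), the same use of $\theta_X^0$ to force injectivity and hence surjectivity of $\rho$, and the same appeal to Lemma~\ref{lem:ro isom} for the converse direction, so this matches the proof the paper obtains by stripping the outer declarations from Lemma~\ref{lem:ro X}. The one quibble is your parenthetical justification of $e'_j <_Y e'_k$: other elements of the residual set \emph{can} lie above $e'_j$ (namely $e'_i$ with $e_j <_X e_i <_X e_k$), but the conclusion still follows by combining the already-established implication $e_j \not<_X e_k \Rightarrow e'_j \not<_Y e'_k$ with transitivity of $<_Y$ and an induction on the interval between $e_j$ and $e_k$.
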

%\begin{proof}
%See Appendix~\ref{sec:proofs EILro}.
%\end{proof}

\subsection{Logics for history-preserving bisimulations}\label{subsec:hist}

We start by showing that $\EIL{}$ characterises HH-bisimulation.
We then present sublogics of $\EIL{}$ which correspond to H-bisimulation, WH-bisimulation 
and HWH-bisimulation. 
%See Appendix~\ref{sec:proofs hist} for the details of proofs
%of all results in this subsection.

Our first result is related to the result of~\cite{NC94a} that a logic with reverse event index modality (discussed above in Section~\ref{sec:related}) characterises HH.
\begin{theorem}\label{thm:HH=EIL}
Let $\mc C,\mc D$ be stable configuration structures.
Then, $\mc C \eqb{hh} \mc D$ if and only if $\mc C \eql{\EIL{}} \mc D$.
\end{theorem}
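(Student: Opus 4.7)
The plan is to prove the equivalence by standard Hennessy--Milner-style arguments, with care taken to track environments for free identifiers. In both directions I would actually establish the following strengthened biconditional: whenever $\mc R(X,Y,f)$ and $\rho$ is a permissible environment for $\phi$ and $X$, then $X,\rho \models \phi$ iff $Y, f\circ\rho_\phi \models \phi$.

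For the forward direction ($\eqb{hh} \Rightarrow \eql{\EIL{}}$), assume $\mc R$ is an HH-bisimulation and prove the strengthened claim by induction on $\phi$. The Boolean cases are immediate. For $\fd{x:a}\psi$: a transition $X\tranc{e} X'$ with $\lab(e)=a$ is matched by $Y\trand{e'}Y'$ with $\mc R(X',Y',f')$ and $f'\res X = f$; since $f'$ is a bijection $X'\to Y'$ extending $f$, necessarily $f'(e)=e'$, and the IH at the extended environment $\rho[x\mapsto e]$ closes the case. The $\dec{x:a}\psi$ case uses that $f$ preserves labels. For $\rd{x}\psi$ with $\rho(x)=e$ and $X\rtranc{e}X'$, the HH reverse-matching clause gives some $e''$ with $Y\rtrand{e''}Y'$ and resulting $f'$ satisfying $f\res X' = f'$; by bijectivity this forces $e''=f(e)$, so the particular reverse move on the $Y$-side targets $f(\rho(x))$ as required.

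For the reverse direction ($\eql{\EIL{}} \Rightarrow \eqb{hh}$), define
\[
\mc R = \{(X,Y,f) : f:X\to Y \text{ bijective and } X,\rho\models\phi \Leftrightarrow Y, f\circ\rho_\phi\models\phi \text{ for all } \phi,\rho\}.
\]
Then $\mc R(\emptyset,\emptyset,\emptyset)$ holds by $\mc C \eql{\EIL{}} \mc D$. Whenever $\mc R(X,Y,f)$, to see that $f$ is an order- and label-preserving isomorphism, choose fresh identifiers $\{z_e:e\in X\}$ with $\rho_X(z_e)=e$ and apply Lemma~\ref{lem:dfro X}: from $X,\rho_X\models\theta'_X$ we get $Y,f\circ\rho_X\models\theta'_X$, and the construction of $\theta'_X$ together with subformulas $\dec{z_e:a}\ttrue$ yields preservation of $<$ and of labels under $f$. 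For the forward-matching clause, given $X\tranc{e}X'$ with $\lab(e)=a$, image-finiteness of $\mc D$ gives finitely many candidates $Y'_1,\ldots,Y'_n$ for the match, each determining a unique $f'_i=f\cup\{(e,e'_i)\}$ with $\{e'_i\}=Y'_i\setminus Y$. If no $(Y'_i,f'_i)$ lies in $\mc R$, for each $i$ pick a distinguishing $\phi_i$ and (WLOG, by renaming identifiers) take the common environment $\rho_{X'}=\rho_X[z_e\mapsto e]$ so that $X',\rho_{X'}\models\phi_i$ and $Y'_i,f'_i\circ\rho_{X'}\not\models\phi_i$. Then $X,\rho_X\models \fd{z_e:a}\Aand_{i=1}^n \phi_i$, yet any witness on the $Y$-side would identify one of the $(Y'_j,f'_j)$ and force $Y'_j,f'_j\circ\rho_{X'}\models\phi_j$, a contradiction; this violates $\mc R(X,Y,f)$. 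The reverse-matching clause is easier: $e$ maximal in $X$ implies $f(e)$ maximal in $Y$ by the already-established isomorphism, so $Y\rtrand{f(e)}Y\setminus\{f(e)\}$ exists, and logical equivalence on the smaller configurations follows by transferring satisfaction of $\rd{z_e}$-formulas.

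The main obstacle is the forward-matching step of the reverse direction: one must produce a single EIL formula distinguishing $(X,Y,f)$ when finitely many candidate successors all fail. The key device is the uniform identifier alphabet $\{z_e:e\in X\}$ that freezes $f$ across all candidates, allowing the $\phi_i$ to be combined under a single $\fd{z_e:a}(\cdot)$ and ensuring that the contradiction draws from exactly the same matching pair $(Y'_j,f'_j)$ that any $Y$-side witness would determine; image-finiteness is essential for keeping $\Aand_i \phi_i$ finite.
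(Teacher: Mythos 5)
Your proposal is correct and follows essentially the same route as the paper's proof: the same strengthened induction hypothesis for ($\Rightarrow$), and for ($\Leftarrow$) the same candidate relation, the same appeal to image-finiteness, the same standardisation of environments via fresh identifiers $z_e$ so that the distinguishing formulas can be conjoined under a single $\fd{z_e:a}$, and the same use of $\theta'_{X'}$ to secure the order isomorphism. The one (easily repaired) slip is that $\dec{z_e:a}\ttrue$ only asserts that \emph{some} $a$-labelled event exists in the configuration and says nothing about the particular event $f(e)$, so label-preservation of $f$ needs a slightly more careful formula; the paper sidesteps this by putting the isomorphism condition into the definition of $\mc R$ and noting that matched transitions always carry the same label, so that only order-preservation can fail and $\theta'_{X'}$ suffices.
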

\Comment{
\begin{proof}[Proof sketch]
($\Rightarrow$)
Let $\mathcal{R}$ be an HH bisimulation between $\mc C$ and $\mc D$.
We show by induction on $\phi$ that for all $X,Y,f$,
if $\mathcal{R}(X,Y,f)$ then for all $\phi \in \EIL{}$
and all $\rho$ (permissible environment for $\phi$ and $X$) we have 
$X, \rho \models \phi$ iff $Y, f \circ \rho_\phi \models \phi$.
Recall that $\rho_\phi$ is an abbreviation for $\rho\res\Fi\phi$.
By considering initial (empty) configurations,
our induction hypothesis implies that $\mc C \eql{\EIL{}} \mc D$.

($\Leftarrow$)
Suppose that $\mc C \eql{\EIL{}} \mc D$.
Define $\mathcal{R}(X,Y,f)$ iff
\begin{itemize}
\item
$f$ is an order isomorphism between $X$ and $Y$
\item
for any $\phi \in \EIL{}$ and $\rho$ (permissible environment for $\phi$ and $X$) with $\rge\rho \subseteq X$ we have
$X, \rho \models \phi$ iff $Y, f \circ \rho \models \phi$.
(Note that by considering negated formulas,
$X, \rho \models \phi$ iff $Y, f \circ \rho \models \phi$
is equivalent to
$X, \rho \models \phi$ implies $Y, f \circ \rho \models \phi$.)

\end{itemize}
We show that $\mathcal R$ is an HH bisimulation.
For the case where $X\tranc{a} X'$,
we need $Y\tranc{a} Y'$ with $\mathcal{R}(X',Y',f)$.
We make use of the open formula $\theta'_{X'}$ of Lemma~\ref{lem:dfro X}
to ensure that $Y' \isom X'$.
%See Appendix~\ref{sec:proofs hist} for the details.
\end{proof}
}
\begin{remark}
%The proof of Theorem~\ref{thm:HH=EIL} would still work with the logic restricted by not using declarations $\dec{x:a}\phi$, since they are not used in the ($\Leftarrow$) direction.
In fact Theorem~\ref{thm:HH=EIL} would hold with the logic restricted by not using declarations $\dec{x:a}\phi$.
However we include declarations in EIL because they are useful in defining sublogics for WH, among other things.
\end{remark}

%\subsection{History-preserving bisimulation}\label{subsec:H}
We define a sublogic of $\EIL{}$ which characterises history-preserving bisimulation:
\begin{definition}\label{def:EILh}
$\EIL{h}$ is given as follows, where $\phi_r$ is a formula of  $\EIL{ro}$:
\[
\phi \bnfeq \ttrue \bsep \neg \phi \bsep \phi \aand \phi' \bsep \fd{x:a}\phi \bsep \dec{x:a}\phi \bsep \phi_r 
\]
\end{definition}
$\EIL{h}$ is just $\EIL{}$ with $\rd {x:a} \phi$ replaced by $\phi_r \in \EIL{ro}$.
Thus one is not allowed to go forward after going in reverse.
This concept of disallowing forward moves embedded inside reverse moves appears in~\cite{GKP92}.

\begin{theorem}\label{thm:H=EILh}
Let $\mc C,\mc D$ be stable configuration structures.
Then, $\mc C \eqb{h} \mc D$ if and only if $\mc C \eql{\EIL{h}} \mc D$.
\end{theorem}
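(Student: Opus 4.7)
The plan is to follow the structure of the proof of Theorem~\ref{thm:HH=EIL}, modifying each direction to account for the fact that in $\EIL{h}$ the reverse-only subformulas cannot be nested inside forward modalities.

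For $(\Rightarrow)$, let $\mc R$ be an H-bisimulation between $\mc C$ and $\mc D$. I would prove by induction on $\phi \in \EIL{h}$ that whenever $\mc R(X,Y,f)$ and $\rho$ is a permissible environment for $\phi$ and $X$, we have $X, \rho \models \phi$ iff $Y, f \circ \rho_\phi \models \phi$. The propositional, declaration and forward-diamond cases go through exactly as for HH, using the H-bisimulation forward clause to extend $f$ when handling $\fd{x:a}\psi$. The essential novelty is the reverse-only case $\phi_r \in \EIL{ro}$: no bisimulation clause is invoked at all, because $f$ is already an order isomorphism and Lemma~\ref{lem:ro isom} supplies the biconditional directly. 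This is exactly why $\EIL{h}$ matches H rather than HH---reverse-only subformulas merely interrogate the current order structure, which H preserves via $f$, whereas the dynamic reverse-matching of HH is unnecessary.

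For $(\Leftarrow)$, define $\mc R(X,Y,f)$ to hold iff $f$ is an order isomorphism $X \isom Y$ and for every $\phi \in \EIL{h}$ and every $\rho$ with $\rge\rho \subseteq X$ permissible for $\phi$ and $X$, $X, \rho \models \phi$ iff $Y, f \circ \rho \models \phi$. The initial clause $\mc R(\emptyset, \emptyset, \emptyset)$ is the hypothesis $\mc C \eql{\EIL{h}} \mc D$, and the isomorphism clause is built in. For the forward clause, suppose $X \tranc{e} X'$ with $\lab(e) = a$; assign distinct fresh identifiers $z_d$ to each $d \in X$ with $\rho(z_d) = d$, and take a further fresh $y$. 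By Lemma~\ref{lem:dfro X}, there is a formula $\theta'_{X'} \in \EIL{dfro}$ on these identifiers such that any satisfying extension on another configuration determines a specific order isomorphism onto $X'$. The formula $\fd{y:a}\theta'_{X'}$ lies in $\EIL{h}$ and is satisfied at $(X,\rho)$, so by the logical clause of $\mc R$ there is some $Y \trand{e'} Y'$ with $\lab(e') = a$ and $f' = f \cup \{(e,e')\}$ an order isomorphism between $X'$ and $Y'$.

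To pick a candidate satisfying the full logical clause of $\mc R$, I would argue by contradiction using image finiteness. If every candidate $(Y_i', e_i')$ yielding an iso $f_i'$ failed the logical clause, each failure would be witnessed by some distinguishing $\phi_i \in \EIL{h}$; since satisfaction depends only on $\rho$ restricted to free identifiers, alpha-convert each $\phi_i$ to use only identifiers from $\{z_d\}\cup\{y\}$, while candidates whose $f_i'$ is not even an iso are eliminated by $\theta'_{X'}$ itself. Conjoining, $\Psi = \theta'_{X'} \aand \Aand_i \phi_i \in \EIL{h}$ yields $X,\rho \models \fd{y:a}\Psi$ but $Y, f\circ\rho \not\models \fd{y:a}\Psi$, contradicting $\mc R(X,Y,f)$. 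No backward-matching reverse clause is needed because H has none. The main obstacle is precisely this distinguishing-formula bookkeeping---the same hurdle as for Theorem~\ref{thm:HH=EIL}---which is kept inside $\EIL{h}$ because $\theta'_{X'}$ lives in $\EIL{dfro} \subseteq \EIL{h}$, and crucially no forward modality needs to appear beneath a reverse one.
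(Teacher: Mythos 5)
Your proposal is correct and follows essentially the same route as the paper: the forward direction reuses the HH argument with the reverse-only case discharged by Lemma~\ref{lem:ro isom} (since $f$ is already an order isomorphism), and the backward direction replays only the forward-transition matching from Theorem~\ref{thm:HH=EIL}, observing that the distinguishing conjunction built from $\theta'_{X'}$ and the renamed $\phi_i$ stays inside $\EIL{h}$. The only cosmetic quibble is that the renaming of the $\phi_i$ to the standard identifiers $z_d$ is a substitution of \emph{free} identifiers (the paper's Lemma on fresh environments), not alpha-conversion.
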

\Comment{
\begin{proof}[Proof sketch]
($\Rightarrow$)
We follow the proof of Theorem~\ref{thm:HH=EIL},
except that when dealing with $\phi_r \in \EIL{ro}$ instead of using the main induction hypothesis, we use Lemma~\ref{lem:ro isom}.
($\Leftarrow$) Follow the proof of Theorem~\ref{thm:HH=EIL}.
%See Appendix~\ref{sec:proofs hist} for the details.
\end{proof}
}
\begin{remark}
Just as for Theorem~\ref{thm:HH=EIL}, Theorem~\ref{thm:H=EILh} would still hold if we 
disallow declarations $\dec{x:a}\phi$.  This gives the following more minimal logic,
where $\phi_r \in \EIL{dfro}$.
\[
\phi \bnfeq \ttrue \bsep \neg \phi \bsep \phi \aand \phi' \bsep \fd{x:a}\phi \bsep \phi_r 
\]
\end{remark}

%\subsection{Weak history-preserving bisimulation}\label{subsec:WH}

We define a sublogic $\EIL{wh}$ of $\EIL{h}$ which characterises weak history-preserving bisimulation.
We get from $\EIL{h}$ to $\EIL{wh}$ by simply requiring that all formulas of $\EIL{wh}$ are {\em closed}.
\begin{definition}\label{def:EILwh}
$\EIL{wh}$ is given as follows, where $\phi_{rc}$ is a {\em closed} formula of  $\EIL{ro}$ (Definition~\ref{def:EILro}):
\[
\phi \bnfeq \ttrue \bsep \neg \phi \bsep \phi \aand \phi'\bsep \fd a \phi \bsep \phi_{rc}
\]
%where $\phi_{rc}$ is a {\em closed} formula of  $\EIL{ro}$ (Definition~\ref{def:EILro}).
\end{definition}
In the above definition we write $\fd a \phi$ rather than $\fd{x:a}\phi$ since $\phi$ is closed and in particular $x$ does not occur free in $\phi$ (Notation~\ref{not:abbrev}).
Also we omit declarations $\dec{x:a}\phi$ since they have no effect when $\phi$ is closed.
Of course declarations can occur in $\phi_{rc}$.

\begin{theorem}\label{thm:WH=EILwh}
Let $\mc C,\mc D$ be stable configuration structures.
Then, $\mc C \eqb{wh} \mc D$ iff $\mc C \eql{\EIL{wh}} \mc D$.
\end{theorem}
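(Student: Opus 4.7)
The plan is to mirror the proofs of Theorems~\ref{thm:HH=EIL} and~\ref{thm:H=EILh}, but accounting for the fact that every reverse-only subformula of $\EIL{wh}$ is required to be closed, so identifier environments never cross a forward modality. Because WH bisimulations are binary and carry no hereditary clause (Definition~\ref{def:wh}), only the order-isomorphism and forward-matching requirements need to be verified.

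For the $(\Rightarrow)$ direction, let $\mathcal{R}$ be a WH bisimulation between $\mc C$ and $\mc D$. I would prove by induction on the structure of a closed $\phi \in \EIL{wh}$ that whenever $\mathcal{R}(X,Y)$ holds, $X \models \phi$ iff $Y \models \phi$. The propositional cases are immediate. For $\fd a \phi$ with $\phi$ closed, the forward-match clause of WH provides matching $a$-transitions on both sides, and the inductive hypothesis handles $\phi$ on the successors. The key case is $\phi_{rc} \in \EIL{ro}$: the first clause of Definition~\ref{def:wh} already supplies an order-isomorphism $f : X \isom Y$, and since $\phi_{rc}$ is closed, Lemma~\ref{lem:ro isom} applied with the empty environment yields $X \models \phi_{rc}$ iff $Y \models \phi_{rc}$ directly, without recursing into the possibly-open subformulas of $\phi_{rc}$. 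Specialising to $X = Y = \emptyset$ gives $\mc C \eql{\EIL{wh}} \mc D$.

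For the $(\Leftarrow)$ direction, I would define $\mathcal{R}(X,Y)$ to hold iff (i) $X$ and $Y$ are order-isomorphic in the sense of Definition~\ref{def:wh}, and (ii) $\mc C,X$ and $\mc D,Y$ agree on all closed formulas of $\EIL{wh}$. The base case $\mathcal{R}(\emptyset,\emptyset)$ is immediate, and (i) delivers the isomorphism requirement of WH. For the forward-matching clause, suppose $X \tranc a X'$; by image finiteness let $Y_1',\ldots,Y_n'$ enumerate the $a$-successors of $Y$, and assume toward a contradiction that $\mathcal{R}(X',Y_i')$ fails for every $i$. For each $i$, I would produce a closed separating formula $\phi_i \in \EIL{wh}$ with $X' \models \phi_i$ and $Y_i' \not\models \phi_i$: if $X' \not\isom Y_i'$, take $\phi_i$ to be $\theta_{X'}$ from Lemma~\ref{lem:ro X} (a closed $\EIL{ro}$ formula, hence in $\EIL{wh}$; the hypothesis $|X'|=|Y_i'|$ follows from (i) in $\mathcal{R}(X,Y)$); otherwise (ii) fails on $(X',Y_i')$ and provides $\phi_i$ directly, negating if necessary. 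Then $\Phi \Defeq \Aand_i \phi_i$ is a closed formula of $\EIL{wh}$ with $X \models \fd a \Phi$ but $Y \not\models \fd a \Phi$, contradicting (ii) of $\mathcal{R}(X,Y)$.

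The principal obstacle is keeping the distinguishing formulas inside the sublogic: each $\phi_i$ must be closed so that the sugared modality $\fd a \Phi$ (Notation~\ref{not:abbrev}) is legal in $\EIL{wh}$, and $\theta_{X'}$ must lie in $\EIL{ro}$, which Lemma~\ref{lem:ro X} guarantees because it binds all required identifiers by declarations $\dec{z_i:a_i}$ rather than forward modalities. Everything else is the routine Hennessy--Milner unravelling using image finiteness.
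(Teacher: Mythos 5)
Your proposal is correct and follows essentially the same route as the paper's own proof: the forward direction is the same induction on closed formulas with Lemma~\ref{lem:ro isom} discharging the reverse-only case via the isomorphism clause of Definition~\ref{def:wh}, and the converse uses the identical relation (order isomorphism plus agreement on all closed $\EIL{wh}$ formulas), the same image-finiteness contradiction, and the same choice of $\theta_{X'}$ from Lemma~\ref{lem:ro X} as the separating formula in the non-isomorphic case. No gaps.
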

\Comment{
\begin{proof}[Proof sketch]
Apart from the use of Lemmas~\ref{lem:ro isom} and~\ref{lem:ro X} to handle formulas of 
$\EIL{ro}$, the proof is much as for standard Hennessy-Milner logic~\cite{HM85}.
%See Appendix~\ref{sec:proofs hist} for the details.
\end{proof}
}

We believe that $\EIL{wh}$ is the first logic proposed for weak history-preserving bisimulation with autoconcurrency allowed.
Goltz et al.\ \cite{GKP92} described a logic for weak history-preserving bisimulation with no autoconcurrency allowed, but in this case, weak history-preserving bisimulation is as strong as history-preserving bisimulation~\cite{vGG01}.

Just as we weakened $\EIL{h}$ to get $\EIL{wh}$ we can weaken $\EIL{}$
by requiring that
forward transitions $\fd{x:a}\phi$ are only allowed if $\phi$ is closed.
Again instead of $\fd{x:a}\phi$ we write $\fd a \phi$.
This gives us $\EIL{hwh}$:
\begin{definition}
$\EIL{hwh}$ is given below, where $\phi_c$ ranges over closed formulas of $\EIL{hwh}$. 
\[
\phi \bnfeq \ttrue \bsep \neg \phi \bsep \phi \aand \phi' \bsep \fd{a} \phi_c \bsep \dec{x:a}\phi \bsep \rd{x}\phi
\]
%where $\phi_c$ ranges over closed formulas of $\EIL{hwh}$.
\end{definition}
Plainly $\EIL{wh}$ is a sublogic of $\EIL{hwh}$ as well as of $\EIL{h}$.
\begin{theorem}\label{thm:HWH=EILhwh}
Let $\mc C,\mc D$ be stable configuration structures.
Then, $\mc C \eqb{hwh} \mc D$ iff $\mc C \eql{\EIL{hwh}} \mc D$.
\end{theorem}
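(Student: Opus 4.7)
The plan is to mirror the proofs of Theorems~\ref{thm:HH=EIL} and~\ref{thm:H=EILh}, adjusting for the two linked restrictions characterising $\EIL{hwh}$ and HWH: the forward modality $\fd{a}\phi_c$ demands $\phi_c$ closed, exactly as HWH allows $f$ to be replaced freely on forward moves while preserving it on reverse moves. These restrictions correspond point-for-point, so the inductive structure should align cleanly.

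For $(\Rightarrow)$, given an HWH bisimulation $\mathcal{R}$, I will show by induction on $\phi \in \EIL{hwh}$ that $\mathcal{R}(X,Y,f)$ together with any $\rho$ permissible for $\phi$ and $X$ with $\rge(\rho_\phi) \subseteq X$ implies $X,\rho \models \phi$ iff $Y, f \circ \rho_\phi \models \phi$. The boolean cases and $\dec{x:a}\phi$ are routine, the latter using that $f$ is a label-preserving bijection on events. For $\fd{a}\phi_c$, any $X \tranc{a} X'$ witnessing satisfaction is matched by the HWH forward clause with some $Y \trand{a} Y'$ and $\mathcal{R}(X',Y',f')$; since $\phi_c$ is closed, the choice of $f'$ is immaterial and the IH applies at $(X',Y',f')$ with the empty environment. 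For $\rd{x}\phi$, the HWH reverse clause supplies $Y \rtrand{a} Y'$ with $f' = f \res X'$, and the IH applies to $(X',Y',f')$ directly. Specialising to closed $\phi$ at the initial triple yields $\mc C \eql{\EIL{hwh}} \mc D$.

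For $(\Leftarrow)$, I will define $\mathcal{R}(X,Y,f)$ to hold iff $f: X \isom Y$ is an order isomorphism and $X,\rho \models \phi$ iff $Y, f \circ \rho \models \phi$ for every $\phi \in \EIL{hwh}$ and every $\rho$ permissible for $\phi$ and $X$ with $\rge(\rho_\phi) \subseteq X$, then verify that $\mathcal{R}$ is an HWH bisimulation. The initial triple $(\emptyset,\emptyset,\emptyset)$ belongs to $\mathcal{R}$: the range condition forces $\Fi\phi = \emptyset$, reducing the requirement to $\mc C \eql{\EIL{hwh}} \mc D$. The reverse clause is routine: for $X \rtranc{e} X'$, set $Y' = Y \setminus \{f(e)\}$ and $f' = f \res X'$; any permissible $(\phi,\rho)$ at $(X',Y',f')$ lifts to $(\rd{x}\phi, \rho[x \mapsto e])$ at $(X,Y,f)$ with $x$ fresh, and the required agreement follows from $\mathcal{R}(X,Y,f)$.

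The main obstacle is the HWH forward clause. Given $X \tranc{e} X'$ with $\lab(e) = a$, image-finiteness restricts attention to the finite list $Y_1,\ldots,Y_n$ of $a$-successors of $Y$, and for each $Y_i$ there are only finitely many order isomorphisms $g: X' \to Y_i$. If no candidate $(Y_i, g)$ yields $\mathcal{R}(X',Y_i,g)$, then for each such iso $g$ one can fix a witness $(\psi_{i,g}, \rho_{i,g})$ with $X', \rho_{i,g} \models \psi_{i,g}$ and $Y_i, g \circ \rho_{i,g} \not\models \psi_{i,g}$. In the HH proof one may wrap such a witness directly under $\fd{x:a}$ alongside the open formula $\theta'_{X'}$ of Lemma~\ref{lem:dfro X}, carrying identifiers across the forward move; here $\fd{a}\phi_c$ requires $\phi_c$ closed, so that move is unavailable. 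My proposal is, for each iso $g$, to form a renaming $\hat\psi_{i,g}$ of $\psi_{i,g}$ in which each free identifier $x$ is replaced by $z_{\rho_{i,g}(x)}$, and then to prefix the declarations $\dec{z_e:\lab(e)}$ for every $e \in X'$ over $\theta'_{X'} \aand \Aand_g \hat\psi_{i,g}$, giving a closed formula $\chi_i$. Then $\chi_i$ holds at $X'$ under the canonical assignment $z_e \mapsto e$, but fails at $Y_i$ because any iso-respecting assignment at $Y_i$ realises some particular $g$, and the corresponding conjunct $\hat\psi_{i,g}$ then fails by choice of witness. The closed formula $\fd{a}(\Aand_i \chi_i)$ therefore holds at $X$ but not at $Y$, contradicting $\mathcal{R}(X,Y,f)$. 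The subtle point is that conjoining over every order isomorphism $g$ is essential: without it, some alternative iso $h \neq g$ could still satisfy a single chosen witness, so every $g$ must be forced to fail simultaneously.
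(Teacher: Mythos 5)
Your proposal is correct and follows essentially the same route as the paper's proof: the same relation $\mathcal{R}$, the same use of $\theta'_{X'}$ from Lemma~\ref{lem:dfro X} to force any satisfying assignment at a candidate $Y_i$ to realise an order isomorphism, the same quantification over \emph{all} isomorphisms $g:X'\isom Y_i$, and the same closing of the witness formula by a prefix of declarations so that it fits under $\fd{a}\phi_c$. The only difference is cosmetic packaging (you build one closed $\chi_i$ per successor $Y_i$ and conjoin them, where the paper uses a single declaration prefix over one global conjunction), and your emphasis on why every isomorphism must be refuted simultaneously matches the paper's own remark on that point.
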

%\begin{proof}
%See Appendix~\ref{sec:proofs hist} for the details.
%\end{proof}

With no (equidepth) autoconcurrency, we know that $\eqb{hwh}$ is as strong as $\eqb{hh}$~\cite{Bed91,PU11}.
So $\EIL{hwh}$ is as strong as $\EIL{}$ in this case.

\section{Characteristic formulas}\label{sec:char}

In this section we investigate characteristic formulas for three of the equivalences we have considered, namely HH, H and WH.
The idea is that we reduce checking whether $\mc C$ and $\mc D$ satisfy the same formulas in a logic such as $\EIL{}$ to the question of whether $\mc D$ satisfies a particular formula $\chi_{\mc C}$, the {\em characteristic formula} of $\mc C$,
which completely expresses the behaviour of $\mc C$, at least as far as the particular logic is concerned.
As pointed out in~\cite{AIS09}, this means that checking whether two structures are equivalent is changed from the problem of potentially having to check infinitely many formulas into a single model-checking problem $\mc D \models \chi_{\mc C}$.

Characteristic formulas for models of concurrent systems were first investigated in~\cite{GS86}, and subsequently in~\cite{SI94} and other papers---see~\cite{AIS09} for further references.
As far as we are aware, characteristic formulas have not previously been investigated for any true concurrency logic, although we should mention that
in~\cite{AIS09} characteristic formulas are studied for a logic with both forward and reverse modalities, related to the back and forth simulation of~\cite{DNMV90}.

We shall confine ourselves to {\em finite} stable configuration structures in this section.
Even with this assumption, it is not obvious that an equivalence such as HH, which employs both forward and reverse transitions, can be captured by a single finite-depth formula.  To show that forward and reverse transitions need not alternate for ever, we first relate HH to a simple game.

\begin{definition}
Let $\mc C,\mc D$ be finite stable configuration structures.
The game $G(\mc C,\mc D)$ has two players: $A$ (attacker) and $D$ (defender).
The set of game states is
\(
S(\mc C,\mc D) \Defeq \{(X,Y,f):
X \in C_{\mc C},Y \in C_{\mc D},f:X \isom Y\}
\).
%where $X \in C_{\mc C},Y \in C_{\mc D}$ and $f:X \isom Y$.
The start state is $(\emptyset,\emptyset,\emptyset)$.
At each state of the game $A$ chooses a forward (resp.\ reverse) move $e$ of
either $\mc C$ or $\mc D$.
Then $D$ must reply with a corresponding forward (resp.\ reverse) move $e'$ by the other structure.
Going forwards we extend $f$ to $f'$ and going in reverse we restrict $f$ to $f'$,
as in the definition of HH.
The two moves produce a new game state $(X',Y',f')$.
Then $D$ wins if we get to a previously visited state.
Conversely, $A$ wins if $D$ cannot find a move.
(Also $D$ wins if $A$ cannot find a move, but that can only happen if both
$\mc C$ and $\mc D$ have only the empty configuration.)
\end{definition}
%Note that a game cannot go on for ever as there are only finitely many states.
It is reasonable that $D$ wins if a state is repeated,
since if $A$ then chooses a different and better move at the repeated state,
$A$ could have chosen that on the previous occasion.
\begin{definition}
Given finite stable configuration structures $\mc C,\mc D$,
let $s(\mc C,\mc D) \Defeq \card{S(\mc C,\mc D)}$,
let $c(\mc C) = \max\{\card X : X \in C_{\mc C}\}$,
%and let $c(\mc C,\mc D) = \max\{c(\mc C),c(\mc D)\}$.
and let $c(\mc C,\mc D) = \min\{c(\mc C),c(\mc D)\}$.
% be the maximum size of any configuration of either $\mc C$ or $\mc D$.
\end{definition}
\Comment{
The set $S(\mc C,\mc D)$ can be viewed as an undirected graph 
where $(X,Y,f)$ is adjacent to $(X',Y',f')$
if $X' = X \union \{e\}$, $Y' = Y \union \{e'\}$ and $f' = f \union \{(e,e')\}$.
A play in $G(\mc C,\mc D)$ corresponds to a path in $S(\mc C,\mc D)$
with start node $(\emptyset,\emptyset,\emptyset)$,
and with no repeated node apart possibly from the last node.
}
Clearly any play of the game $G(\mc C,\mc D)$ finishes after no more than $s(\mc C,\mc D)$ moves.
% *** start of possible cut
We can place an upper bound on $s(\mc C,\mc D)$ as follows:
\begin{proposition}
Let $\mc C,\mc D$ be finite stable configuration structures.
%Let $m,n$ be the number of configurations of $\mc C,\mc D$, respectively,
%and let $k$ be the maximum size of any configuration of either $\mc C$ or $\mc D$.
Then $s(\mc C,\mc D) \leq \card{C_{\mc C}}.\card{C_{\mc D}}.c(\mc C,\mc D)!$.
\end{proposition}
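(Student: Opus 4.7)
The plan is to bound the size of $S(\mc C,\mc D)$ by classifying its elements $(X,Y,f)$ according to the first two coordinates and then bounding, for each fixed pair $(X,Y)$, the number of possible third coordinates.

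First I would observe that for a triple $(X,Y,f) \in S(\mc C,\mc D)$ to exist, $f$ must be a bijection between $X$ and $Y$, so in particular $\card X = \card Y$. Therefore fix any pair $X \in C_{\mc C}$, $Y \in C_{\mc D}$ with $\card X = \card Y = n$; the number of isomorphisms $f : X \isom Y$ is at most the total number of bijections from $X$ to $Y$, which is $n!$. Since $X$ is a configuration of $\mc C$ we have $n \leq c(\mc C)$ and similarly $n \leq c(\mc D)$, hence $n \leq c(\mc C,\mc D)$. Because the factorial is monotone, $n! \leq c(\mc C,\mc D)!$.

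Summing over pairs $(X,Y)$, the number of triples with those first two coordinates is at most $c(\mc C,\mc D)!$ (and is $0$ when cardinalities disagree, which only helps), so
\[
s(\mc C,\mc D) = \sum_{X \in C_{\mc C},\, Y \in C_{\mc D}} \card{\{f : X \isom Y\}} \;\leq\; \card{C_{\mc C}} \cdot \card{C_{\mc D}} \cdot c(\mc C,\mc D)!,
\]
as required. There is no real obstacle here: the argument is a direct counting estimate, and the only point to be careful about is that the order-isomorphism condition on $f$ only shrinks the set of candidate bijections, so the crude bound $n!$ is sound; the use of $c(\mc C,\mc D) = \min\{c(\mc C),c(\mc D)\}$ simply gives the sharper of the two available bounds on $n$.
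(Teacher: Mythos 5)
Your counting argument is correct and is exactly the (implicit) argument behind the paper's statement, which is given without proof: group the states $(X,Y,f)$ by the pair $(X,Y)$, note that $f$ is a bijection between sets of common size $n \leq c(\mc C,\mc D)$, and bound the number of such $f$ by $n! \leq c(\mc C,\mc D)!$. This also matches the paper's follow-up remark that without autoconcurrency the isomorphism is unique, which drops the factorial factor.
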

Note that if there is no autoconcurrency, any isomorphism $f:X \isom Y$ is unique,
and so we can improve the upper bound on the number of states to
$s(\mc C,\mc D) \leq \card{C_{\mc C}}.\card{C_{\mc D}}$.
% *** end of possible cut
\begin{proposition}\label{prop:hh game}
Let $\mc C,\mc D$ be finite stable configuration structures.
Then $\mc C \eqb{hh} \mc D$ iff defender $D$ has a winning strategy for the game $G(\mc C,\mc D)$.
\end{proposition}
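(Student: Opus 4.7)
The plan is to read $G(\mc C,\mc D)$ as a bisimulation game in the standard sense, modified by the convention that revisiting a state wins for the defender. The two directions then amount to translating HH bisimulations into winning strategies and vice versa.

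For $(\Rightarrow)$, I would fix an HH bisimulation $\mc R$ witnessing $\mc C\eqb{hh}\mc D$ and let $D$ play so as to maintain the invariant that the current game state $(X,Y,f)$ lies in $\mc R$. Given an attacker move $X\tranc{e}X'$, Definition~\ref{def:hh} supplies some $Y\trand{e}Y'$ and an extension $f'$ of $f$ with $\mc R(X',Y',f')$; let $D$ select one such pair. Attacker moves on the $\mc D$-side, and reverse moves on either side, are handled symmetrically using the remaining clauses of Definition~\ref{def:hh}, where going in reverse we restrict rather than extend $f$. Since $D$ always has a legal response, the game can only terminate in $D$'s favour: either $A$ runs out of moves, or, because $S(\mc C,\mc D)$ is finite, some state is eventually repeated.

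For $(\Leftarrow)$, fix a winning strategy $\sigma$ for $D$ and define $\mc R$ to be the set of game states $(X,Y,f)\in S(\mc C,\mc D)$ that occur on some play in which $D$ follows $\sigma$. Every element of $\mc R$ is automatically an order isomorphism between configurations by definition of $S(\mc C,\mc D)$, and $(\emptyset,\emptyset,\emptyset)\in\mc R$ trivially. To verify the forward clause of Definition~\ref{def:hh}, pick $(X,Y,f)\in\mc R$ witnessed by a play $p$ and suppose $X\tranc{e}X'$. Extending $p$ by $A$'s choice of this move, $\sigma$ must prescribe some legal reply $Y\trand{e}Y'$ together with an extension $f'$ of $f$, for otherwise $D$ would be stuck and lose; the extended play still follows $\sigma$, so $(X',Y',f')\in\mc R$. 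Moves on the $\mc D$-side and reverse moves are handled in exactly the same way, using that restriction replaces extension in the reverse case. Hence $\mc R$ is an HH bisimulation, and $\mc C\eqb{hh}\mc D$ follows from $(\emptyset,\emptyset,\emptyset)\in\mc R$.

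The chief subtlety lies in $(\Leftarrow)$: one must know that $\sigma$ supplies a legal reply at every position reachable while $D$ follows $\sigma$, not just at a few ``canonical'' positions. This is immediate from the winning property, because a missing reply would be an immediate loss for $D$, contradicting that $\sigma$ is winning. Crucially, the repeat-wins convention causes no trouble for the construction, since $\mc R$ is built from plays rather than from winning positions, so states that close a loop are simply elements of $\mc R$ already witnessed earlier in the same play and the bisimulation clauses remain satisfied at them.
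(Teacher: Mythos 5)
Your proposal is correct and follows essentially the same route as the paper's own proof: forward direction by maintaining membership in the HH bisimulation and invoking finiteness of $S(\mc C,\mc D)$ to force a repeat, and backward direction by taking $\mc R$ to be the states occurring on plays consistent with the winning strategy, with the repeat-terminal states handled by appealing to their earlier occurrence in the play. No gaps.
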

%\begin{proof}
%See Appendix~\ref{sec:proofs char}.
%\end{proof}
%See Appendix~\ref{sec:proofs char} for the proof of Proposition~\ref{prop:hh game}
%and subsequent results in the section.

\begin{remark}
Certainly game characterisations of HH equivalence have been used many times before;
see e.g.\ \cite{Fro99,Fro05,FL05,JNS03,Gut09}.
However defender is usually said to win if the play continues for ever,
whereas we say that defender wins if a state is repeated.
This is because we are working with finite configuration structures, rather than, say, Petri nets.
%***Irek has already seen (where?) the concept of Defender winning if a game state is repeated.***
\end{remark}
\begin{definition}
Let $\phi \in \EIL{}$.
The {\em modal depth} $\depth{\phi}$ of $\phi$ is defined as follows:
\[
\begin{array}{lll}
\depth{\ttrue} \Defeq 0
&
\depth{\phi \aand\phi'} \Defeq \max(\depth{\phi},\depth{\phi'})
&
\depth{\dec{x:a}\phi} \Defeq \depth{\phi}
\\
\depth{\neg\phi} \Defeq \depth{\phi}
&
\depth{\fd{x:a}\phi} \Defeq 1+\depth{\phi}
&
\depth{\rd{x:a}\phi} \Defeq 1+\depth{\phi}
\\
\end{array}
\]
\end{definition}
We can use the game characterisation of HH to bound the modal depth of $\EIL{}$ formulas needed to check whether finite structures are HH equivalent:
\begin{theorem}\label{thm:hh depth}
Let $\mc C,\mc D$ be finite stable configuration structures.
Then $\mc C \eqb{hh} \mc D$
iff $\mc C$ and $\mc D$ satisfy the same $\EIL{}$ formulas of modal depth no more than $s(\mc C,\mc D)+c(\mc C,\mc D)$.
\end{theorem}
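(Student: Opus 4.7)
The plan is to handle the two directions separately. The ($\Rightarrow$) direction is immediate from Theorem~\ref{thm:HH=EIL}: if $\mc C \eqb{hh} \mc D$ then $\mc C \eql{\EIL{}} \mc D$, so $\mc C$ and $\mc D$ agree on every formula of $\EIL{}$, a fortiori on those of modal depth at most $s(\mc C,\mc D) + c(\mc C,\mc D)$. The substance of the theorem lies in the ($\Leftarrow$) direction, which I would argue by contraposition.

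Suppose $\mc C \not\eqb{hh} \mc D$. By Proposition~\ref{prop:hh game} the attacker has a winning strategy in $G(\mc C,\mc D)$, and since repeating a state is a defender win, every play consistent with this strategy visits pairwise distinct states and so has length at most $s(\mc C,\mc D)$. From the attacker's strategy tree I would inductively build, for each reachable game state $(X,Y,f)$, two formulas $\phi^+_{(X,Y,f)},\phi^-_{(X,Y,f)} \in \EIL{}$ with free identifiers $\{x_d : d \in X\}$, with the invariant that $\phi^+_{(X,Y,f)}$ is satisfied by $\mc C,X,\rho_X$ but not by $\mc D,Y,\rho_Y$ (and dually for $\phi^-$), where $\rho_X(x_d) = d$ and $\rho_Y(x_d) = f(d)$.

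The inductive construction mirrors the game. When the attacker's winning move at $(X,Y,f)$ is a forward move $e$ on $\mc C$ with label $a$, I would set
\[
\phi^+_{(X,Y,f)} \Defeq \fd{x_e:a}\bigl(\theta'_{X \cup \{e\}} \aand \Aand_{e'} \phi^+_{(X \cup \{e\},\, Y \cup \{e'\},\, f \cup \{(e,e')\})}\bigr),
\]
where $\theta'_{X \cup \{e\}}$ is the formula of Lemma~\ref{lem:dfro X} (using the identifiers $\{x_d : d \in X \cup \{e\}\}$) and the conjunction ranges over those forward $a$-successors $e'$ of $Y$ for which $f \cup \{(e,e')\}$ is an order isomorphism. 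The role of $\theta'_{X \cup \{e\}}$ is to falsify the inner formula on any defender response that breaks the isomorphism, while the conjuncts handle the responses that extend it. When the attacker plays a reverse move $e$ on $\mc C$, defender's response is uniquely the reverse of $f(e)$, so I would simply take $\phi^+_{(X,Y,f)} \Defeq \rd{x_e}\phi^+_{(X \setminus \{e\},\, Y \setminus \{f(e)\},\, f \setminus \{(e,f(e))\})}$. Attacks on the $\mc D$-side are handled symmetrically by using $\phi^-$ in the recursive call, which does not alter modal depth.

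A short induction then yields $\depth{\phi^+_{(X,Y,f)}} \leq n + c(\mc C,\mc D)$ whenever the attacker wins in $n$ moves from $(X,Y,f)$: each forward step contributes one modality plus a $\theta'$ of depth bounded by the current configuration size (and hence by $c(\mc C,\mc D)$), while the max of the conjuncts' depths controls the recursion, and reverse steps contribute only a single modality. Applied at the initial state $(\emptyset,\emptyset,\emptyset)$ with $n \leq s(\mc C,\mc D)$, this produces a closed distinguishing formula of depth at most $s(\mc C,\mc D) + c(\mc C,\mc D)$, completing the contrapositive. I expect the main obstacle to be the sharpening of Lemma~\ref{lem:dfro X} that is actually needed here: not merely that $Y \cup \{e'\} \isom X \cup \{e\}$ is equivalent to the existence of \emph{some} satisfying environment for $\theta'_{X \cup \{e\}}$, but that the specific environment $\rho_Y[x_e \mapsto e']$ satisfies $\theta'_{X \cup \{e\}}$ iff $f \cup \{(e,e')\}$ is itself an isomorphism. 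This is essentially what the inductive construction of $\theta'$ gives, but it must be read off its proof carefully, and it is precisely what forces the additive $c(\mc C,\mc D)$ term in the depth bound.
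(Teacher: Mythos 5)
Your argument is the contrapositive dual of the paper's. The paper assumes agreement on all formulas of depth at most $s+c$ and shows the defender can survive: it maintains the invariant that at stage $k$ the current state $(X,Y,f)$ preserves satisfaction of all formulas of depth at most $s+c-k$, reusing the refutation formulas from the proof of Theorem~\ref{thm:HH=EIL} to show the defender is never stuck. You instead extract an explicit distinguishing formula from an attacker winning strategy. Both routes rest on the same ingredients --- Proposition~\ref{prop:hh game}, the formulas $\theta'_X$ of Lemma~\ref{lem:dfro X}, and the bound $s$ on the length of a repetition-free play --- so the combinatorial content is the same; yours additionally produces a concrete witness formula (in the spirit of the characteristic formulas $\Chard{hh}{X}{n}$ of Section~\ref{sec:char}), but it needs one extra fact you elide: Proposition~\ref{prop:hh game} only tells you the \emph{defender} has no winning strategy, and to obtain an attacker winning strategy you must add that $G(\mc C,\mc D)$ is a finite game (every play ends within $s$ moves) and hence determined.

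The step that fails as written is the depth bound $\depth{\phi^+_{(X,Y,f)}} \le n + c(\mc C,\mc D)$. You justify the $c(\mc C,\mc D)$ term by saying $\theta'_{X\cup\{e\}}$ has depth bounded by ``the current configuration size (and hence by $c(\mc C,\mc D)$)''; but $X\cup\{e\}$ is a configuration of $\mc C$ alone, so its size is a priori bounded only by $c(\mc C)$, which may exceed $c(\mc C,\mc D)=\min\{c(\mc C),c(\mc D)\}$. The inequality $\card{X\cup\{e\}}\le c(\mc C,\mc D)$ holds precisely when the defender has at least one $a$-labelled response $e'$, since then $Y\cup\{e'\}$ is a configuration of $\mc D$ of the same cardinality. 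In the terminal case where $Y$ has no $a$-successor at all, your unconditional $\theta'$ conjunct breaks the induction: for $\mc C = a.a$ and $\mc D = a$ one has $s=2$, $c=1$, yet your root formula $\fd{x_{e_1}:a}(\theta'_{\{e_1\}} \aand \fd{x_{e_2}:a}\theta'_{\{e_1,e_2\}})$ has modal depth $4 > s+c = 3$, so in general your construction only yields the bound $s+c+1$. The repair is immediate: when $Y$ has no $a$-labelled successor, replace the body by $\ttrue$ (the formula $\fd{x_e:a}\ttrue$ already distinguishes); when it has one, $\card{X\cup\{e\}}\le c(\mc D)$ and your accounting goes through. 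With that and the determinacy remark, the proof is sound.
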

%\begin{proof}
%See Appendix~\ref{sec:proofs char}.
%\end{proof}
We now define a family of characteristic formulas for HH equivalence,
parametrised on modal depth.
\begin{definition}
Suppose that $\Act$ is finite.
Let $\mc C$ be a finite stable configuration structure.
We define formulas $\Chard{hh}{X}{n}$ ($X$ a configuration of $\mc C$) by induction on $n$:
\[
\begin{array}{rcl}
\Chard{hh}{X}{0} &\Defeq & \theta'_X \\
\Chard{hh}{X}{n+1} &\Defeq
& \theta'_X
\displaystyle
\aand (\Aand_{X \tranc e X'} \fd {z_e:\lab(e)} \Chard{hh}{X'}{n})
\aand {(\Aand_{a \in \Act} \fb{x:a} \Oor_{X \tranc e X',\lab(e)=a} \Chard{hh}{X'}{n}[x/z_e])}
%&& \aand \Aand_{X \tranc a} \fb{x:a} \Oor_{X \tranc e X',\lab(e)=a} \Chard{hh}{X'}{n}[x/z_e] \\
%&& \aand \Aand_{a \in \Act,X \not\tranc a} \neg\fd {x:a}\ttrue \\
\displaystyle
\aand ( {\Aand_{X \rtranc e X'} \rd {z_e} \Chard{hh}{X'}{n}} ) \\
% && \aand \Aand_{e \in X,X \not\rtranc e X'} \neg\rd {z_e}\ttrue
\end{array}
\]
Here $\theta'_X \in \EIL{dfro}$ is as in Lemma~\ref{lem:dfro X}
and $\Fi{\Chard{hh}{X}{n}} = \{z_e : e \in X\}$.
We further let $\Chard{hh}{\mc C}{n} \Defeq \Chard{hh}{\emptyset}{n}$.
\end{definition}
Note that $\Chard{hh}{X}{n} \in \EIL{}$ and $\depth{\Chard{hh}{X}{n}} \leq n+c(\mc C)$.
%where $c = \max\{\card X : X \in C_{\mc C}\}$.
\begin{theorem}\label{thm:char hh}
Suppose that $\Act$ is finite.
Let $\mc C,\mc D$ be finite stable configuration structures.
Let $s \Defeq s(\mc C,\mc D)$.
Then $\mc C \eqb{hh} \mc D$
iff $\mc D \models \Chard{hh}{\mc C}{s}\;$.
\end{theorem}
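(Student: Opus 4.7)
The plan proceeds in two directions.

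For $(\Rightarrow)$, first establish by induction on $n$ that for every configuration $X$ of $\mc C$, $\mc C, X, \rho_X \models \Chard{hh}{X}{n}$. The base case $n = 0$ is exactly Lemma~\ref{lem:dfro X}. For $n+1$ I would verify each conjunct of $\Chard{hh}{X}{n+1}$ separately: the $\fd{z_e:\lab(e)}$ conjuncts are witnessed by the actual transitions $X \tranc e X'$, using $\rho_X[z_e \mapsto e] = \rho_{X'}$ and the induction hypothesis; the universal $\fb{x:a}$ conjuncts hold because for any $X \tranc{e'} X''$ with $\lab(e') = a$ one can pick the disjunct indexed by $e'$ and invoke a substitution lemma to realign $x$ with $z_{e'}$; and the $\rd{z_e}$ conjuncts are witnessed by the actual reverse transitions, using that $\rho_X$ agrees with $\rho_{X'}$ on $\Fi{\Chard{hh}{X'}{n}}$. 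Specialising to $X = \emptyset$, $n = s$ gives $\mc C \models \Chard{hh}{\mc C}{s}$. Since $\Chard{hh}{\mc C}{s}$ is a closed $\EIL{}$ formula and $\mc C \eqb{hh} \mc D$, Theorem~\ref{thm:HH=EIL} yields $\mc D \models \Chard{hh}{\mc C}{s}$.

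For $(\Leftarrow)$, assume $\mc D \models \Chard{hh}{\mc C}{s}$; by Proposition~\ref{prop:hh game} it suffices to exhibit a winning strategy for the defender in $G(\mc C, \mc D)$. Define $T_k \Defeq \{(X,Y,f) \in S(\mc C, \mc D) : \mc D, Y, f \circ \rho_X \models \Chard{hh}{X}{k}\}$; the $\theta'_X$ conjunct inside $\Chard{hh}{X}{k}$ combined with Lemma~\ref{lem:dfro X} ensures that $f$ is indeed an order isomorphism whenever $(X,Y,f) \in T_k$. The core lemma is that for every $k \geq 1$ and every $(X,Y,f) \in T_k$, any attacker move from $(X,Y,f)$ admits a defender response producing a state in $T_{k-1}$, proved by a four-case analysis: a forward $\mc C$-move of $e$ is handled by the $\fd{z_e:\lab(e)}$ conjunct; a forward $\mc D$-move of $e'$ with label $a$ is handled by the $\fb{x:a}$ conjunct, which produces some $e$ with $X \tranc e X'$ and $\lab(e) = a$ such that extending $f$ by $(e,e')$ stays in $T_{k-1}$; a reverse $\mc C$-move of $e$ is handled by the $\rd{z_e}$ conjunct; and a reverse $\mc D$-move of $e'$ is reflected by defender as the reverse $\mc C$-move on $f^{-1}(e')$, which is $X$-maximal because $f$ is an order isomorphism, and is then handled by the $\rd{z_{f^{-1}(e')}}$ conjunct.

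Since $(\emptyset,\emptyset,\emptyset) \in T_s$, the defender descends through $T_s, T_{s-1}, \ldots, T_0$ as the game proceeds, guaranteeing $s$ legal responses. After $s$ moves the trajectory has visited $s+1$ elements of $S(\mc C, \mc D)$, and since $\card{S(\mc C, \mc D)} = s$, pigeonhole forces a repetition, at which point defender wins. The main obstacle will be the identifier bookkeeping in the four-case analysis, especially the $\fb{x:a}$ case where a substitution lemma is needed to convert evaluation of $\Chard{hh}{X'}{k-1}[x/z_e]$ under $(f \circ \rho_X)[x \mapsto e']$ into evaluation of $\Chard{hh}{X'}{k-1}$ under $f' \circ \rho_{X'}$ with $f' = f \cup \{(e, e')\}$; a secondary issue is verifying permissibility of the updated environment for the subformulas at each step, which ultimately rests on the identity $\Fi{\Chard{hh}{X}{k}} = \{z_e : e \in X\}$.
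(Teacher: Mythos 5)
Your proposal is correct and follows essentially the same route as the paper's proof: the forward direction by induction on $n$ showing $X,\rho_X \models \Chard{hh}{X}{n}$ and then appealing to Theorem~\ref{thm:HH=EIL}, and the converse by having the defender maintain the invariant that after $k$ moves the current state $(X,Y,f)$ satisfies $Y, f\circ\rho_X \models \Chard{hh}{X}{s-k}$ (your $T_{s-k}$), with the same four-case analysis and the same appeal to $\theta'_{X'}$ and Lemma~\ref{lem:dfro X} to recover the order isomorphism, concluding via Proposition~\ref{prop:hh game} and the bound on play length.
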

%\begin{proof}
%See Appendix~\ref{sec:proofs char}.
%\end{proof}
Thus we do not have a single characteristic formula for $\mc C$,
but we can deal uniformly with all $\mc D$ up to a certain size.
This is almost as good as having a single characteristic formula for $\mc C$,
since we can generate a formula of the appropriate size once we have settled on $\mc D$,
so that we have still reduced equivalence checking to checking a single formula.
%Another way to see the result is that for any pair of structures,
%rather than having to check an infinite set of formulas, we only have a single formula to check.  This should hold even if $\Act$ is infinite, since we can restrict $\Act$ to just be those labels used by the two structures (that seems fairly obvious).
Single characteristic formulas are certainly possible for some $\mc C$s;
there remains an open question of whether for all finite $\mc C$
there is a single formula $\Char{hh}{\mc C}$
which works for all $\mc D$.

\Comment{
We still need to find a counter-example to show that $\mc C$ need not have a single
characteristic formula which works for all $\mc D$.
We would need $\mc C$ and a whole family of $\mc D_n$
such that $\mc D_n \models \Chard{hh}{\mc C}{m}$ for small $m$, but not if $m$ is sufficiently large.
This is a major gap in our knowledge, since it could be that there is a single characteristic formula for every structure.
}

Matters are simpler for H and WH equivalences,
since only forward transitions are employed.
\begin{definition}
Suppose that $\Act$ is finite.
Let $\mc C$ be a finite stable configuration structure.
We define formulas $\Char{h}{X}$ ($X$ a configuration of $\mc C$) as follows:
\[
\Char{h}{X} \Defeq \theta'_X \aand
( \Aand_{X \tranc e X'} \fd {z_e:\lab(e)} \Char{h}{X'} )
\aand ( \Aand_{a \in \Act}\fb {x:a} \Oor_{X \tranc e X',\lab(e)=a} \Char{h}{X'}[x/z_e] )
%\aand \Aand_{X \tranc a}\fb {x:a} \Oor_{X \tranc e X',\lab(e)=a} \Char{h}{X'}[x/z_e]
%\aand \Aand_{a \in \Act,X \not\tranc a} \neg\fd {x:a}\ttrue
\]
Here $\theta'_X \in \EIL{dfro}$ is as in Lemma~\ref{lem:dfro X}.
We further let $\Char{h}{\mc C} \Defeq \Char{h}{\emptyset}$.
\end{definition}
Note that $\Char{h}{\mc C} \in \EIL{h}$; it is well-defined,
since maximal configurations form the base cases of the recursion.
Also $\depth{\Char{h}{X}} \leq 2.c(\mc C)$.
%where $c = \max\{\card X : X \in C_{\mc C}\}$.

\begin{proposition}\label{prop:char h}
Suppose that $\Act$ is finite.
Let $\mc C,\mc D$ be finite stable configuration structures.
Then $\mc D \eqb{h} \mc C$ iff $\mc D \models \Char{h}{\mc C}\;$.
\end{proposition}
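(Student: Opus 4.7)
The plan is to prove both directions by a well-founded induction that proceeds downward from maximal configurations of $\mc C$ (which is finite, so the recursion defining $\Char{h}{X}$ bottoms out). Throughout, I write $\rho_f$ for the environment mapping $z_e \mapsto f(e)$ for each $e \in \dom f$, so $\rho_f$ is a permissible environment for $\Char{h}{X}$ and the ambient configuration whenever $f$ is defined on $X$ and maps into that configuration. I will also use a routine substitution lemma: for any $\phi$ and fresh/suitable $x$, $Z, \rho[x \mapsto e'] \models \phi[x/z_e]$ iff $Z, \rho[z_e \mapsto e'] \models \phi$; this lets me convert the $[x/z_e]$ appearing in the second big conjunction into an environment update.

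For the forward direction, suppose $\mc C \eqb h \mc D$ via an H bisimulation $\mc R$. I will prove, by induction on $c(\mc C) - \card X$, that whenever $\mc R(X, Y, f)$ (viewed symmetrically so $f$ maps $X \subseteq E_{\mc C}$ to $Y \subseteq E_{\mc D}$) we have $Y, \rho_f \models \Char{h}{X}$; taking $X = Y = \emptyset$ yields $\mc D \models \Char{h}{\mc C}$. The $\theta'_X$ conjunct holds by Lemma~\ref{lem:dfro X}, since $f$ is an order isomorphism $X \isom Y$. For the first big conjunction, each transition $X \tranc e X'$ is matched, by the H clause, by some $Y \trand{e'} Y'$ with $\lab(e') = \lab(e)$ and $\mc R(X', Y', f')$ with $f' = f \cup \{(e, e')\}$; the inductive hypothesis gives $Y', \rho_{f'} \models \Char{h}{X'}$, and $\rho_{f'} = \rho_f[z_e \mapsto e']$ witnesses the forward diamond. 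For the second big conjunction, fix $a \in \Act$ and an arbitrary $Y \trand{e'} Y'$ with $\lab(e') = a$; the H clause produces $X \tranc e X'$ with $\lab(e) = a$ and $\mc R(X', Y', f')$ where $f' = f \cup \{(e, e')\}$, the IH yields $Y', \rho_{f'} \models \Char{h}{X'}$, and the substitution lemma rewrites this as $Y', \rho_f[x \mapsto e'] \models \Char{h}{X'}[x/z_e]$, exactly the required disjunct.

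For the converse, assume $\mc D \models \Char{h}{\mc C}$ and define
\[
\mc R(X, Y, f) \iff f \text{ is an order isomorphism } X \isom Y \text{ and } Y, \rho_f \models \Char{h}{X}.
\]
The initial triple is in $\mc R$ by hypothesis (using $\theta'_\emptyset = \ttrue$). Given $\mc R(X, Y, f)$, the isomorphism condition of H is immediate. For a forward move $X \tranc e X'$, the corresponding $\fd{z_e:\lab(e)}\Char{h}{X'}$ conjunct supplies $Y \trand{e'} Y'$ with $\lab(e') = \lab(e)$ and $Y', \rho_f[z_e \mapsto e'] \models \Char{h}{X'}$; setting $f' = f \cup \{(e, e')\}$ gives $\rho_{f'} = \rho_f[z_e \mapsto e']$, and the $\theta'_{X'}$ conjunct inside $\Char{h}{X'}$ then forces $f'$ to be an order isomorphism $X' \isom Y'$ (since the structure of $\theta'_{X'}$ determines the iso via the assignment of the $z_d$, by Lemma~\ref{lem:dfro X} and its proof). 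Thus $\mc R(X', Y', f')$ with $f' \res X = f$. For a reverse-direction match of a $Y \trand{e'} Y'$ move, the second big conjunction (with the substitution lemma) yields some matching $X \tranc e X'$ with $\lab(e) = \lab(e')$ and $Y', \rho_{f'} \models \Char{h}{X'}$, so $\mc R(X', Y', f')$.

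The main obstacle is the bookkeeping around the substitution $\Char{h}{X'}[x/z_e]$: the identifier $x$ plays the role of a placeholder for the eventually-chosen $e$ in $\mc C$, and one has to verify the standard renaming lemma for EIL carefully enough that updating $x$ and then substituting $z_e$ for $x$ matches the intended environment update $\rho_f[z_e \mapsto e']$. Everything else is a direct translation of the two clauses of an H bisimulation into the two big conjunctions of $\Char{h}{X}$, with termination of the recursion guaranteed by finiteness of $\mc C$ and the fact that $X \subsetneq X'$ in each recursive call.
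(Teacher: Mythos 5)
Your proof is correct, and its overall shape --- the recursively defined characteristic formula, and the converse direction obtained by setting $\mc R(X,Y,f)$ iff $f:X\isom Y$ and $Y,\rho_f\models\Char{h}{X}$ (where $\rho_f$ maps $z_e$ to $f(e)$, the paper's $f\circ\rho_X$) and reading the two H-bisimulation clauses off the two big conjunctions --- matches the paper's. The one genuine difference is in the forward direction. The paper proceeds in two modular steps: it first shows $\mc C\models\Char{h}{\mc C}$ by an induction on the distance from $X$ to a maximal configuration, using only the identity isomorphism, and then, since $\Char{h}{\mc C}$ is a closed formula of $\EIL{h}$, transfers satisfaction to $\mc D$ by invoking the already-established Theorem~\ref{thm:H=EILh}. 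You instead prove directly, by induction on $c(\mc C)-\card{X}$, that every triple $(X,Y,f)$ of an H~bisimulation satisfies $Y,\rho_f\models\Char{h}{X}$; this is self-contained but in effect re-proves the relevant instance of the $(\Rightarrow)$ half of Theorem~\ref{thm:H=EILh} specialised to the formulas $\Char{h}{X}$. Both are sound; the paper's version is shorter given that the theorem is already available, while yours keeps the role of the isomorphism in the environment explicit throughout. One small point of care in your version: for the $\theta'_X$ conjunct, Lemma~\ref{lem:dfro X} by itself only yields $\exists\rho.\;Y,\rho\models\theta'_X$; to obtain satisfaction under the \emph{specific} environment $\rho_f$ you should apply the isomorphism-preservation result, Lemma~\ref{lem:ro isom}, to $X,\rho_X\models\theta'_X$ and note that $f\circ\rho_X=\rho_f$. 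Symmetrically, in the converse, concluding that $f'$ itself (not merely some bijection) is an order isomorphism from $Y',\rho_{f'}\models\theta'_{X'}$ relies on the construction inside the proof of Lemma~\ref{lem:dfro X}, which you correctly flag.
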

%\begin{proof}
%See Appendix~\ref{sec:proofs char}.
%\end{proof}
WH is even easier as formulas are closed:
\begin{definition}
Suppose that $\Act$ is finite.
Let $\mc C$ be a finite stable configuration structure.
We define formulas $\Char{wh}{X}$ ($X$ a configuration of $\mc C$) as follows:
\[
\Char{wh}{X} \Defeq \theta_X
\ \aand \  (\Aand_{X \tranc a X'} \fd a \Char{wh}{X'})
\ \aand \  (\Aand_{a \in \Act}\fb a \Oor_{X \tranc a X'} \Char{wh}{X'})
%\ \aand \ (\Aand_{a \in \Act,X \not\tranc a} \neg\fd {x:a}\ttrue)
\]
Here $\theta_X \in \EIL{ro}$ is as in Lemma~\ref{lem:ro X}.
We further let $\Char{wh}{\mc C} \Defeq \Char{wh}{\emptyset}$.
\end{definition}
Note that $\Char{wh}{\mc C} \in \EIL{wh}$ and $\depth{\Char{wh}{X}} \leq 2.c(\mc C)$.
%where $c = \max\{\card X : X \in C_{\mc C}\}$.
\begin{proposition}\label{prop:char wh}
Suppose that $\Act$ is finite.
Let $\mc C,\mc D$ be finite stable configuration structures.
Then $\mc D \eqb{wh} \mc C$ iff $\mc D \models \Char{wh}{\mc C}\;$.
\end{proposition}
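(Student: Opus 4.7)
The plan is to match $\Char{wh}{\mc C}$ with a WH bisimulation in both directions. Note first that $\Char{wh}{X}$ is well-defined, since $X \subsetneq X'$ along each forward transition $X \tranc a X'$ and configurations of $\mc C$ are bounded in size by $c(\mc C)$; the recursion thus bottoms out at maximal configurations, where the middle conjunction is empty and the universal $\fb a$ conjuncts become $\fb a \ffalse$, asserting the absence of any $a$-successor.

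For the ($\Rightarrow$) direction, let $\mc R$ be a WH bisimulation between $\mc C$ and $\mc D$. I would prove by induction on $c(\mc C) - \card X$ that $\mc R(X, Y)$ implies $Y \models \Char{wh}{X}$. The conjunct $\theta_X$ is satisfied because $\mc R(X, Y)$ gives $X \isom Y$ (hence $\card X = \card Y$), and Lemma~\ref{lem:ro X} applies. For each $X \tranc a X'$ the bisimulation yields $Y \trand a Y'$ with $\mc R(X', Y')$, and the induction hypothesis gives $Y' \models \Char{wh}{X'}$, validating the $\fd a \Char{wh}{X'}$ conjunct. Dually, any $Y \trand a Y'$ is matched by some $X \tranc a X'$ with $\mc R(X', Y')$, so $Y'$ lies in the disjunction indexed by $a$-successors of $X$, establishing the $\fb a$ conjunct. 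Instantiating at $(\emptyset, \emptyset)$ gives $\mc D \models \Char{wh}{\mc C}$.

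For the ($\Leftarrow$) direction, assume $\mc D \models \Char{wh}{\mc C}$ and define $\mc R$ as the smallest relation containing $(\emptyset, \emptyset)$ such that whenever $(X, Y) \in \mc R$: for each $X \tranc a X'$ some $Y \trand a Y'$ with $Y' \models \Char{wh}{X'}$ exists (delivered by the $\fd a$ conjunct of $\Char{wh}{X}$) and is added to $\mc R$; and for each $Y \trand a Y'$ some $X \tranc a X'$ with $Y' \models \Char{wh}{X'}$ exists (delivered by the $\fb a$ conjunct, where an empty disjunction at a dead end correctly prevents any $a$-move of $Y$) and is added. An easy induction on the construction shows that $Y \models \Char{wh}{X}$ and $\card X = \card Y$ for every $(X, Y) \in \mc R$, so $Y \models \theta_X$ combined with Lemma~\ref{lem:ro X} yields $X \isom Y$. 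The transition-matching clauses then hold by construction, making $\mc R$ a WH bisimulation.

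The only delicate point is the cardinality side-condition in Lemma~\ref{lem:ro X}, but this is immediate from the inductive construction of $\mc R$: cardinalities agree at $(\emptyset, \emptyset)$ and are preserved by each forward move on both sides. Overall the argument is notably simpler than that for Theorem~\ref{thm:char hh}, since WH requires only order isomorphism at each step (with no event mapping to thread through the induction) and only forward transitions need be matched.
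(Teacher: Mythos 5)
Your proof is correct and takes essentially the same route as the paper's: the ($\Leftarrow$) direction is virtually identical (a WH bisimulation given by $Y \models \Char{wh}{X}$ together with $X \isom Y$, with the cardinality bookkeeping feeding Lemma~\ref{lem:ro X}, and the non-empty disjunction handling $Y$'s moves). The only divergence is in ($\Rightarrow$): the paper first shows $\mc C \models \Char{wh}{\mc C}$ by induction on the distance to a maximal configuration and then transfers this to $\mc D$ via Theorem~\ref{thm:WH=EILwh}, whereas you induct directly over the WH bisimulation to get $\mc R(X,Y) \Rightarrow Y \models \Char{wh}{X}$ — a self-contained variant that in effect re-derives the relevant instance of the soundness half of Theorem~\ref{thm:WH=EILwh}.
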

%\begin{proof}
%See Appendix~\ref{sec:proofs char}.
%\end{proof}

\section{Conclusions and future work}

We have introduced a logic which uses event identifiers to track events in both forwards and reverse directions.  As we have seen, this enables it to express causality and concurrency between events.
The logic is strong enough to characterise hereditary history-preserving (HH) bisimulation equivalence.  We are also able to characterise weaker equivalences using sublogics.  In particular we can characterise weak history-preserving bisimulation,
which has not been done previously as far as we are aware.  We also investigated characteristic formulas for our logic with respect to HH and other equivalences.
Again we are not aware of previous work on characteristic formulas for logics for true concurrency.

Baldan and Crafa~\cite{BC10} gave logics for pomset bisimulation and step bisimulation;
we have also been able to characterise these equivalences in our setting,
but we had to omit this material for reasons of space.

In future work we would like to (1) investigate general laws which hold for the logic,
(2) look at sublogics characterising other true concurrency equivalences,
including equivalences involving reverse transitions from~\cite{Bed91,PU11},
and (3) answer the open question raised in Section~\ref{sec:char} about whether there
is a single characteristic formula for a finite structure with respect to HH equivalence.
\Comment{
\begin{itemize}
	\item 
Look at natural fragments of $\EIL{}$.  Examples might include the negation-free fragment, or where Boolean operations are only on closed formulas.
	\item 
Investigate equational laws.
%Some examples are given in Section~\ref{sec:laws}, but we shan't be including much on laws in our workshop submission.
We can argue that it will be easier to find laws in $\EIL{}$ compared to Baldan \& Crafa's logic.  Our syntax seems simpler/more standard.
\item
Look at logics characterising reverse equivalences from~\cite{Bed91,PU11}, such as RI-IB (easy), RS-SB (easy), RP-PB, H-WHPB, etc.
\item
Answer the question of whether there is a single characteristic formula $\Char{hh}{\mc C}$.
\end{itemize}
}

\paragraph{Acknowledgements.}  We are grateful to Ian Hodkinson and the anonymous referees for helpful comments and suggestions.

\bibliographystyle{eptcs}
%\bibliography{iain}

\Comment{
\newpage
\appendix
%\section{Proofs of results in Section~\ref{sec:EIL}}\label{sec:proofs EIL}
\section{Proofs of results in Section~\ref{subsec:EILro}}\label{sec:proofs EILro}

Before proving Lemma~\ref{lem:ro isom}, we state a standard lemma which we shall need.

\begin{lemma}\label{lem:env agree}
Let $X$ be a configuration of a stable configuration structure $\mc C$,
and let $\phi \in \EIL{}$.
Suppose that $\rho,\rho'$ are permissible environments for $\phi$ and $X$, which agree on $\Fi\phi$.
Then $X,\rho \models \phi$ iff $X,\rho' \models \phi$.
\end{lemma}
\begin{proof}
%***This is a very standard result, and there seem to be no special features.
%This can go in an appendix, or even be omitted altogether.***
Recall that we use $\rho_\phi$ as an abbreviation for $\rho\res\Fi\phi$,
We show $X,\rho \models \phi$ iff $X,\rho_\phi \models \phi$
by induction on $\phi$.
Then the result follows,
since $\rho_\phi = \rho'_\phi$.
\begin{itemize}
\item $\phi = \ttrue$.
Then $X,\rho \models \ttrue$ always.
So $X,\rho \models \phi$ iff $X,\rho_\phi \models \phi$
\item $\phi = \neg \phi'$.
Then $X,\rho \models \neg \phi'$ iff $X,\rho \not\models \phi'$ iff $X,\rho_{\phi'} \not\models \phi'$ iff $X,\rho_{\phi'} \models \neg\phi'$.
\item $\phi = \phi_1 \aand \phi_2$.
Then $X,\rho \models \phi_1 \aand \phi_2$ \\
iff
$X,\rho \models \phi_1$
and $X,\rho \models \phi_2$ \\
iff 
$X,\rho_{\phi_1} \models \phi_1$
and $X,\rho_{\phi_2} \models \phi_2$ \\
iff 
$X,(\rho_{\phi_1\union\phi_2})_{\phi_1} \models \phi_1$
and $(\rho_{\phi_1\union\phi_2})_{\phi_2} \models \phi_2$ \\
iff 
$X,\rho_{\phi_1\union\phi_2} \models \phi_1$
and $\rho_{\phi_1\union\phi_2} \models \phi_2$ \\
iff 
$X,\rho_{\phi_1\union\phi_2} \models \phi_1 \aand \phi_2$.
\item $\phi = \fd{x:a}\phi'$.
Then
$X,\rho \models \fd{x:a}\phi'$ \\
iff $\exists X',e$.\ 
$X \tran e X'$, $\lab(e) = a$,
$X',\rho[x \mapsto e] \models \phi'$ \\
iff $\exists X',e$.\ 
$X \tran e X'$, $\lab(e) = a$,
$X',(\rho[x \mapsto e])_{\phi'} \models \phi'$ \\
iff $\exists X',e$.\ 
$X \tran e X'$, $\lab(e) = a$,
$X',(\rho_{\fd{x:a}\phi'}[x \mapsto e])_{\phi'} \models \phi'$ \\
iff $\exists X',e$.\ 
$X \tran e X'$, $\lab(e) = a$,
$X',\rho_{\fd{x:a}\phi'}[x \mapsto e] \models \phi'$ \\
iff $X,\rho_{\fd{x:a}\phi'} \models \fd{x:a}\phi'$.
\item $\phi = \dec{x:a}\phi'$.
%[We could just state that this case is similar to the preceding one.]
Then
$X,\rho \models \dec{x:a}\phi'$ \\
iff $\exists e \in X$.\ 
$\lab(e) = a$ and
$X,\rho[x \mapsto e] \models \phi'$ \\
iff $\exists e \in X$.\ 
$\lab(e) = a$ and
$X,(\rho[x \mapsto e])_{\phi'} \models \phi'$ \\
iff $\exists e \in X$.\ 
$\lab(e) = a$ and
$X,(\rho_{\dec{x:a}\phi'}[x \mapsto e])_{\phi'} \models \phi'$ \\
iff $\exists e \in X$.\ 
$\lab(e) = a$ and
$X,\rho_{\dec{x:a}\phi'})[x \mapsto e] \models \phi'$ \\
iff $X,\rho_{\dec{x:a}\phi'} \models \dec{x:a}\phi'$.
\item $\phi = \rd x \phi'$.
Then
$X,\rho \models \rd x \phi'$ \\
iff $\exists X',e$.\ 
$X \rtran e X'$, $\rho(x) = e$
and $X',\rho \models \phi'$ \\
iff $\exists X',e$.\ 
$X \rtran e X'$, $\rho(x) = e$
and $X',\rho_{\phi'} \models \phi'$ \\
iff $\exists X',e$.\ 
$X \rtran e X'$, $\rho(x) = e$
and $X',(\rho_{\rd x \phi'})_{\phi'} \models \phi'$ \\
iff $\exists X',e$.\ 
$X \rtran e X'$, $\rho(x) = e$
and $X',\rho_{\rd x \phi'} \models \phi'$ \\
iff $X,\rho_{\rd x \phi'} \models \rd x \phi'$.
\qedhere
\end{itemize}
\end{proof}

\begin{lemmarep}[\ref{lem:ro isom}]
Let $\mc C,\mc D$ be stable configuration structures,
and let $X,Y$ be configurations of $\mc C, \mc D$ respectively.
Suppose that $f: X \isom Y$.
Then for any $\phi \in \EIL{ro}$, and any $\rho$ (permissible environment for $\phi$ and $X$),
we have $X, \rho \models \phi$ iff $Y, f \circ \rho_\phi \models \phi$.
\end{lemmarep}
\begin{proof}
By induction on $\phi$.
Note that if $\rho$ is a permissible environment for $\phi$ and $X$ then 
$f \circ \rho_\phi$ is a permissible environment for $\phi$ and $Y$.
%The cases for $\ttrue$, negation and conjunction are straightforward.
\begin{itemize}
\item
$X,\rho \models \ttrue$
iff $Y, f \circ \rho_\phi \models \ttrue$.
\item
$X,\rho \models \neg\phi$
iff $X,\rho \not\models \phi$
iff $Y, f \circ \rho_\phi \not\models \phi$
iff $Y, f \circ \rho_{\neg\phi} \models \neg\phi$.
\item
$X,\rho \models \phi_1 \aand \phi_2$
iff $X,\rho \models \phi_1$ and $X,\rho \models \phi_2$ \\
iff $Y, f \circ \rho_{\phi_1} \models \phi_1$ and $Y, f \circ \rho_{\phi_2} \models \phi_2$ \\
iff $Y, f \circ \rho_{\phi_1 \aand \phi_2} \models \phi_1$ and $Y, f \circ \rho_{\phi_1 \aand \phi_2} \models \phi_2$ (using Lemma~\ref{lem:env agree}) \\
iff $Y, f \circ \rho_{\phi_1 \aand \phi_2} \models \phi_1 \aand \phi_2$
\item
Suppose $X, \rho \models \dec{x:a} \phi$.
Then there is $e \in X$ such that $\lab(e) = a$ and
$X, \rho[x \mapsto e] \models \phi$.
By induction hypothesis,
$Y, f \circ (\rho[x \mapsto e])_\phi \models \phi$.
Then $Y, (f \circ \rho_{\dec{x:a}\phi})[x \mapsto f(e)] \models \phi$ and $\lab(f(e)) = a$.
Hence $Y, f \circ \rho_{\dec{x:a}\phi} \models \dec{x:a} \phi$.

Conversely, if $Y, f \circ \rho_{\dec{x:a}\phi} \models \dec{x:a} \phi$ then $X, \rho \models \dec{x:a} \phi$.

\item
Suppose $X, \rho \models \rd x \phi$.
Then $X \rtranc{e} X'$ with $\rho(x) = e$ and $X', \rho \models \phi$.
Let $e' = f(e)$, $Y' = Y \setminus \{e'\}$ and let $f' = f \setminus \{(e,e')\}$.
Then $Y \rtrand{e'} Y'$ and $f': X' \isom Y'$.
By induction hypothesis,
$Y', f' \circ \rho_\phi \models \phi$.
Hence $Y, f \circ \rho_{\rd x \phi} \models \phi$
and $Y, f \circ \rho_{\rd x \phi} \models \rd x \phi$
as required.

Conversely, if $Y, f \circ \rho_{\rd x \phi} \models \rd x \phi$ then $X, \rho \models \rd x\phi$.
\qedhere
\end{itemize}
\end{proof}
\begin{lemmarep}[\ref{lem:dfro X}]
Let $X$ be a configuration of a stable configuration structure $\mc C$.
Let $\{z_e :e \in X\}$ be distinct identifiers.
Let the environment $\rho_X$ be defined by $\rho_X(z_e) = e$ ($e \in X$).
There is a formula $\theta'_X \in \EIL{dfro}$ with $\Fi{\theta'} = \{z_e :e \in X\}$,
such that $X,\rho_X \models \theta'_X$ and
if $Y$ is any configuration of a stable configuration structure $\mc D$
and $\card Y = \card X$, then $Y \isom X$ iff 
$\exists \rho.\;Y, \rho \models \theta'_X$.
\end{lemmarep}
\begin{proof}
The proof is really already contained in the proof of Lemma~\ref{lem:ro X}.

Let $\card X = n$, and let $\theta'_X$, $\rho_X$ be defined as in the proof of Lemma~\ref{lem:ro X},
except that we change $z_i$ to $z_{e_i}$ ($i = 1,\ldots,n$).
Then $X,\rho_X \models \theta'_X$.
Also if we take any $Y$ with $\card Y = \card X$, and suppose
$Y, \rho \models \theta'_X$, then we can deduce that $Y \isom X$.

Conversely, suppose that $Y \isom X$ via isomorphism $f:X \to Y$.
Since $X,\rho_X \models \theta'_X$ we have $Y,\rho \models \theta'_X$ for some $\rho$ by Lemma~\ref{lem:ro isom}.
\end{proof}

\section{Proofs of results in Section~\ref{subsec:hist}}\label{sec:proofs hist}

Before proving Theorem~\ref{thm:HH=EIL} we state a lemma:

\begin{lemma}\label{lem:env fresh}
Let $X$ be a configuration of a stable configuration structure $\mc C$,
and let $\phi \in \EIL{}$.
Suppose that $\sigma$ maps $\Fi\phi$ (not necessarily injectively) to a set of fresh identifiers
(in particular not occurring either free or bound in $\phi$),
$\rho$ is an environment for $\phi$ and $X$,
$\rho'$ is an environment for $\sigma(\phi)$ and $X$,
and for any $x \in \Fi\phi$ we have $\rho(x) = \rho'(\sigma(x))$.
Here $\sigma(\phi)$ is got by replacing each occurrence of a free identifier $x$ in $\phi$ by $\sigma(x)$.

Then $X,\rho \models \phi$ iff $X,\rho' \models \sigma(\phi)$.
\end{lemma}
\begin{proof}
Note that we allow $\sigma,\rho,\rho'$ to be non-injective.
Note also that we effectively define $\sigma(\phi)$ by induction on $\phi$ during the course of the proof.

By induction on $\phi$:
\begin{itemize}
	\item 
$X,\rho \models \ttrue$
iff $X,\rho' \models \sigma(\ttrue) = \ttrue$.
	\item 
$X,\rho \models \neg\phi$
iff $X,\rho \not\models \phi$
iff $X,\rho' \not\models \sigma(\phi)$
iff $X,\rho' \models \neg\sigma(\phi) = \sigma(\neg\phi)$.
	\item 
$X,\rho \models \phi_1 \aand \phi_2$ \\
iff $X,\rho \models \phi_1$ and $X,\rho \models \phi_2$ \\
iff $X,\rho' \models \sigma_1(\phi_1)$ and $X,\rho' \models \sigma_2(\phi_2)$ \\
iff $X,\rho' \models \sigma_1(\phi_1) \aand \sigma_2(\phi_2) = \sigma(\phi_1 \aand \phi_2)$.

Here we let $\sigma_i = \sigma \res \Fi{\phi_i}$ ($i = 1,2$).
\item
$X,\rho \models \fd{x:a}\phi$ \\
iff $\exists X',e$.\ 
$X \tran e X'$, $\lab(e) = a$,
$X',\rho[x \mapsto e] \models \phi$ \\
iff $\exists X',e$.\ 
$X \tran e X'$, $\lab(e) = a$,
$X',\rho'[x \mapsto e] \models \sigma'(\phi)$ \\
iff $X,\rho' \models \fd{x:a}\sigma'(\phi) = \sigma(\fd{x:a}\phi)$.

Here we let $\sigma' = \sigma$ if $x \notin \Fi\phi$,
and $\sigma' = \sigma[x \mapsto x]$ if $x \in \Fi\phi$.
\item
$X,\rho \models \dec{x:a}\phi$ \\
iff $\exists e \in X$.\ 
$\lab(e) = a$ and
$X,\rho[x \mapsto e] \models \phi$ \\
iff $\exists e \in X$.\ 
$\lab(e) = a$ and
$X,\rho'[x \mapsto e] \models \sigma'(\phi)$ \\
iff $X,\rho' \models \dec{x:a}\sigma'(\phi) = \sigma(\dec{x:a}\phi)$.

Here we again let $\sigma' = \sigma$ if $x \notin \Fi\phi$,
and $\sigma' = \sigma[x \mapsto x]$ if $x \in \Fi\phi$.
\item
$X,\rho \models \rd x \phi$ \\
iff $\exists X',e$.\ 
$X \rtran e X'$, $\rho(x) = e$
and $X',\rho \models \phi$ \\
iff $\exists X',e$.\ 
$X \rtran e X'$, $\rho'(\sigma(x)) = e$
and $X',\rho' \models \sigma'(\phi)$ \\
iff $X,\rho' \models \rd {\sigma(x)} \sigma'(\phi) = \sigma(\rd x \phi)$.

Here we let $\sigma' = \sigma \setminus x$ if $x \notin \Fi\phi$,
and $\sigma' = \sigma$ if $x \in \Fi\phi$.
\qedhere
\end{itemize}
\end{proof}

\begin{theoremrep}[\ref{thm:HH=EIL}]
Let $\mc C,\mc D$ be stable configuration structures.
Then, $\mc C \eqb{hh} \mc D$ if and only if $\mc C \eql{\EIL{}} \mc D$.
\end{theoremrep}
\begin{proof}
($\Rightarrow$)
Let $\mathcal{R}$ be an HH bisimulation between $\mc C$ and $\mc D$.
We show by induction on $\phi$ that for all $X,Y,f$,
if $\mathcal{R}(X,Y,f)$ then for all $\phi \in \EIL{}$
and all $\rho$ (permissible environment for $\phi$ and $X$) we have 
$X, \rho \models \phi$ iff $Y, f \circ \rho_\phi \models \phi$.
Recall that $\rho_\phi$ is an abbreviation for $\rho\res\Fi\phi$.

Note that if $\rho$ is a permissible environment for $\phi$ and $X$
then $f \circ \rho_\phi$ is a permissible environment for $\phi$ and $Y$.

By considering initial (empty) configurations,
our induction hypothesis implies that $\mc C \eql{\EIL{}} \mc D$.

So suppose $\mathcal{R}(X,Y,f)$:
\begin{itemize}
\item 
Clearly $X,\rho \models \ttrue$ iff
$Y, f \circ \rho_\ttrue \models \ttrue$.
\item
$X,\rho \models \neg \phi$
iff $X,\rho \not\models \phi$ \\
iff $Y, f \circ \rho_\phi \not\models \phi$ (using the induction hypothesis) \\
iff $Y, f \circ \rho_{\neg\phi} \not\models \phi$
iff $Y, f \circ \rho_{\neg\phi} \models \neg \phi$.

\item
$X,\rho \models \phi_1 \aand \phi_2$
iff $X,\rho \models \phi_1$ and $X,\rho \models \phi_2$ \\
iff $Y, f \circ \rho_{\phi_1} \models \phi_1$ and $Y, f \circ \rho_{\phi_2} \models \phi_2$ (using the induction hypothesis) \\
iff $Y, f \circ \rho_{\phi_1 \aand \phi_2} \models \phi_1$ and $Y, f \circ \rho_{\phi_1 \aand \phi_2} \models \phi_2$ (using Lemma~\ref{lem:env agree})\\
iff $Y, f \circ \rho_{\phi_1 \aand \phi_2} \models \phi_1 \aand \phi_2$.

\item
Suppose $X, \rho \models \fd{x:a}\phi$.
Then $X\tranc{e} X'$ for some $X',e$ such that $\lab(e)= a$
and $X', \rho[x \mapsto e] \models \phi$.

Since $\mathcal{R}(X,Y,f)$
there are $Y',e',f'$ such that $\lab(e') = a$, $Y\trand{e'} Y'$, $\mathcal{R}(X',Y',f')$ and $f'=f\cup\{(e,e')\}$.

By induction hypothesis, $Y', f' \circ (\rho[x \mapsto e])_\phi \models \phi$.
Hence $Y', (f \circ \rho_{\fd{x:a}\phi})[x \mapsto e'] \models \phi$.
So $Y, f \circ \rho_{\fd{x:a}\phi} \models \fd{x:a} \phi$ as required.

The converse is similar.

\item
Suppose $X, \rho \models \dec{x:a} \phi$.
Then there is $e \in X$ such that $\lab(e)= a$
and $X, \rho[x \mapsto e] \models \phi$.
By inductive hypothesis, $Y, f \circ (\rho[x \mapsto e])_\phi \models \phi$.
So $Y, (f \circ \rho_{\dec{x:a}\phi})[x \mapsto f(e)] \models \phi$.
Clearly $\lab(f(e)) = a$.
Hence $Y, {f \circ \rho_{\dec{x:a}\phi}} \models \dec{x:a}\phi$.

The converse is similar.
\item
Suppose $X, \rho \models \rd{x} \phi$.
Let $e = \rho(x)$ and $X' = X \setminus \{e\}$.
Then $X \rtranc{e} X'$
and $X', \rho \models \phi$.

Since $\mathcal{R}(X,Y,f)$
we get $Y',e',f'$ such that $Y \rtrand{e'} Y'$, $\mathcal{R}(X',Y',f')$
and $f' = f \setminus \{(e,e')\}$.
By induction hypothesis,
$Y', f' \circ \rho_\phi \models \phi$.
So $Y', f \circ \rho_{\rd{x} \phi} \models \phi$.
Hence $Y, f \circ \rho_{\rd{x} \phi} \models \rd{x}\phi$
as required.

The converse is similar.

\end{itemize}

($\Leftarrow$)
Suppose that $\mc C \eql{\EIL{}} \mc D$.
Define $\mathcal{R}(X,Y,f)$ iff
\begin{itemize}
\item
$f$ is an order isomorphism between $X$ and $Y$
\item
for any $\phi \in \EIL{}$ and $\rho$ (permissible environment for $\phi$ and $X$) with $\rge\rho \subseteq X$ we have
$X, \rho \models \phi$ iff $Y, f \circ \rho \models \phi$.

(Note that by considering negated formulas,
$X, \rho \models \phi$ iff $Y, f \circ \rho \models \phi$
is equivalent to
$X, \rho \models \phi$ implies $Y, f \circ \rho \models \phi$.)

\end{itemize}
We show that $\mathcal R$ is an HH bisimulation.
Clearly $\mathcal{R}(\emptyset,\emptyset,\emptyset)$ since $\mc C \eql{\EIL{}} \mc D$.
Assume $\mathcal{R}(X,Y,f)$:
\begin{enumerate}

\item
Suppose that $X\tranc{e} X'$ with $\lab(e) = a$, and for all $e', Y'$ such that $Y\trand{e'} Y'$ with $\lab(e')= a$
we have $\neg\mc R(X',Y',f')$, where $f'=f\cup \{(e,e')\}$. There
are only finitely many such $e'$ due to image-finiteness of our configuration
structures.
Let all such $e', Y', f'$ be $e_i, Y_i, f_i$ for $i\in I$.
For each $i$, since $\neg\mc R(X',Y_i,f_i)$,
at least one of the following holds:
\begin{enumerate}
\item \label{formula env}
there are $\phi_i, \rho_i$ with $\rge{\rho_i} \subseteq X'$, such that
$X', \rho_i \models \phi_i$ and 
$Y_i, f_i \circ \rho_i \not \models \phi_i$.
\item\label{order env}
$f_i$ is not an order isomorphism between $X'$ and $Y_i$.
\end{enumerate}

Let $\{z_{e'} : e' \in X'\}$ be a set of fresh distinct identifiers.
Let the environment $\rho_{X'}$ be defined by
$\rho_{X'}(z_{e'}) = e'$ (all $e' \in X'$).
We are going to standardise all formulas to use this environment, so that we can conjoin them.
Similarly, let $\rho_X = \rho_{X'} \setminus z_e$.

In each of cases~(\ref{formula env}) and~(\ref{order env}) we shall obtain $\psi_i$ such that
$X', \rho_{X'} \models \psi_i$ and 
$Y_i, f_i \circ  \rho_{X'} \not\models \psi_i$.

In case~(\ref{formula env}),
we have $X', \rho_i \models \phi_i$ and $Y_i, f_i \circ \rho_i \not \models \phi_i$.
Let $\sigma_i$ be defined by
$\sigma_i(x) = z_{\rho_i(x)}$ for $x \in \Fi{\phi_i}$.
Let $\psi_i = \sigma_i(\phi_i)$, which is got by replacing each free identifier $x$ in $\phi_i$ by $\sigma_i(x)$.
Clearly $\rho_i(x) = \rho_{X'}(\sigma_i(x))$ for each $x \in \Fi{\phi_i}$.
Then $X', \rho_{X'} \models \psi_i$ by Lemma~\ref{lem:env fresh}.
Similarly,
$f_i \circ \rho_i(x) = f_i \circ \rho_{X'}(\sigma_i(x))$ for each $x \in \Fi{\phi_i}$,
and so $Y_i, f_i \circ  \rho_{X'} \not\models \psi_i$,
again by Lemma~\ref{lem:env fresh}.

\Comment{
% Version 1
In case (\ref{order env}),
there is $e'_i \in X$ such that we do not have $e'_i < e$ iff $f(e'_i) < e_i$.
Suppose that $e'_i \not< e$ and $f(e'_i) < e_i$.
Then at $Y_i$ we can only reverse $f(e'_i)$ after reversing $e_i$,
whereas at $X'$ we can reverse a series of events not including $e$,
culminating in reversing $e'_i$.
Let this chain of events be $e_{i1},\ldots,e_{ik_i},e'_i$.
Let $\psi_i \Defeq \rd{z_{e_{i1}}}\cdots\rd{z_{e_{ik_i}}}\rd{z_{e'_i}}\ttrue$.
Then 
$X', \rho_{X'} \models \psi_i$ and 
$Y_i, f_i \circ \rho_{X'} \not\models \psi_i$.

Suppose instead that $e'_i < e$ and $f(e'_i) \not< e_i$.
Then at $X'$ we can only reverse $e'_i$ after reversing $e$,
whereas at $Y_i$ we can reverse a series of events not including $e_i$,
culminating in reversing $f(e'_i)$.
Let this chain of events be $f(e_{i1}),\ldots,f(e_{ik_i}),f(e'_i)$.
Let $\psi_i \Defeq \neg \rd{z_{e_{i1}}}\cdots\rd{z_{e_{ik_i}}}\rd{z_{e'_i}}\ttrue$.
Then again
$X', \rho_{X'} \models \psi_i$ and 
$Y_i, f_i \circ \rho_{X'} \not\models \psi_i$.
}

% Version 2
In case (\ref{order env}),
let $\psi_i \Defeq \theta'_{X'}$ as in Lemma~\ref{lem:dfro X}.
Then $X', \rho_{X'} \models \psi_i$, by Lemma~\ref{lem:dfro X}.
Also $Y_i, f_i \circ \rho_{X'} \not\models \psi_i$, again by Lemma~\ref{lem:dfro X},
noting that $\card{Y_i} = \card{X'}$.

\Comment{
% Version 3 An attempt to keep the modal depth of the formula as low as possible
In case (\ref{order env}),
there is $e'_i \in X$ such that we do not have
$e'_i <_{X'} e$ iff $f_i(e'_i) <_{Y_i} e_i$
(we subsequently suppress the subscripts on the orderings).
We can suppose that $e'_i$ is {\em maximal},
i.e.\ if $e' > e'_i$ then $e' < e$ iff $f_i(e') < e_i$.
This is because if $e'_i$ is not maximal then we can choose an increasing chain of $e'$ such that we do not have $e' < e$ iff $f_i(e') < e_i$.  This chain must eventually terminate since $X$ is finite.

Suppose firstly that $e'_i < e$ but $f_i(e'_i) \not< e_i$.
If there is $e'$ such that $e'_i < e' < e$
then by $f:X \isom Y$ we have $f_i(e'_i) < f_i(e')$,
and by maximality of $e'_i$ we have $f_i(e') < f_i(e)$.
This contradicts $f_i(e'_i) \not< e_i$.
So therefore $e'_i$ lies immediately below $e$.
Problem: $e'_i$ still may not be maximal in $X$ - it might indeed be minimal.
So we might have to reverse $\card{Y}-1$ elements of $Y$
before being able to reverse $f_i(e'_i)$.
This means that the formula we want is no better in depth than before.
}

Let $\Psi$ be $\bigwedge_{i\in I}  \psi_i$. It is clear that
$X', \rho_{X'} \models \Psi$,
i.e.\ $X', \rho_{X}[z_e \mapsto e] \models \Psi$,
and so $X, \rho_X \models \fd{z_e:a}\Psi$.
Also, for each $i \in I$ we have
$Y_i, f_i \circ \rho_{X'} \not\models \Psi$,
i.e.\ $Y_i, (f \circ \rho_{X})[z_e \mapsto e_i] \not\models \Psi$.
Hence, $Y, f \circ \rho_X \not\models \fd{z_e:a} \Psi$.
This contradicts $\mc R(X,Y,f)$.

\item
The case where $Y \trand{e} Y'$ is similar to the previous case.

\item
Suppose that $X\rtranc{e} X'$.  We must show
that $Y \rtrand{f(e)} Y' = Y \setminus\{f(e)\}$ and $\mathcal{R}(X',Y',f')$, where $f' = f \res X'$.
Certainly $f'$ is an order isomorphism between $X'$ and $Y'$.
To establish $Y \rtrand{f(e)} Y'$, note that $X, [z \mapsto e] \models \rd{z}\ttrue$.
Hence $Y, f \circ [z \mapsto e] \models \rd{z}\ttrue$.

Suppose that there are $\phi, \rho$ such that
$X', \rho \models \phi$
but $Y', f' \circ \rho \not\models \phi$.
Let $z$ be fresh.
Then $X, \rho[z \mapsto e] \models \rd{z}\phi$
but $Y, f \circ (\rho[z \mapsto e]) \not\models \rd{z}\phi$,
since $f \circ (\rho[z \mapsto e]) = (f' \circ \rho)[z \mapsto f(e)])$.
This contradicts $\mathcal{R}(X,Y,f)$.

\item
The case where $Y \rtrand{e} Y'$ is similar to the previous case.
\qedhere
\end{enumerate}
\end{proof}

\begin{theoremrep}[\ref{thm:H=EILh}]
Let $\mc C,\mc D$ be stable configuration structures.
Then, $\mc C \eqb{h} \mc D$ if and only if $\mc C \eql{\EIL{h}} \mc D$.
\end{theoremrep}
\begin{proof}
We adapt the proof of Theorem~\ref{thm:HH=EIL}.

($\Rightarrow$)
Let $\mathcal{R}$ be an H bisimulation between $\mc C$ and $\mc D$.
We show by induction on $\phi$ that for all $X,Y,f$,
if $\mathcal{R}(X,Y,f)$ then for all $\phi \in \EIL{h}$
and all $\rho$ (environment for $\phi$ and $X$),
we have $X, \rho \models \phi$ iff $Y, f \circ \rho \models \phi$.

The cases for $\ttrue$, negation, conjunction, $\fd {x:a} \phi$ and $\dec{x:a}\phi$
are as in the proof of Theorem~\ref{thm:HH=EIL}.
This only leaves the case of $\phi_r \in \EIL{ro}$.  Instead of using the main induction hypothesis, we use Lemma~\ref{lem:ro isom}.

($\Leftarrow$)
Suppose that $\mc C \eql{\EIL{h}} \mc D$.
Define $\mathcal{R}(X,Y,f)$ iff $f: X \isom Y$
and for any $\phi \in \EIL{h}$  and any $\rho$ (environment for $\phi$ and $X$)
we have $X, \rho \models \phi$ iff $Y, f \circ \rho \models \phi$.
We show that $\mathcal R$ is an H bisimulation.

The proof is the same as the part for forward transitions in the proof of Theorem~\ref{thm:HH=EIL}.
We just need to note that each $\psi_i$ and also $\Psi$ and $\fd {z_e:a} \Psi$ are formulas of $\EIL{h}$.
%The formula in the reverse direction would not be in $\EIL{nfar}$, but that does not matter, of course.
\end{proof}

\begin{theoremrep}[\ref{thm:WH=EILwh}]
Let $\mc C,\mc D$ be stable configuration structures.
Then, $\mc C \eqb{wh} \mc D$ iff $\mc C \eql{\EIL{wh}} \mc D$.
\end{theoremrep}
\begin{proof}
We can take all environments to be empty, since we are dealing with closed formulas.
Apart from the use of Lemmas~\ref{lem:ro isom} and~\ref{lem:ro X} to handle formulas of $\EIL{ro}$,
the proof is much as for standard Hennessy-Milner logic~\cite{HM85}.

($\Rightarrow$)
Let $\mathcal{R}$ be a WH bisimulation between $\mc C$ and $\mc D$.
We show by induction on $\phi$ that for all $X,Y$,
if $\mathcal{R}(X,Y)$ then $X \models \phi$ iff $Y \models \phi$.

So suppose that $\mc R(X,Y)$.
\begin{itemize}
\item 
The cases for $\ttrue$, negation and conjunction are straightforward.
\item
Suppose that $X \models \fd a \phi$.
Then for some $X'$ we have $X \tran a X'$,
and $X' \models \phi$.
Then there is $Y'$ such that $Y \tran a Y'$ and $\mc R(X',Y')$.
By the induction hypothesis, $Y' \models \phi$.
Hence $Y \models \fd a \phi$ as required.

The converse where $Y \models \fd a \phi$ is similar.
\item
The case of $\phi_{rc} \in \EIL{ro}$ follows from Lemma~\ref{lem:ro isom},
noting that $X \isom Y$.
\end{itemize}

\Comment{
Optional case for $\dec{x:a}\phi$ (Remark~\ref{rem:weilnfar dec}) just to see if it works:
Suppose that $X,\rho \models \dec{x:a} \phi$.
There are two cases (can they be combined to make a more elegant argument?):
\begin{enumerate}
\item
$x \in \Fi\phi$:
Then there is some $e \in X$ such that $\lab(e) = a$ and
$X,\rho[x\mapsto e] \models \phi$.
By the induction hypothesis $Y, f \circ (\rho[x\mapsto e]) \models \phi$.
This means that $Y, (f \circ \rho)[x\mapsto f(e)]) \models \phi$.
Since $\lab(f(e)) = a$, we have $Y, f \circ \rho \models \dec{x:a}\phi$ as required.
\item
$x \notin \Fi\phi$:
Then there is some $e \in X$ such that $\lab(e) = a$ and $X,\rho \models \phi$.
By the induction hypothesis $Y, f \circ \rho \models \phi$.
Also $f(e) \in Y$ and $\lab(f(e)) = a$.
Hence $Y, f \circ \rho \models \dec{x:a}\phi$ as required.
\end{enumerate}
The converse where $Y, f \circ \rho \models \dec{x:a} \phi$ is similar.
}
($\Leftarrow$)
Suppose that $\mc C \eql{\EIL{wh}} \mc D$.
Define $\mathcal{R}(X,Y)$ iff both $X \isom Y$
and for any $\phi \in \EIL{wh}$
we have $X \models \phi$ iff $Y \models \phi$.
We show that $\mathcal R$ is a WH bisimulation.

We proceed in a similar manner to the ($\Leftarrow$) direction in the proof of
Theorem~\ref{thm:HH=EIL}, though the details are different.

Clearly $\mc R (\emptyset,\emptyset)$, since $\mc C \eql{\EIL{wh}} \mc D$.

Suppose that $X\tranc a X'$ with, and for all $Y'$ such that $Y\trand a Y'$
we have $\neg\mc R(X',Y')$.
There are only finitely many such $Y'$.
Let all such $Y'$ be $Y_i$ for $i\in I$.
For each $i$, since $\neg\mc R(X',Y_i)$,
one of the following holds:
\begin{enumerate}
\item\label{order env weak}
$X' \not\isom Y_i$.
\item \label{formula env weak}
there is $\psi_i$ such that
$X' \models \psi_i$ and 
$Y_i \not \models \psi_i$.
\end{enumerate}

In case (\ref{order env weak}),
let $\psi_i$ be $\theta_{X'}$ as in Lemma~\ref{lem:ro X}.
Clearly $X' \models \psi_i$.
Since $\card {Y_i} = \card {X'}$
(both got by adding one event to isomorphic configurations $Y,X$),
it must be that
$Y_i \not\models \psi_i$.

Thus each of cases~(\ref{order env weak}) and~(\ref{formula env weak}) we have a formula $\psi_i$ of $\EIL{wh}$ such that
$X', \emptyset \models \psi_i$ and 
$Y_i, \emptyset \not\models \psi_i$.

\Comment{
In case~(\ref{formula env weak}),
given $f_i: X' \isom Y_i$, and $\rho_i$,
let $\psi_i$ be a formula {\em of smallest depth} such that
$X', \rho_i \models \psi_i$ and 
$Y_i, f_i \circ \rho_i \not \models \psi_i$.
Clearly $\psi_i$ cannot be in $\EIL{ro}$, by Lemma~\ref{lem:ro isom}.
Also it is easy to see that $\psi_i$ cannot be a conjunction or negated conjunction,
by minimality of depth.

%NB If we allow $\dec{x:a}\phi$ (Remark~\ref{rem:weilnfar dec}) then we have to consider another possible case here - not yet done.

Hence $\psi_i$ must be of the form $\fd a \phi_c$ or $\neg \fd a \phi_c$.
This means that $\psi_i$ is closed and $\rho_i = \emptyset$.
So $X', \emptyset \models \psi_i$ and 
$Y_i, \emptyset \not\models \psi_i$ as required.
}

Let $\Psi$ be $\bigwedge_{i\in I}  \psi_i$. It is clear that
$X' \models \Psi$ and so $X \models \fd a \Psi$.

Also, for each $i \in I$ we have
$Y_i \not\models \Psi$. Hence, 
$Y \not\models \fd a \Psi$.
This contradicts $\mc R(X,Y)$.

The case where $Y \trand a Y'$ is similar to that for $X \tranc a X'$.
\end{proof}

\begin{theoremrep}[\ref{thm:HWH=EILhwh}]
Let $\mc C,\mc D$ be stable configuration structures.
Then, $\mc C \eqb{hwh} \mc D$ iff $\mc C \eql{\EIL{hwh}} \mc D$.
\end{theoremrep}
\begin{proof}
($\Rightarrow$)
Let $\mathcal{R}$ be an HWH bisimulation between $\mc C$ and $\mc D$.
We show by induction on $\phi$ that for all $X,Y,f$,
if $\mathcal{R}(X,Y,f)$ then for all $\phi \in \EIL{hwh}$
and all $\rho$ (permissible environment for $\phi$ and $X$) we have 
$X, \rho \models \phi$ iff $Y, f \circ \rho_\phi \models \phi$.
Recall that $\rho_\phi$ is an abbreviation for $\rho\res\Fi\phi$.

All cases apart from $\fd a \phi_c$ are as in the proof of Theorem~\ref{thm:HH=EIL}.
The $\fd a \phi_c$ case is as in the proof of Theorem~\ref{thm:WH=EILwh}.

($\Leftarrow$)
Suppose that $\mc C \eql{\EIL{hwh}} \mc D$.
Define $\mathcal{R}(X,Y,f)$ iff
\begin{itemize}
\item
$f$ is an order isomorphism between $X$ and $Y$
\item
for any $\phi \in \EIL{hwh}$ and $\rho$ (permissible environment for $\phi$ and $X$) with $\rge\rho \subseteq X$ we have
$X, \rho \models \phi$ iff $Y, f \circ \rho \models \phi$.

(Note that by considering negated formulas,
$X, \rho \models \phi$ iff $Y, f \circ \rho \models \phi$
is equivalent to
$X, \rho \models \phi$ implies $Y, f \circ \rho \models \phi$.)

\end{itemize}
We show that $\mc R$ is an HWH bisimulation.
Clearly $\mc{R}(\emptyset,\emptyset,\emptyset)$ since $\mc C \eql{\EIL{hwh}} \mc D$.
Assume $\mc{R}(X,Y,f)$:
\begin{enumerate}

\item
Suppose that $X\tranc e X'$ with $\lab(e) = a$.
We must show that there are $e',Y',f'$ such that
$Y \trand {e'} Y'$ with $\lab(e')= a$, and $\mc{R}(X',Y',f')$.

Suppose for a contradiction that there are no such $e',Y',f'$.

Let all $e',Y',f'$ such that
$Y \trand {e'} Y'$, $\lab(e')= a$ and $f':X' \isom Y'$ be enumerated as
$e_i, Y_i, f_i$ for $i\in I$.
There are only finitely many such $e',Y',f'$.
Note that for a given $e'$ there is only one $Y' = Y \union \{e_i\}$,
but there may be more than one possible isomorphism $f':X' \isom Y'$.

For each $i \in I$, since $\neg\mc R(X_i,Y_i,f_i)$,
there are $\phi_i, \rho_i$ with $\rge{\rho_i} \subseteq X'$, such that
$X', \rho_i \models \phi_i$ and 
$Y_i, f_i \circ \rho_i \not \models \phi_i$.

Let $\{z_{e'} : e' \in X'\}$ be a set of fresh distinct identifiers.
Let the environment $\rho_{X'}$ be defined by
$\rho_{X'}(z_{e'}) = e'$ (all $e' \in X'$).
We are going to standardise all formulas to use this environment, so that we can conjoin them.
Similarly, let $\rho_X = \rho_{X'} \setminus z_e$.

We shall obtain $\psi_i$ such that
$X', \rho_{X'} \models \psi_i$ and 
$Y_i, f_i \circ  \rho_{X'} \not\models \psi_i$.

Let $\sigma_i$ be defined by
$\sigma_i(x) = z_{\rho_i(x)}$ for $x \in \Fi{\phi_i}$.
Let $\psi_i = \sigma_i(\phi_i)$, which is got by replacing each free identifier $x$ in $\phi_i$ by $\sigma_i(x)$.
Clearly $\rho_i(x) = \rho_{X'}(\sigma_i(x))$ for each $x \in \Fi{\phi_i}$.
Then $X', \rho_{X'} \models \psi_i$ by Lemma~\ref{lem:env fresh}.
Similarly,
$f_i \circ \rho_i(x) = f_i \circ \rho_{X'}(\sigma_i(x))$ for each $x \in \Fi{\phi_i}$,
and so $Y_i, f_i \circ  \rho_{X'} \not\models \psi_i$,
again by Lemma~\ref{lem:env fresh}.

Let $\theta'_{X'}$ be as in Lemma~\ref{lem:dfro X}.
The environment $\rho_{X'}$ we use here is taken to be the same as the one in the statement of Lemma~\ref{lem:dfro X}.
Thus $X', \rho_{X'} \models \theta'_{X'}$, by Lemma~\ref{lem:dfro X}.

Let $\Psi' \Defeq \theta'_{X'} \aand \bigwedge_{i\in I}  \psi_i$.
It is clear that $X', \rho_{X'} \models \Psi'$.

We now close $\Phi'$ by declaring all identifiers $z_{e'}$ ($e' \in X'$).
Let $\Psi \Defeq (z_{e'}:\lab(e'))_{e' \in X'}\Psi'$,
using an obvious notation.
We have $X' \models \Psi$, and so $X \models \fd a \Psi$,
with $\fd a \Psi \in \EIL{hwh}$.

Since $\mc R (X,Y,f)$ we must therefore have $Y \models \fd a \Psi$.
So there are $e',Y'$ such that $Y \trand {e'} Y'$, $\lab(e') = a$
and $Y' \models \Psi$.
There is an environment $\rho'$ with $\dom{\rho'} = \{z_{e}: e \in X'\}$,
such that $Y',\rho' \models \Psi'$.
In particular $Y',\rho' \models \theta'_{X'}$.
Since $\card {Y'} = \card{X'}$ we have $f':X' \isom Y'$,
where $f(e) = \rho'(z_e)$
(by Lemma~\ref{lem:dfro X} and the proof of Lemma~\ref{lem:ro X}).

But then $e',Y',f'$ must be $e_i,Y_i,f_i$ for some $i \in I$.
So $Y', f' \circ \rho_{X'} \not\models \psi_i$.
But for each $e \in X'$, $f' \circ \rho_{X'}(x_e) = f'(e) = \rho'(z_e)$.
So $Y', \rho' \not\models \psi_i$.
But this contradicts $Y',\rho' \models \Psi'$.

\item
The case where $Y \trand{e} Y'$ is similar to the previous case.

\item
The case where $X\rtranc{e} X'$ is similar to the corresponding case in the proof of Theorem~\ref{thm:HH=EIL}.

\item
The case where $Y \rtrand{e} Y'$ is similar to the previous case.
\qedhere
\end{enumerate}
\end{proof}

\section{Proofs of results in Section~\ref{sec:char}}\label{sec:proofs char}

\begin{propositionrep}[\ref{prop:hh game}]
Let $\mc C,\mc D$ be finite stable configuration structures.
Then $\mc C \eqb{hh} \mc D$ iff defender $D$ has a winning strategy for the game $G(\mc C,\mc D)$.
\end{propositionrep}
\begin{proof}[Proof sketch]
($\Rightarrow$)
Suppose that $\mc R$ is an HH bisimulation between $\mc C$ and $\mc D$.
Note that $\mc R(\emptyset,\emptyset,\emptyset)$, so that the initial state of $G(\mc C,\mc D)$ is in $\mc R$.
Then $D$ has a winning strategy as follows.  Always choose a move which produces a new state $(X',Y',f')$ so that $\mc R(X',Y',f')$.  This is clearly possible by the properties of $\mc R$.  Since $D$ is always able to make a move, eventually a state will be repeated, since there are only finitely many possible states $(X,Y,f)$.
In fact there can be no more than $s(\mc C,\mc D)$ moves before $D$ wins,
as already observed.

($\Leftarrow$)
Suppose that $D$ has a winning strategy for $G(\mc C,\mc D)$.
Define $\mc R(X,Y,f)$ iff $(X,Y,f)$ is reachable in some play of $G(\mc C,\mc D)$.
Then clearly $\mc R(\emptyset,\emptyset,\emptyset)$.
Also if $\mc R(X,Y,f)$ then any transition of $\mc C$ or $\mc D$ can be matched (since $D$ has a winning strategy) and so we can get to a new reachable state $(X',Y',f')$, so that $\mc R(X',Y',f')$ as required.
The only exception is if we have reached a winning (for $D$) state $(X,Y,f)$.
But in this case this same state was reached earlier in the play,
and so we can use the earlier occurrence instead.
\end{proof}

\begin{theoremrep}[\ref{thm:hh depth}]
Let $\mc C,\mc D$ be finite stable configuration structures.
Then $\mc C \eqb{hh} \mc D$
iff $\mc C$ and $\mc D$ satisfy the same $\EIL{}$ formulas of modal depth no more than $s(\mc C,\mc D)+c(\mc C,\mc D)$.
\end{theoremrep}
\begin{proof}
($\Rightarrow$)
This follows immediately from Theorem~\ref{thm:HH=EIL}.

($\Leftarrow$)
Let $s = s(\mc C,\mc D)$ and $c = c(\mc C,\mc D)$.
Let $\EIL{}^k$ be those formulas of $\EIL{}$ with modal depth $\leq k$.
Suppose that $\mc C$ and $\mc D$ satisfy the same $\EIL{}^{s+c}$ formulas.
We aim to show that defender $D$ has a winning strategy for $G(\mc C,\mc D)$.

The game starts in stage $0$ and goes through stages $1$ up to no more than $s$.
We show by induction on $k$ that $D$ has a winning strategy where
at stage $k$, in state $(X,Y,f)$ with $f:X \isom Y$ it is the case that
for any $\phi \in \EIL{}^{s+c-k}$ and any $\rho$
(permissible environment for $\phi$ and $X$) with $\rge\rho \subseteq X$ we have
$X,\rho \models \phi$ iff $Y,f\circ \rho \models \phi$.

Base case $k = 0$.  This follows immediately from the assumption that
$\mc C$ and $\mc D$ satisfy the same $\EIL{}^{s+c}$ formulas.

Induction step.  Suppose that at stage $k$ (where $k \leq s-1$) we are in state $(X,Y,f)$,
and suppose that for any $\phi \in \EIL{}^{s+c-k}$ and any $\rho$
(permissible environment for $\phi$ and $X$) with $\rge\rho \subseteq X$ we have
$X,\rho \models \phi$ iff $Y,f\circ \rho \models \phi$.

We must show that whatever move $A$ makes, $D$ can respond in such a way as to get to a new state $(X',Y',f')$ where $f':X' \isom Y'$
and for any $\phi \in \EIL{}^{s+c-k-1}$ and any $\rho'$
(permissible environment for $\phi$ and $X'$) with $\rge{\rho'} \subseteq X'$ we have
$X',\rho' \models \phi$ iff $Y',f'\circ \rho' \models \phi$.
\begin{itemize}
\item
Suppose $A$ plays $X \tranc e X'$.
Then $D$ must respond with $Y \trand {e'} Y'$ such that 
$f':X' \isom Y'$ where $f' = f \union \{(e,e')\}$
and for any $\phi \in \EIL{}^{s+c-k-1}$ and any $\rho'$
(permissible environment for $\phi$ and $X'$) with $\rge{\rho'} \subseteq X'$ we have
$X',\rho' \models \phi$ iff $Y',f'\circ \rho' \models \phi$.
To see that $D$ does have such a move,
we follow the corresponding case in the proof of Theorem~\ref{thm:HH=EIL}.
Note that $\depth{\theta'_{X'}} \leq \card{X'} \leq c$
and that the $\phi_i$ are bounded in modal depth by $s+c-k-1$.
Hence the $\psi_i$ are bounded in modal depth by $s+c-k-1$,
since $k \leq s-1$, so that $c \leq s+c-k-1$.
Therefore $\fd{z_e:a}\Psi \in \EIL{}^{s+c-k}$,
allowing us to obtain the contradiction as required.

\item
Suppose $A$ plays $Y \trand {e'} Y'$.
Similar to the previous case.

\item
Suppose $A$ plays $X \rtranc e X'$.
Then $D$ must respond with $Y \rtrand {f(e)} Y'$ such that 
$f':X' \isom Y'$ where $f' = f \res X'$
and for any $\phi \in \EIL{}^{s+c-k-1}$ and any $\rho'$
(permissible environment for $\phi$ and $X'$) with $\rge{\rho'} \subseteq X'$ we have
$X',\rho' \models \phi$ iff $Y',f'\circ \rho' \models \phi$.
To see that $D$ does have such a move,
we follow the corresponding case in the proof of Theorem~\ref{thm:HH=EIL}.
Note that $\phi \in \EIL{}^{s+c-k-1}$, and so $\rd z \phi \in \EIL{}^{s+c-k}$,
allowing us to obtain the contradiction as required.

\item
Suppose $A$ plays $Y \rtrand {e'} Y'$.
Similar to the previous case.
\end{itemize}
\end{proof}

\begin{theoremrep}[\ref{thm:char hh}]
Suppose that $\Act$ is finite.
Let $\mc C,\mc D$ be finite stable configuration structures.
Let $s \Defeq s(\mc C,\mc D)$.
Then $\mc C \eqb{hh} \mc D$
iff $\mc D$ satisfies
$\Chard{hh}{\mc C}{s}$.
\end{theoremrep}
\begin{proof}
($\Rightarrow$)
It is easy to see by induction on $n$ that $\mc C \models \Chard{hh}{\mc C}{n}$ for any $n$.
Suppose that $\mc C \eqb{hh} \mc D$.
Then $\mc C \models \Chard{hh}{\mc C}{s}$
and so $\mc D \models \Chard{hh}{\mc C}{s}$
by Theorem~\ref{thm:HH=EIL}.

($\Leftarrow$)
We show that defender $D$ has a strategy to win the game $G(\mc C,\mc D)$.

Let $\rho_X$ be defined by $\rho_X(e) = z_e$ for each $e \in X$.
Defender $D$ must ensure that at each stage $k \leq s$ in state $(X,Y,f)$, we have
$Y,f \circ \rho_{X} \models \Chard{hh}{X}{s-k}$.

This is true initially at $k=0$ in state $(\emptyset,\emptyset,\emptyset)$,
since $\mc D \models \Chard{hh}{\mc C}{s}$.

At each stage $k<s$, $D$ must choose a response which ensures that
$Y',f' \circ \rho_{X'} \models \Chard{hh}{X'}{s-k-1}$,
where $(X',Y',f')$ is the new state.
\begin{itemize}
\item Suppose $A$ plays $X \tran e X'$.
We know $Y,f \circ \rho_{X} \models \fd {z_e:\lab(e)} \Chard{hh}{X'}{s-k-1}$.
Hence $Y \trand {e'} Y'$ where
$Y',(f \circ \rho_{X})[z_e \mapsto e'] \models \Chard{hh}{X'}{s-k-1}$.
Let $f' = f \union\{(e,e')\}$.
Then $Y',f' \circ \rho_{X'} \models \Chard{hh}{X'}{s-k-1}$.
In particular,
$Y',f' \circ \rho_{X'} \models \theta'_{X'}$.
Hence $f':X' \isom Y'$ by Lemma~\ref{lem:dfro X} (or the proof of Lemma~\ref{lem:ro X}).
So $D$ has found a valid move and maintained the induction hypothesis.

\item
Suppose $A$ plays $Y \trand {e'} Y'$.
Let $\lab(e') = a$.
We know
\[
Y,f \circ \rho_{X} \models
\fb{x:a} \Oor_{X \tranc e X',\lab(e)=a} \Chard{hh}{X'}{s-k-1}[x/z_e]
\]
Hence
\[
Y',(f \circ \rho_X)[x \mapsto e'] \models
\Oor_{X \tranc e X',\lab(e)=a} \Chard{hh}{X'}{s-k-1}[x/z_e]
\]
This disjunction cannot be empty, or else
$Y',(f \circ \rho_X)[x \mapsto e'] \models \ffalse$, which is impossible.
So there is $e$ such that $X \tranc e X'$, $\lab(e)=a$ and
$Y',(f \circ \rho_X)[x \mapsto e'] \models \Chard{hh}{X'}{s-k-1}[x/z_e]$.
So $D$ plays  $X \tranc e X'$.
Let $f' = f \union\{(e,e')\}$.
Then $Y',f' \circ \rho_{X'} \models \Chard{hh}{X'}{s-k-1}$.
Hence $f':X' \isom Y'$ just as in the previous case.
So $D$ has again found a valid move and maintained the induction hypothesis.

\item
Suppose $A$ plays $X \rtranc {e} X'$.
Then $D$ plays $Y \trand {f(e)} Y'$. 
Let $f' = f \res X'$.
We know $Y,f \circ \rho_{X} \models \rd {z_e} \Chard{hh}{X'}{s-k-1}$.
So $Y',f' \circ \rho_{X'} \models \Chard{hh}{X'}{s-k-1}$.
Hence $f':X' \isom Y'$ just as in the previous cases.
So $D$ has again found a valid move and maintained the induction hypothesis.

\item
Suppose $A$ plays $Y \rtrand {e'} Y'$.
Let $e = f^{-1}(e')$.
Then $D$ plays $X \rtranc {e} X'$.
Let $f' = f \res X'$.
We again know $Y,f \circ \rho_{X} \models \rd {z_e} \Chard{hh}{X'}{s-k-1}$.
So again $Y',f' \circ \rho_{X'} \models \Chard{hh}{X'}{s-k-1}$.
Hence $f':X' \isom Y'$ just as in the previous cases.
So $D$ has again found a valid move and maintained the induction hypothesis.
\qedhere
\end{itemize}
\end{proof}

\begin{propositionrep}[\ref{prop:char h}]
Suppose that $\Act$ is finite.
Let $\mc C,\mc D$ be finite stable configuration structures.
Then $\mc D \eqb{h} \mc C$ iff $\mc D \models \Char{h}{\mc C}$.
\end{propositionrep}
\begin{proof}[Proof Sketch]

($\Rightarrow$)
We first show that $\mc C \models \Char{h}{\mc C}$.
Let $\rho_X$ be defined by $\rho_X(e) = z_e$ (each $e \in X$).
We show that $X,\rho_X \models \Char{h}{X}$ for each configuration $X$ of $\mc C$.
We use induction on the maximum number of transitions from the current configuration to a maximal configuration.
\[
d(X) \Defeq \left\{
\begin{array}{ll}
0 & \mbox{if } X \mbox { is maximal}
\\
\max\{d(X')+1: X \tran e X'\} & \mbox{otherwise}
\end{array}
\right.
\]
We omit the straightforward details.

Now suppose that $\mc C \eqb{h} \mc D$.
Then $\mc C \models \Char{h}{\mc C}$
and so $\mc D \models \Char{h}{\mc C}$
by Theorem~\ref{thm:H=EILh}.

($\Leftarrow$)
Suppose $\mc D \models \Char{h}{\mc C}$.
Let $\mc R(X,Y,f)$ iff $f:X \isom Y$ and $Y,f\circ \rho_X \models \Char{h}{X}$.
We show that $\mc R$ is an H-bisimulation between $\mc C$ and $\mc D$.

Clearly $Y,f\circ \rho_X \models \Char{h}{X}$ holds if $X = Y = \emptyset$,
since $\mc D \models \Char{h}{\mc C}$.
Hence $\mc R(\emptyset,\emptyset,\emptyset)$.

Suppose $\mc R(X,Y)$:
\begin{itemize}
\item
Suppose $X \tranc e X'$.
We know $Y,f \circ \rho_{X} \models \fd {z_e:\lab(e)} \Char{h}{X'}$.
Hence $Y \trand {e'} Y'$ where
$Y',(f \circ \rho_{X})[z_e \mapsto e'] \models \Char{h}{X'}$.
Let $f' = f \union\{(e,e')\}$.
Then $Y',f' \circ \rho_{X'} \models \Char{h}{X'}$.
In particular,
$Y',f' \circ \rho_{X'} \models \theta'_{X'}$.
Hence $f':X' \isom Y'$ by Lemma~\ref{lem:dfro X} (or the proof of Lemma~\ref{lem:ro X}).
So $\mc R(X',Y')$ as required.

\item 
Suppose $Y \tranc {e'} Y'$.
Let $\lab(e') = a$.
We know
\[
Y,f \circ \rho_{X} \models
\fb{x:a} \Oor_{X \tranc e X',\lab(e)=a} \Char{h}{X'}[x/z_e]
\]
Hence
\[
Y',(f \circ \rho_X)[x \mapsto e'] \models
\Oor_{X \tranc e X',\lab(e)=a} \Char{h}{X'}[x/z_e]
\]
This disjunction cannot be empty, or else
$Y',(f \circ \rho_X)[x \mapsto e'] \models \ffalse$, which is impossible.
So there is $e$ such that $X \tranc e X'$, $\lab(e)=a$ and
$Y',(f \circ \rho_X)[x \mapsto e'] \models \Char{h}{X'}[x/z_e]$.
Let $f' = f \union\{(e,e')\}$.
Then $Y',f' \circ \rho_{X'} \models \Char{h}{X'}$.
Hence $f':X' \isom Y'$ just as in the previous case, and $\mc R(X',Y')$ as required.
\qedhere
\end{itemize}
\end{proof}

\begin{propositionrep}[\ref{prop:char wh}]
Suppose that $\Act$ is finite.
Let $\mc C,\mc D$ be finite stable configuration structures.
Then $\mc D \eqb{wh} \mc C$ iff $\mc D \models \Char{wh}{\mc C}$.
\end{propositionrep}
\begin{proof}[Proof Sketch]
($\Rightarrow$) We show that $X \models \Char{wh}{X}$ for each configuration $X$ of $\mc C$.
This is done by induction on the maximum number of transitions from the current configuration to a maximal configuration.
\[
d(X) \Defeq \left\{
\begin{array}{ll}
0 & \mbox{if } X \mbox { is maximal}
\\
\max\{d(X')+1: X \tran e X'\} & \mbox{otherwise}
\end{array}
\right.
\]
Hence $\mc C \models \Char{wh}{\mc C}$.
and so $\mc D \models \Char{wh}{\mc C}$
by Theorem~\ref{thm:WH=EILwh}.

($\Leftarrow$)
Suppose $\mc D \models \Char{wh}{\mc C}$.
Let $\mc R(X,Y)$ iff $X \isom Y$ and $Y \models \Char{wh}{X}$.
We show that $\mc R$ is a WH-bisimulation.

Clearly $Y \models \Char{wh}{X}$ holds if $X = Y = \emptyset$,
since $\mc D \models \Char{wh}{\mc C}$.
Hence $\mc R(\emptyset,\emptyset)$.

Suppose $\mc R(X,Y)$:
\begin{itemize}
\item
Suppose $X \tranc a X'$.
Since $Y \models \Char{wh}{X}$,
we have $Y \models \fd a \Char{wh}{X'}$.
Hence $Y \trand a Y'$ where $Y' \models \Char{wh}{X'}$.
Then $Y' \models \theta_{X'}$ and $\card {Y'} = \card {X'}$.
So $X' \isom Y'$ by Lemma~\ref{lem:ro X}.
Hence $\mc R(X',Y')$ as required.
\item 
Suppose $Y \tranc a Y'$.
Since $Y \models \Char{wh}{X}$,
we have $Y' \models \Oor_{X \tranc a X'} \Char{wh}{X'}$.
Since $Y' \models \ffalse$ is impossible, the disjunction must be non-empty.
Suppose that $Y' \models \Char{wh}{X'}$ where $X \tranc a X'$.
Then $\mc R(X',Y')$ follows as in the previous case.
\qedhere
\end{itemize}
\end{proof}
} % end of appendices to our Express submission
\Comment{Commented out everything to the end of the file.
\newpage
\section{Logics for pomset and step bisimulation}\label{sec:pomset step}

***Definitions removed from main text.***
\begin{definition}[\cite{vGG01}]
Let $\mc C, \mc D$ be stable configuration structures.
A relation $\mc R \subseteq C_{\mc C} \cross C_{\mc D}$ is an {\em interleaving bisimulation} (IB) between $\mc C$ and $\mc D$ if $\mc R(\emptyset,\emptyset)$ and if $\mc R(X,Y)$ then for $a \in \Act$
\begin{itemize}
\item
if $X \tranc a X'$ then $\exists Y'.\ Y \trand a Y'$ and $\mc R(X',Y')$;
\item
if $Y \trand a Y'$ then $\exists X'.\ X \tranc a X'$ and $\mc R(X',Y')$.
\end{itemize}
We say that $\mc C$ and $\mc D$ are IB equivalent ($\mc C \eqb{ib} \mc D$) iff there is an IB between $\mc C$ and $\mc D$.
\end{definition}
For a set of events $E$, let $\lab(E)$ be the multiset of labels of events in $E$.
We define a {\em step} transition relation where concurrent events are executed in a single step:
\begin{definition}
Let $\mc C = (C,\lab)$ be a stable configuration structure and let $A \in \Nat^\Act$ ($A$ is a multiset over $\Act$).
We let $X \tranc A X'$ iff $X,X' \in C$, $X \subseteq X'$, and $X' \setminus X = E$ with $d \co_{X'} e$ for all $d,e \in E$ and $\lab(E) = A$.
\end{definition}

\begin{definition}[\cite{Pom86,vGG01}]\label{def:sb}
Let $\mc C, \mc D$ be stable configuration structures.
A relation $\mc R \subseteq C_{\mc C} \cross C_{\mc D}$ is a {\em step bisimulation} (SB) between $\mc C$ and $\mc D$ if $\mc R(\emptyset,\emptyset)$ and if $\mc R(X,Y)$ then for $A \in \Nat^\Act$
\begin{itemize}
\item
if $X \tranc A X'$ then $\exists Y'.\ Y \trand A Y'$ and $\mc R(X',Y')$;
\item
if $Y \trand A Y'$ then $\exists X'.\ X \tranc A X'$ and $\mc R(X',Y')$.
\end{itemize}
We say that $\mc C$ and $\mc D$ are SB equivalent ($\mc C \eqb{sb} \mc D$) iff there is an SB between $\mc C$ and $\mc D$.
\end{definition}

\begin{definition}\label{def:pomset}
Let $\mc X = (X,<_X,\lab_X)$ and $\mc Y = (Y,<_Y,\lab_Y)$ be partial orders which are labelled over $\Act$.
We say that $\mc X$ and $\mc Y$ are {\em isomorphic} ($X \isom Y$) iff there is a bijection from $X$ to $Y$ respecting the ordering and the labelling.  The isomorphism class $[\mc X]_{\isom}$ of a partial order labelled over $\Act$ is called a {\em pomset} over $\Act$.
We let $p,\ldots$ range over pomsets.
\end{definition}

\begin{definition}\label{def:pomset-tran}
Let $\mc C = (C,\lab)$ be a stable configuration structure and let $p$ be a pomset over $\Act$.
We let $X \tranc p X'$ iff $X,X' \in C$, $X \subseteq X'$, and $X' \setminus X = H$ with
\[
p = [(H, {<_{X'}} \inter (H \cross H),\lab_{\mc C} \res H)]_{\isom}\ .
\]
\end{definition}
\begin{definition}[\cite{BC87,vGG01}]\label{def:pb}
Let $\mc C, \mc D$ be stable configuration structures.
A relation $\mc R \subseteq C_{\mc C} \cross C_{\mc D}$ is a {\em pomset bisimulation} (PB) between $\mc C$ and $\mc D$ if $\mc R(\emptyset,\emptyset)$ and if $\mc R(X,Y)$ then for any pomset $p$ over $\Act$
\begin{itemize}
\item
if $X \tranc p X'$ then $\exists Y'.\ Y \trand p Y'$ and $\mc R(X',Y')$;
\item
if $Y \trand p Y'$ then $\exists X'.\ X \tranc p X'$ and $\mc R(X',Y')$.
\end{itemize}
We say that $\mc C$ and $\mc D$ are PB equivalent ($\mc C \eqb{pb} \mc D$) iff there is a PB between $\mc C$ and $\mc D$.
\end{definition}

We can define a further equivalence by combining pomset and weak-history preserving bisimulation:
\begin{definition}[\cite{vGG01}]\label{def:whpb}
Let $\mc C, \mc D$ be stable configuration structures.
A relation $\mc R \subseteq C_{\mc C} \cross C_{\mc D}$ is a {\em weak history-preserving pomset bisimulation (WHPB)} between $\mc C$ and $\mc D$ if $\mc R(\emptyset,\emptyset)$ and if $\mc R(X,Y)$ and $p$ is a pomset over $\Act$ then
\begin{itemize}
\item
$(X, <_X,\lab_{\mc C}\res X) \isom (Y, <_Y,\lab_{\mc D}\res Y)$;
\item
if $X \tranc p X'$ then $\exists Y'.\ Y \trand p Y'$ and $\mc R(X',Y')$;
\item
if $Y \trand p Y'$ then $\exists X'.\ X \tranc p X'$ and $\mc R(X',Y')$.
\end{itemize}
We say that $\mc C$ and $\mc D$ are WHPB equivalent ($\mc C \eqb{whpb} \mc D$) iff there is a WHPB between $\mc C$ and $\mc D$.
\end{definition}

\Comment{
\begin{remark}
This section is mainly as in the notes of 6 Jan 2011.
What difference do declarations make?
It seems that they are of no use in defining forward pomset transitions $\fd p \phi$.
In fact we cannot add declarations to $\EIL{pb}$, since if we do then it becomes at least as strong as $\EIL{wh}$, and therefore too strong for $\eqb{pb}$.
However declarations should enable (and are required for) defining reverse pomset transitions.
\end{remark}
}
Baldan \& Crafa's logic for PB uses the idea that you are not allowed to apply $\neg$ or $\aand$ to open formulas.  This means that one cannot branch using $\aand$ into two different futures and use causal information from the past.
We can adapt this idea to our own setting.
\begin{definition}\label{def:EILpb}
Let $\EIL{pb}$ be given by:
\[
\phi \bnfeq \ttrue \bsep \neg \phi_c
\bsep \phi_c \aand \phi'_c \bsep \phi_r \aand \phi_c \bsep \phi_c \aand \phi_r
\bsep \fd{x:a}\phi \bsep \phi_r
\]
where $\phi_r \in \EIL{dfro}$ (without declarations $\dec{x:a}\phi$),
and $\phi_c$ ranges over closed formulas of $\EIL{pb}$.
\end{definition}
We see that $\EIL{pb}$ is got from forward moves $\fd{x:a}\phi$,
reverse-only moves $\phi_r$, and taking conjunctions of reverse-only and closed formulas, and negations of closed formulas.
This logic $\EIL{pb}$ is strong enough to encode pomset transitions:
\begin{proposition}\label{prop:pomset encoding}
Let $p$ be any pomset.
There is a formula scheme $\fd{p}\phi$ such that for any closed
formula $\phi \in \EIL{pb}$,
\begin{itemize}
\item
$\fd p \phi\in \EIL{pb}$
\item
for any configuration $X$ of a stable configuration structure $\mc C$,
$X \models \fd{p}\phi$
iff there is $X'$ such that $X \tranc p X'$ and $X' \models \phi$.
\end{itemize}
\end{proposition}
\begin{proof}[Proof sketch]
Let $(X,<,\lab)$ be a representative of $p$,
with $X = \{e_1,\ldots,e_n\}$
and $\lab(e_i) = a_i$ for each $i$,
and with events ordered in such a way that if $e_i < e_j$ then $i < j$.
Recall the open formula $\theta'_X \in \EIL{dfro}$ from the proof of Lemma~\ref{lem:ro X}.
There it was defined only for $X$ a configuration,
but it can be defined in the same way for any labelled poset.
We define
\[
\fd p \phi \Defeq \fd{z_1:a_1}\cdots\fd{z_n:a_n}(\theta'_X \aand \phi)
\]
Suppose that $Y$ is any configuration of a stable configuration structure $\mc D$.
If $Y \models \fd p \phi$ then there are events $\{e'_1,\ldots,e'_n\}$
such that $\lab(e'_i) = a_i$ for each $i$,
and $Y_1,\ldots,Y_n$ such that
$Y \trand{e'_1} Y_1 \cdots \trand{e'_n} Y_n$,
with $Y_n,\rho' \models \theta'_X \aand \phi$.
Here $\rho'$ assigns $z_i$ to $e'_i$ for each $i$.
$Y_n,\rho' \models \theta'_X$ tells us that $\{e'_1,\ldots,e'_n\}$
(with the ordering induced from $Y_n$) is isomorphic to $X$.
Hence $Y \trand p Y_n \models \phi$ as required.

Conversely, if $Y \trand p Y_n \models \phi$,
list the members of $Y_n \setminus Y$ as $\{e'_1,\ldots,e'_n\}$
in such a way that $e'_i$ corresponds to $e_i$ for each $i$.
Then it is not hard to see that
$Y \models \fd{z_1:a_1}\cdots\fd{z_n:a_n}(\theta'_X \aand \phi)$,
where we assign each $z_i$ to $e'_i$.
Hence $Y \models \fd p \phi$ as required.
\Comment{
Older version of proof
Let $(\{e_1,\ldots,e_n\},<,\lab)$ be a representative of $p$,
with $\lab(e_i) = a_i$ for each $i$.
Suppose wlog that the $e_i$ are enumerated in such a way that
if $e_i < e_j$ then $i < j$.
For $i = 1,\ldots,n$, let $p_i$ be the pomset associated with $\{e_1,\ldots,e_i\}$
(with the induced ordering and labelling).
Also let $p_0$ be the empty pomset.
Let $z_1,\ldots,z_n$ be fresh identifiers.
We define, by induction on $k$, formulas $\fd{p_k}\phi$ which satisfy:
\begin{quote}
(*) $X \models \fd{p_k}\phi$
iff there is $X'$ such that $X \tranc {p_k} X'$ and $X' \models \phi$.
\end{quote}
where $\phi$ is any closed formula.
Each $\fd{p_k}\phi$ uses bound identifiers $z_1,\ldots,z_k$.

For the base case we let $\fd{p_0}\phi \Defeq \phi$.
For the induction step we let
\[
\fd{p_{k}}\phi \Defeq \fd{p_{k-1}}\fd{z_{k}:a_{k}}
(\phi \aand \rd{z_{i_1}}\cdots\rd{z_{i_{r_k}}}\Aand_{e_i < e_k} \neg \rd{z_i} \ttrue)
\]
The idea is that having performed $e_1,\ldots,e_k$ we reverse those $e_j$ where $j < k$ but $e_j \not< e_k$,
starting from $e_{k-1}$ and proceeding down to $e_1$.
This is the sequence $e_{j_1},\ldots,e_{j_{k_m}}$.
We then record the fact that we cannot reverse those $e_i$ such that $e_i < e_k$.
We omit the check by induction that (*) holds for each $k$.

Now let $\fd p \phi \Defeq \fd {p_n} \phi$.
Clearly if $\phi \in \EILpb$ then so is $\fd p \phi$.
Also,
\begin{quote}
$X \models \fd{p}\phi$
iff there is $X'$ such that $X \tranc p X'$ and $X' \models \phi$.
\end{quote}
where $\phi$ is any closed formula.
} % Comment
\end{proof}

\begin{theorem}\label{thm:PB=EILpb}
Let $\mc C,\mc D$ be stable configuration structures.
Then $\mc C \eqb{pb} \mc D$ iff $\mc C \eql{\EIL{pb}} \mc D$.
\end{theorem}
\begin{proof}
See Appendix~\ref{sec:proofs pomset step}.
\end{proof}

If we allow reverse-only formulas in $\EIL{pb}$ to contain declarations,
we get a strictly stronger logic.
\begin{definition}
Let $\EIL{whpb}$ be:
\[
\phi \bnfeq \ttrue \bsep \neg \phi_c \bsep \phi_c \aand \phi'_c \bsep \phi_r \aand \phi_c \bsep \phi_c \aand \phi_r \bsep \fd{x:a}\phi \bsep \phi_r
\]
where $\phi_r \in \EIL{ro}$ {\em with declarations $\dec{x:a}\phi$}
and $\phi_c$ ranges over closed formulas of $\EIL{whpb}$.
\end{definition}
Thus $\EIL{whpb}$ is got by adding declarations to reverse-only formulas in $\EIL{pb}$.
This logic is easily seen to include both $\EIL{wh}$ and $\EIL{pb}$.

\begin{theorem}\label{thm:WHPB=EILwhpb}
Let $\mc C,\mc D$ be stable configuration structures.
Then $\mc C \eqb{whpb} \mc D$ iff $\mc C \eql{\EIL{whpb}} \mc D$.
\end{theorem}
\begin{proof}
See Appendix~\ref{sec:proofs pomset step}.
\end{proof}

We conclude by noting that logics for SB and IB can be defined in a straightforward manner.
Let the logic $\EIL{sb}$ be given by
\[
\phi \bnfeq \ttrue \bsep \neg \phi \bsep \phi \aand \phi' \bsep \fd{A}\phi
\]
(all multisets $A$).  Note that all formulas are closed.
This is easily seen to be a sublogic of $\EIL{pb}$.
We have  $\mc C \eqb{sb} \mc D$ iff $\mc C \eql{\EIL{sb}} \mc D$.
The logic $\EIL{sb}$ is very similar to the corresponding logic given by Baldan \& Crafa (their Theorem~2).

\begin{remark}
It would be nice to get a logic for SB which is a sublogic of $\EIL{wh}$,
which characterises WH.
\end{remark}
Finally let $\EIL{ib}$ be given by
\[
\phi::= \ttrue \bsep \neg \phi \bsep \phi \aand \phi' \bsep \fd{a}\phi
\]
This is easily seen to be a sublogic of $\EIL{sb}$.
We have  $\mc C \eqb{ib} \mc D$ iff $\mc C \eql{\EIL{ib}} \mc D$.
This is of course simply the classical result for standard Hennessy-Milner logic.

\section{Proofs of results in Section~\ref{sec:pomset step}}\label{sec:proofs pomset step}

Before proving Theorem~\ref{thm:PB=EILpb}, we give some lemmas:

\begin{lemma}\label{lem:EILpb}
Any formula of $\EIL{pb}$ is of one of the following two forms:
\begin{enumerate}
\item
$\phi_r \aand \phi_c$
\item
$\fd{x:a}\phi$ where $\fd{x:a}\phi$ is open
\end{enumerate}
Here we identify formulas up to commutativity and associativity of conjunction,
and identify $\ttrue \aand \phi$ with $\phi$.
\end{lemma}
\begin{proof}
Trivial.
\end{proof}

The following lemma is similar to Lemma~\ref{lem:ro isom},
but stated for $\EIL{dfro}$ rather than for $\EIL{ro}$ (just take $X = Y = \emptyset$ to recover Lemma~\ref{lem:ro isom} for $\EIL{dfro}$):
\begin{lemma}\label{lem:ro isom pomset}
Let $\mc C,\mc D$ be stable configuration structures.
Let $X,X'$ be configurations of $\mc C$ with $X \subseteq X'$,
and let $Y,Y'$ be configurations of $\mc D$ with $Y \subseteq Y'$.
Suppose that $f: X' \setminus X \isom Y' \setminus Y$.
Then for any $\phi \in \EIL{dfro}$, and any $\rho$ (permissible environment for $\phi$ and $X$)
such that $\rge{\rho_\phi} \subseteq X' \setminus X$,
we have $X', \rho \models \phi$ iff $Y', f \circ \rho_\phi \models \phi$.
\end{lemma}
\begin{proof}
By induction on $\phi$.
The cases for $\ttrue$, negation and conjunction are as in the proof of Lemma~\ref{lem:ro isom}.  This leaves only the case of $\rd x \phi$.

Suppose $X', \rho \models \rd x \phi$
with $\rge{\rho_{\rd x \phi}} \subseteq X' \setminus X$.
Then $X' \rtranc{e} X''$ for some $X''$
and $X'', \rho \models \phi$.
Now $\rge{\rho_\phi} \subseteq \rge{\rho_{\rd x \phi}}$.
Also $\rge{\rho_\phi} \subseteq X''$, since $X'', \rho \models \phi$.
Combining, $\rge{\rho_\phi} \subseteq X'' \setminus X$.

Let $e' = f(e)$, $Y'' = Y' \setminus \{e'\}$ and let $f' = f \setminus \{(e,e')\}$.
Then $Y' \rtrand{e'} Y''$ and $f': X'' \setminus X \isom Y'' \setminus Y$.
By induction hypothesis,
$Y'', f' \circ \rho_\phi \models \phi$.
Since $f' \circ \rho_\phi = f \circ \rho_\phi$ we have
$Y'', f \circ \rho_\phi \models \phi$,
and so $Y', f \circ \rho_\phi \models \rd x \phi$
as required.

Conversely, if $Y', f \circ \rho \models \rd x \phi$ then $X', \rho \models \rd x\phi$.

Note that the induction would fail for declarations $\dec{x:a}\phi$,
since $x$ might be assigned to an event outside $X' \setminus X$.
That is why we state Lemma~\ref{lem:ro isom pomset} for $\EIL{dfro}$ rather than for $\EIL{ro}$.
\end{proof}
\begin{theoremrep}[\ref{thm:PB=EILpb}]
Let $\mc C,\mc D$ be stable configuration structures.
Then $\mc C \eqb{pb} \mc D$ iff $\mc C \eql{\EIL{pb}} \mc D$.
\end{theoremrep}
\begin{proof}
($\Rightarrow$)
Let $\mc R$ be a PB between $\mc C$ and $\mc D$.
Show by induction on {\em closed} formulas that
if $\mc R(X,Y)$ then $X \models \phi$ iff $Y \models \phi$.
\begin{itemize}
\item
$\phi = \ttrue$.  Trivial.
\item
$\phi = \neg \phi_c$.  Trivial.
\item
$\phi = \phi_c \aand \phi'_c$.  Trivial.
\item
The last possibility is that
$\phi = \fd{x_1:a_1}\phi_1$.
We use Lemma~\ref{lem:EILpb} repeatedly, starting with $\phi_1$.
Let $n$ be such that 
$\phi_{1} = \fd{x_2:a_2} \phi_2,
\ldots,
\phi_{n-1} = \fd{x_n:a_n} \phi_n$
with $\phi_1,\ldots,\phi_{n-1}$ open and
$\phi_n = \phi_r^n \aand \phi_c^n$.
Here $n$ could of course be $1$.

Suppose $X \models \phi$.
There are events $e_1,\ldots,e_n$, configurations $X_1,\ldots,X_n$ and environments $\rho_1,\ldots,\rho_n$
such that $X \tranc{e_1} X_1 \cdots \tranc{e_n} X_n$,
where $\lab(e_i) = a_i$
and $X_i,\rho_i \models \phi_i$ for $i = 1,\ldots,n$.
Here $\rho_i$ assigns $x_1,\ldots,x_i$ to $e_1,\ldots,e_i$ respectively.

Now let $p$ be the pomset associated with the labelled partial order
$(\{e_1,\ldots,e_n\},<_X \restriction \{e_1,\ldots,e_n\},
\lab\restriction \{e_1,\ldots,e_n\})$.
We have $X \tranc p X_n$.
Hence there is $Y_n$ such that $Y \trand p Y_n$ and $\mc R(X_n,Y_n)$.
Let $Y_n \setminus Y = \{e'_1,\ldots,e'_n\}$ with
$f(e_i) = e'_i$ ($i = 1,\ldots,n$) being an order isomorphism
between $\{e_1,\ldots,e_n\}$ and $\{e'_1,\ldots,e'_n\}$.
Then $Y \trand{e'_1} Y_1 \cdots \trand{e'_n} Y_n$
and $\lab(e_i) = a_i$ for $i = 1,\ldots,n$.

Now $X_n,\rho_n \models \phi_r^n$.
By Lemma~\ref{lem:ro isom pomset} we have
$Y_n,f \circ \rho_n \models \phi_r^n$.
Furthermore, $X_n \models \phi_c^n$.
Hence $Y_n \models \phi_c^n$ (using the induction hypothesis).
Hence $Y \models \phi$ as required.

The converse is similar.
\end{itemize}
($\Leftarrow$)
Suppose $\mc C \eql{\EIL{pb}} \mc D$.
Define $\mc R$ by $\mc R(X,Y)$ if for all {\em closed} formulas $\phi \in \EIL{pb}$,
$X \models \phi$ iff $Y  \models \phi$.
We show that $\mc R$ is a pomset bisimulation.

Clearly $\mc R(\emptyset,\emptyset)$, since $\mc C \eql{\EIL{pb}} \mc D$.

Suppose $\mc R(X,Y)$.
Suppose further that $X \tranc p X'$.
Then $X \models \fd p \ttrue$,
where $\fd p$ is as in Proposition~\ref{prop:pomset encoding}.
So $Y \models \fd p \ttrue$.
Hence there is $Y'$ such that $Y \trand p Y'$.

Let all such $Y'$ be enumerated as $Y_i$ ($i \in I$).
We want to show that $\mc R(X',Y_i)$ for some $i$.
Suppose for a contradiction that for each $i$ there is a closed formula
$\phi_i \in \EIL{pb}$ such that $X' \models \phi_i$ but $Y_i \not\models \phi_i$.
Then $X \models \fd p(\Aand_{i \in I}\phi_i)$, but
$Y \not\models \fd p(\Aand_{i \in I}\phi_i)$.
Contradiction.
Hence $\mc R(X',Y_i)$ for some $i$.

Conversely, if $Y \trand p Y'$ then $X \tranc p X'$ for some $X'$.
\end{proof}

\begin{theoremrep}[\ref{thm:WHPB=EILwhpb}]
Let $\mc C,\mc D$ be stable configuration structures.
Then $\mc C \eqb{whpb} \mc D$ iff $\mc C \eql{\EIL{whpb}} \mc D$.
\end{theoremrep}
\begin{proof}
($\Rightarrow$)
Let $\mc R$ be a WHPB between $\mc C$ and $\mc D$.
Show by induction on {\em closed} formulas of $\EIL{whpb}$ that
if $\mc R(X,Y)$ then $X \models \phi$ iff $Y \models \phi$.
\begin{itemize}
\item
$\phi = \ttrue$.  Trivial.
\item
$\phi = \neg \phi_c$.  Trivial.
\item
$\phi = \phi_c \aand \phi'_c$.  Trivial.
\item
$\phi = \fd{x_1:a_1}\phi_1$.
This case is much the same as the corresponding case in the proof of Theorem~\ref{thm:PB=EILpb},
noting that Lemma~\ref{lem:EILpb} would also hold for $\EIL{whpb}$.
The only difference is that we use
Lemma~\ref{lem:ro X} instead of Lemma~\ref{lem:ro isom pomset}
to deduce that $X_n,\rho_n \models \phi_r^n$ implies $Y_n,f \circ \rho_n \models \phi_r^n$
(we know that $X_n \isom Y_n$ since $\mc R$ is a WHPB).

\item
$\phi = \phi_r \in \EIL{ro}$.  Since $X \isom Y$, this follows from Lemma~\ref{lem:ro isom}.
\end{itemize}

($\Leftarrow$)
Suppose $\mc C \eql{\EIL{whpb}} \mc D$.
Define $\mc R$ by $\mc R(X,Y)$ if for all {\em closed} formulas $\phi \in \EIL{whpb}$,
$X \models \phi$ iff $Y  \models \phi$.
The closed formulas of $\EIL{whpb}$ include all of $\EIL{wh}$ and the closed formulas of $\EIL{pb}$.
Therefore
by the proofs of Theorems~\ref{thm:WH=EILwh} and~\ref{thm:PB=EILpb},
$\mc R$ is both a weak history-preserving and a pomset bisimulation.
Hence $\mc R$ is a WHPB and $\mc C \eqb{whpb} \mc D$.
\end{proof}

\section{Laws - not for workshop submission}\label{sec:laws}

Laws can be omitted from our workshop submission, but we need to think about them to ensure that we get our definitions right.
Let $\phi = \psi$ mean that for all $\mc C, X,\rho$ we have
$\mc C, X,\rho \models \phi$ iff $X,\rho \models \psi$.

We should have laws such as commutativity and associativity of $\aand$.
\[
\begin{array}{l}
\phi \aand \psi = \psi \aand \phi \\
(\phi \aand \psi) \aand \theta = \psi \aand (\phi \aand \theta) \\
\phi \aand \ttrue = \phi \\
\neg\neg\phi = \phi \\
\end{array}
\]

Laws for declaration? All needs checking.
One can think of $\dec{x:a}\phi$ as $(\exists x:a)\phi$.
We should have $\dec{x:a}\dec{y:b}\phi = \dec{y:b}\dec{x:a}\phi$ where $x,y$ are distinct but $a$ might or might not be same as $b$.  This is relevant to the encoding of reverse steps, since it means that the order of the declarations does not matter there.
However we don't have $\dec{x:a}\phi = \phi$ if $x \not\in \Fi\phi$ since $\phi$ might be true with an empty configuration, whereas $\dec{x:a}\phi$ cannot hold then.
%In fact the formula $(x_1:a_1)\cdots(x_n:a_n)\ttrue$ can be read as saying that distinct events $\{e_1,\ldots,e_n\}$ are present with $\lab(e_i) = a_i$.
\[
\begin{array}{l}
\dec{x:a}\dec{y:b}\phi = \dec{y:b}\dec{x:a}\phi \mbox{ if } x \neq y \\
\dec{x:a}(\phi \aand \psi) = \phi \aand \dec{x:a}\psi \mbox{ if } x \not\in \Fi\phi \\
\dec{x:a}(y:a)\phi = (y:a)\phi \mbox{ if } x \not\in \Fi\phi \\
\end{array}
\]
We can deduce laws like the following:
\[
\begin{array}{l}
\dec{x:a}\phi = \phi \aand \dec{x:a}\ttrue \mbox{ if } x \not\in \Fi\phi \\
\dec{x:a}\dec{y:b}\ttrue = \dec{x:a}\ttrue \aand \dec{y:b}\ttrue \\
\end{array}
\]
Such laws might be helpful in obtaining a normal form for formulas (useful in proving a set of equational laws complete).

We would like
\[
(\dec{x:a}\rd x \ttrue) \aand ((y:a)\rd y \ttrue) = \dec{x:a}(y:a)(\rd x \ttrue \aand \rd y \ttrue)
\]
However the displayed equation cannot be valid if we insist on injective environments, since $x$ and $y$ can be assigned to the same event on the RHS but not on the LHS.
Since we can deduce the displayed equation from the law $\dec{x:a}(\phi \aand \psi) = \phi \aand \dec{x:a}\psi$ if $x \not\in \Fi\phi$, we incline towards allowing non-injective environments.   
\section{Diagram}
\begin{figure}
%\psfrag{EIL=HH}{$\EIL=\Bishh$}
\centering
% NB psfrag text is scaled if use resizebox or scalebox
\includegraphics[width=3.5in]{EIL1}
\caption{The hierarchy of logics and equivalences.}\label{fig:EIL1}
\end{figure}
See Figure~\ref{fig:EIL1}.
The dotted line indicates that there is a semantic inclusion
${\eqb{wh}} \subsetneq {\eqb{sb}}$, but $\EIL{sb}$ is not a sublogic of $\EIL{wh}$.
\begin{remark}
We might wish to include WHPB.
On the other hand we might want to keep something back for an extended version.
Perhaps include one of WHPB and HWH, but not both.

Could include a diagram with no (equidepth) autoconcurrency,
where WH = H and HWH = HH.
\end{remark}

\section{Reverse equivalences}

This material would not be included in our workshop submission.

Here is a list of further reverse equivalences from our MSCS paper,
together with conjectures on the corresponding sublogics of $\EIL{}$
(nothing shown).
Can the syntax of these languages be made nicer in some cases?
\begin{itemize}
\item
RI-IB
\[
\phi \bnfeq \ttrue \bsep \neg \phi \bsep \phi \aand \phi' \bsep \fd a \phi \bsep \rd a \phi
\]
\item
RI-SB
\[
\phi \bnfeq \ttrue \bsep \neg \phi \bsep \phi \aand \phi' \bsep \fd A \phi \bsep \rd a \phi
\]
\item
RI-PB
\[
\phi \bnfeq \ttrue \bsep \neg \phi_c
\bsep \phi_c \aand \phi'_c \bsep \phi_r \aand \phi_c \bsep \phi_c \aand \phi_r
\bsep \fd{x:a}\phi \bsep \rd a \phi_c \bsep \phi_r
\]
where $\phi_r \in \EIL{dfro}$ (without declarations $\dec{x:a}\phi$),
and $\phi_c$ ranges over closed formulas of $\EIL{ri\mh pb}$.
\item
RI-WH
\[
\phi \bnfeq \ttrue \bsep \neg \phi \bsep \phi \aand \phi'\bsep \fd a \phi \bsep \rd a \phi \bsep \phi_{rc}
\]
where $\phi_{rc}$ is a {\em closed} formula of  $\EIL{ro}$
\item
RI-WHPB
\[
\phi \bnfeq \ttrue \bsep \neg \phi_c \bsep \phi_c \aand \phi'_c \bsep \phi_r \aand \phi_c \bsep \phi_c \aand \phi_r \bsep \fd{x:a}\phi \bsep \rd a \phi_c \bsep \phi_r
\]
where $\phi_r \in \EIL{ro}$ {\em with declarations $\dec{x:a}\phi$}
and $\phi_c$ ranges over closed formulas of $\EIL{ri\mh whpb}$.
\item
RI-H
\[
\phi \bnfeq \ttrue \bsep \neg \phi \bsep \phi \aand \phi' \bsep \fd{x:a}\phi \bsep \dec{x:a}\phi \bsep \rd a \phi_c \bsep \phi_r 
\]
where $\phi_r$ is a formula of  $\EIL{ro}$
and $\phi_c$ ranges over closed formulas of $\EIL{ri\mh h}$.
\item
RS-IB
\[
\phi \bnfeq \ttrue \bsep \neg \phi \bsep \phi \aand \phi' \bsep \fd a \phi \bsep \rd A \phi
\]
\item
RS-SB
\[
\phi \bnfeq \ttrue \bsep \neg \phi \bsep \phi \aand \phi' \bsep \fd A \phi \bsep \rd A \phi
\]
\item
RS-PB
\[
\phi \bnfeq \ttrue \bsep \neg \phi_c
\bsep \phi_c \aand \phi'_c \bsep \phi_r \aand \phi_c \bsep \phi_c \aand \phi_r
\bsep \fd{x:a}\phi \bsep \rd A \phi_c \bsep \phi_r
\]
where $\phi_r \in \EIL{dfro}$ (without declarations $\dec{x:a}\phi$),
and $\phi_c$ ranges over closed formulas of $\EIL{rs\mh pb}$.
\item
RS-WH
\[
\phi \bnfeq \ttrue \bsep \neg \phi \bsep \phi \aand \phi'\bsep \fd a \phi \bsep \rd A \phi \bsep \phi_{rc}
\]
where $\phi_{rc}$ is a {\em closed} formula of  $\EIL{ro}$
\item
RS-WHPB
\[
\phi \bnfeq \ttrue \bsep \neg \phi_c \bsep \phi_c \aand \phi'_c \bsep \phi_r \aand \phi_c \bsep \phi_c \aand \phi_r \bsep \fd{x:a}\phi \bsep \rd A \phi_c \bsep \phi_r
\]
where $\phi_r \in \EIL{ro}$ {\em with declarations $\dec{x:a}\phi$}
and $\phi_c$ ranges over closed formulas of $\EIL{rs\mh whpb}$.
\item
RS-H
\[
\phi \bnfeq \ttrue \bsep \neg \phi \bsep \phi \aand \phi' \bsep \fd{x:a}\phi \bsep \dec{x:a}\phi \bsep \rd A \phi_c \bsep \phi_r 
\]
where $\phi_r$ is a formula of  $\EIL{ro}$
and $\phi_c$ ranges over closed formulas of $\EIL{rs\mh h}$.
\item
RP-IB
It is easily seen that RP-IB = RP-SB = RP-WH.

\item
RP-SB
It is easily seen that RP-IB = RP-SB = RP-WH.
\item
RP-PB

[{\bf Comment:}
How to get a logic for RP-PB from the one for PB? We need to be able to encode reverse
pomsets (but without the ability to say precisely which events we reverse) and follow
by forward formulae. We could insist on using an appropriate declaration with each 
reverse diamond in encoding reversing pomsets. So, $\phi_r$ has either the form stated 
above or is generated by
\[
\phi ::= \ttrue \mid \neg \phi \mid \dec{x:a}\phi \mid \dec{x:a}\rd{x}\phi \mid \phi \aand \phi' \mid \phi_f
\]
where $\phi_f$ is a formula of thus adjusted $\EIL{pb}$. Will this work?]\\

\item
RP-WH

Reverse pomset transitions can be expressed as
\[
\rd p \phi \Defeq (x_1:a_1)\cdots(x_n:a_n)(\rd{x_1}\cdots \rd{x_n}\phi \aand \theta'_p)
\]
where $\theta'_p \in \EIL{dfro}$.
We can take $\phi$ to be closed.

Difficult to see how to get a natural logic which is big enough to contain $\rd p \phi$
but does not contain too much.
Here is a suggestion involving mutual recursion:
\[
\phi \bnfeq \ttrue \bsep \neg \phi \bsep \phi \aand \phi' \bsep \fd a \phi \bsep \phi_{pc}
\]
\[
\phi_p \bnfeq \phi \bsep \neg \phi_p \bsep \phi_p \aand \phi'_p \bsep \dec{x:a}\phi_p \bsep \rd x \phi_p
\]
where $\phi$  ranges over formulas of $\EIL{rp\mh ib}$
(which are all closed),
and $\phi_{pc}$ ranges over closed $\phi_p$.
The idea is that $\rd p \phi$ can be defined as $\phi_p$ using $\phi$ as a base case.

The language is fairly clearly strong enough, but might be too strong.
\item
RP-WHPB
\item
RP-H
\item
H-WHPB

$\EIL{whpb}$ is:
\[
\phi \bnfeq \ttrue \bsep \neg \phi_c \bsep \phi_c \aand \phi'_c \bsep \phi_r \aand \phi_c \bsep \phi_c \aand \phi_r \bsep \fd{x:a}\phi \bsep \phi_r
\]
where $\phi_r \in \EIL{ro}$ {\em with declarations $\dec{x:a}\phi$}
and $\phi_c$ ranges over closed formulas of $\EIL{whpb}$.

$\EIL{hwh}$ is given by
\[
\phi \bnfeq \ttrue \bsep \neg \phi \bsep \phi \aand \phi' \bsep \fd{a} \phi_c \bsep \dec{x:a}\phi \bsep \rd{x}\phi
\]
where $\phi_c$ ranges over closed formulas of $\EIL{hwh}$.

[{\bf Comment:} Before we can find a logic for H-WHPB, 
it might be useful to find a sublogic for RP-PB, see above.] \\

The obvious union of $\EIL{hwh}$ and $\EIL{whpb}$ gives full $\EIL{}$, which cannot characterise H-WHPB, since H-WHPB is weaker than HH.
We have to modify the $\fd{x:a}\phi_c$ clause in $\EIL{hwh}$ so as to allow forward pomset transitions.  Something like
$\fd{x:a}\phi_p$ where
\[
\phi_p \bnfeq \fd{x:a}\phi_p \bsep \phi_r \aand \phi_c
\]
where $\phi_c$ is a closed formula of the main logic.  However this gives us a mutual recursion in the BNF, which could be awkward.
\end{itemize}
There are also depth-respecting equivalences, but it seems hard to express depth in $\EIL{}$.

\section{Characteristic formulas - surplus material}

\begin{definition}
Let $\mc C$ be a stable configuration structure.
We define formulas $\Char{hh}{X}$ ($X$ a configuration of $\mc C$) by mutual recursion:
\[
\begin{array}{ll}
\Char{hh}{X} \Defeq
\displaystyle
& \theta'_X \\
& \aand \Aand_{X \tranc e X'} \fd {z_e:\lab(e)} \Char{hh}{X'} \\
& \aand \Aand_{a \in \Act} \fb{x:a} \Oor_{X \tranc e X',\lab(e)=a} \Char{hh}{X'}[x/z_e] \\
%& \aand \Aand_{X \tranc a} \fb{x:a} \Oor_{X \tranc e X',\lab(e)=a} \Char{hh}{X'}[x/z_e] \\
%& \aand \Aand_{a \in \Act,X \not\tranc a} \neg\fd {x:a}\ttrue \\
& \aand \Aand_{X \rtranc e X'} \rd {z_e} \Char{hh}{X'} \\
% & \aand \Aand_{e \in X,X \not\rtranc e X'} \neg\rd {z_e}\ttrue
\end{array}
\]
We further let $\Char{hh}{\mc C} \Defeq \Char{hh}{\emptyset}$.
\end{definition}
This uses a recursive form of $\EIL{}$ with parameters.
The conjunctions are all finite if we assume that $\Act$ is finite.
Note that as usual, the empty disjunction is just $\ffalse$.
\begin{conjecture}
Let $\mc C,\mc D$ be stable configuration structures.
Then $\mc D \eqb{hh} \mc C$ iff $\mc D \models \Char{hh}{\mc C}$.
\end{conjecture}
This requires a proper definition of $\models$ for recursive $\EIL{}$.

The recursive formula has in a sense infinite depth,
even for a configuration structure like $a$
(it can keep going forwards and backwards indefinitely).
But one might think that a configuration structure as simple as $a$ should have a characteristic formula of finite depth.
Indeed
\[
\Aand_{b \in \Act,b \neq a} \fb {y:b}\ffalse
\aand
\fd{x:a}\ttrue
\aand
\fb{x:a}\Aand_{b \in \Act} \fb {y:b}\ffalse
\]
seems to work.

We can generalise this to $a_1\cdots a_n$ (where the $a_i$ are not necessarily distinct).
Characteristic formula is $\phi_0$ where:
\[
\begin{array}{ll}
\phi_i \Defeq &
\displaystyle
\fd{x_{i+1}:a_{i+1}}\ttrue
\aand
\fb{x_{i+1}:a_{i+1}}\phi_{i+1}
\aand
\Aand_{b \in \Act,b \neq a_{i+1}} \fb {y:b}\ffalse
\\
& \aand
\Aand_{j=1}^{i-1}\rb{x_j}\ffalse
\\
& \qquad \qquad \qquad (i = 0,\ldots,n-1)
\\
\phi_n \Defeq &
\displaystyle
\Aand_{b \in \Act} \fb {y:b}\ffalse
\aand
\Aand_{j=1}^{n-1}\rb{x_j}\ffalse
\end{array}
\]
I think that the above is sufficient, but it might be clearer to use $\theta'_X$ formulas
to ensure that each stage the events denoted by $x_1,\ldots,x_i$ are in a causal chain.

One could look at the two sides of the absorption law, and see what is needed to get characteristic formulas for them.
\[
(a+c) \Par b + (a \Par b) + a \Par (b+c) = 
(a+c) \Par b + a \Par (b+c)
\]
Distinguished by 
\[
\fd {x:a} (\neg \fd{y:c} \ttrue \aand \fd{z:b}\rd {x}\neg \fd{w:c} \ttrue)
\]
\begin{remark}
A related question is what bound can we give on the size or depth of an $\EIL{}$ formula to distinguish two configuration structures $\mc C,\mc D$ which are not hh-equivalent.

One idea is that the way we distinguish them is to follow a path of triples $(X,Y,f)$ where $f:X \isom Y$.
We start from $(\emptyset,\emptyset,\emptyset)$.
Going forwards we extend $f$ to $f'$ and going in reverse we restrict $f$ to $f'$.
Eventually the path ends in failure,
but the name of the game is to keep the path going for as long as possible.
We cannot have repeats in the path, since then we could cut out the resulting loops.
So the path length (which corresponds to the depth of the $\EIL{}$ formula) is bounded by the number of triples $(X,Y,f)$.
Note that $X$s and $Y$s can be repeated, but not $f$s.
(Can we find an example where $X$ or $Y$ is repeated but with a different $f$?
The usual examples such as the absorption law don't seem to repeat any $X$ or $Y$.)
Let $m,n$ be the number of configurations of $\mc C,\mc D$,
and let $k$ be the maximum size of any configuration.
Then the number of triples is bounded by $m.n.k!$.
Can we do better than this?

If this is as good as we can do in general,
this implies that $\mc C$ cannot have a characteristic formula of fixed finite size,
since we have to use larger and larger formulas depending on the size of $\mc D$.
In this case, instead of a single $\Char{hh}{\mc C}$,
we might have different formulas $\chi_n$ depending on $n$,
with the guarantee that $\chi_n$ would distinguish $\mc C$ from $\mc D$
provided that $\mc D$ was bounded in size by $n$.
\end{remark}
\begin{example}
Let $\mc C \Defeq a \Par b$, and for any $n$ let
$\mc D_n \Defeq (a_1+\cdots +a_n) \Par (b_1 + \cdots + b_n)$.
There is no autoconcurrency here.
Then the game can go through successive states
\[
(\emptyset,\emptyset),(a,a_1),(ab,a_1b_1),(b,b_1),(ab,a_2b_1),(a,a_2),\ldots,(a,a_n),(ab,a_nb_n),(b,b_n)
\]
We do not reach all possible states with two events
(as we run out of single-event states to use as intermediate states),
but we reach all states with zero or one events,
giving something like $4n$ reachable states in the game.

There is something a little strange here,
since presumably $\mc C = a\Par b$ has a quite simple fixed depth characteristic formula
(perhaps not even using event identifiers).
What if we changed the example by setting $b = a$?
\end{example}
\begin{remark}
I think that the bound in Theorem~\ref{thm:hh depth} could be made
$s+c-1$ rather than $s+c$.
This might be got by redoing the proof of Theorem~\ref{thm:HH=EIL}
without using $\theta'_{X'}$.
This essentially means going back to the old version of the proof.
In that case we get a formula of depth $\card X \leq c-1$.
This is because we don't need to reverse $e$ when checking whether some $e'' <e$
(some $e'' \in X$), only events in $X$.
Hardly much of an improvement - might be better to use $s+c$ for the workshop submission.
\end{remark}
\begin{lemma}\label{lem:isom h hh}
Let $\mc C$ be a stable configuration structure such that $\isom$ is an H-bisimulation between $\mc C$ and itself.
Let $\mc D$ be any stable configuration structure.
If $\mc C \eqb{h} \mc D$ then also $\mc C \eqb{hh} \mc D$.
Consequently, by Proposition~\ref{prop:char h},
$\Char{h}{\mc C}$ serves as a characteristic formula for $\mc C$ with respect to $\eqb{hh}$.
\end{lemma}
\begin{proof}
Let $\mc R$ be an H-bisimulation between $\mc C$ and $\mc D$.  We show that $\mc S = {\isom} \circ \mc R$ is an HH-bisimulation between $\mc C$ and $\mc D$.

Clearly $\mc S(\emptyset,\emptyset,\emptyset)$.  Suppose $\mc S(X,Y,f)$, i.e.\ $f_1:X \isom X_1$ and $\mc R(X_1,Y,f_2)$ with $f = f_2 \circ f_1$.

The forward cases are really just a matter of showing that the composition of two H-bisimulations is an H-bisimulation.

Suppose $X \tranc e X'$ with $\lab(e) = a$.
Then $X_1 \tranc {e_1} X'_1$ and $f'_1:X' \isom X'_1$ with $\lab(e_1) = a$, $f_1 \subseteq f'_1$.
It follows that $Y \trand {e'} Y'$ and $\mc R(X'_1,Y',f'_2)$ with $\lab(e') = a$, $f_2 \subseteq f'_2$.
Therefore $\mc S(X',Y',f')$ where $f' = f'_2 \circ f'_1$.  Also $f \subseteq f'$ as required.

Suppose $Y \trand e Y'$ with $\lab(e) = a$.
Then $X_1 \tranc {e_1} X'_1$ and $\mc R(X'_1,Y',f'_2)$ with $\lab(e_1) = a$, $f_2 \subseteq f'_2$.
It follows that $X \tranc {e} X'$ and $f'_1:X' \isom X'_1$ with $\lab(e') = a$, $f_1 \subseteq f'_1$.
Therefore $\mc S(X',Y',f')$ where $f' = f'_2 \circ f'_1$.  Also $f \subseteq f'$ as required.

Suppose $X \rtranc e X'$.
Then $Y \rtrand {e'} Y'$ where $e' = f(e)$.
We need to show $S(X',Y',f')$ where $f' = f \res X'$.
Take any sequence of transitions in $\mc D$ from $\emptyset$ to $Y'$.
This must be matched in $\mc C$ to get us to some $X''$ such that $\mc R(X'',Y',f'')$.
But $X' \isom X''$.  Hence $\mc S(X',Y',f')$ as required.

Suppose $Y \rtranc {e'} Y'$.
Then $X \rtrand e X'$ where $e' = f(e)$.
We need to show $\mc S(X',Y',f')$ where $f' = f \res X'$.
This is done exactly as in the previous case.
\end{proof}
Lemma~\ref{lem:isom h hh} tells us that any counterexample $\mc C$ with no single finite depth characteristic formula with respect to $\eqb{hh}$, must have configurations which are isomorphic but behave differently.
So the simplest examples such as $a_1 \Par \cdots \Par a_n$, $a_1.\ldots.a_n$ and $a_1+ \cdots +a_n$ won't work.

\section{Hierarchy of alternations}

We get a hierarchy within $\EIL{}$ if we count the number of alternations between forward and reverse.
\[
\phi_0 = \ttrue
\]
\[
\psi_0 = \ttrue
\]
\[
\phi_{n+1} \bnfeq \psi_n \bsep \neg \phi_{n+1} \bsep \phi_{n+1}\aand\phi'_{n+1}
\bsep \fd {x:a} \phi_{n+1} \bsep \dec{x:a}\phi_n 
\]
\[
\psi_{n+1} \bnfeq \phi_n \bsep \neg \psi_{n+1} \bsep \psi_{n+1}\aand\psi'_{n+1}
\bsep \dec{x:a}\psi_n \bsep \rd {x} \psi_{n+1} 
\]
Then
\begin{itemize}
	\item 
$\phi_1$ corresponds to $\EIL{ib}$ (F)
	\item 
$\psi_1$ corresponds to $\EIL{ro}$ (R)
	\item 
$\phi_2$ corresponds to $\EIL{h}$ (FR)
\end{itemize}
The $\phi_n$ would be the more meaningful hierarchy, since we need to start by going forwards when considering closed formulas satisfied by a configuration structure.

It would be nice to show that the $\phi_n$ form a proper hierarchy.
For instance, the Absorption Law requires a $\phi_3$ formula (FRF).
Perhaps some iteration of the Absorption Law can be found.
\[
((a+c) \Par b) + (a \Par b ) + (a \Par (b+c))
= ((a+c) \Par b) + (a \Par (b+c))
\]
are $\eql{h}$ equivalent but are distinguished by
$\fd a (\neg \fd c \ttrue \aand \fd b \rd a \neg \fd c \ttrue)$.

Here is a suggestion:
\[
\begin{array}{rcl}
P_0 & \Defeq & c \\
Q_0 & \Defeq  & 0 \\
P_{n+1} & \Defeq &
((a + P_n) \Par (b + P_n))
+ ((a + P_n) \Par (b + Q_n)) \\
&& + ((a + Q_n) \Par (b + P_n))
+ ((a + Q_n) \Par (b + Q_n)) \\
Q_{n+1} & \Defeq &
((a + P_n) \Par (b + P_n))
+ ((a + P_n) \Par (b + Q_n)) \\
&& + ((a + Q_n) \Par (b + Q_n)) \\
\end{array}
\]
\[
\begin{array}{rcl}
\phi_0 & \Defeq & \fd c \ttrue \\
\phi_{n+1} & \Defeq & \fd a ( \phi_n \aand \fd b \rd a \neg \phi_n ) \\
\end{array}
\]
Then clearly $\phi_n$ is $(FR)^nF$.
Also $P_n \models \phi_n$ and $Q_n \models\neg\phi_n$ ($n \geq 0$).
Intuitively it seems that $P_n$ and $Q_n$ cannot be distinguished by a formula of lower alternation depth, but I'm not sure how to prove it.

Fr\"oschle \& Hildebrandt~\cite{FH99} have a hierarchy within hh,
with h as level zero.
It seems that at level $n$ one can reverse transitions which are not more than $n$ positions away from being the last transition in a run.
So our alternating hierarchy is rather different from theirs.
The Petri nets $N,N'$ of their Figure 2 can be considered as configuration structures.
\begin{figure}
%\psfrag{EIL=HH}{$\EIL=\Bishh$}
\centering
% NB psfrag text is scaled if use resizebox or scalebox
\includegraphics[width=3.5in]{FHfig2}
\caption{The Petri nets of~\cite[Fig.\ 2]{FH99} converted into event structures.}\label{fig:FHfig2}
\end{figure}
See Figure~\ref{fig:FHfig2}.
The $\EIL{}$ formula which distinguishes them is
\[
\phi_n \Defeq
\fd {a_1} \cdots \fd{a_n}\fd{b_1}\cdots \fd{b_n} \fd{a_{n+1}}\fd{b_{n+1}}
(\rd{a_{n+1}}\cdots\rd{a_1}\fd d \ttrue
\aand \rd{b_{n+1}}\cdots\rd{b_1}\fd c \ttrue)
\]
which is satisfied by $N'$ but not by $N$.
Note that $\phi_n$ is in FRF form (all $n \geq 0$).
Also no identifiers are needed as there is no autoconcurrency,
and the subscripts are simply for ease of reading the formula.

In her thesis, Fr\"oschle~\cite{Fro04} also gives a hierarchy based on the number of backtracks.  Is this similar to the number of alternations?

\section{Questions}

Can one get a logic for WH in Baldan \& Crafa's framework?
I think they would need to be able to express reverse pomsets
(a maximal reverse pomset with no continuation gives the required isomorphism)
but their operator $\fbc x y a z \phi$ requires that $z$ does not belong to the current configuration.
But how could we show that WH is not expressible in their logic?
%In $\fdbc x y a z \phi$ we could require that $z$ is not free in $\phi$.

A similar question is can we show that WH is not expressible in $\EIL{df}$
i.e.\ $\EIL{}$ with no declarations?

What about a logic for WH in Nielsen \& Clausen's?
We are close to them on HH, but
presumably they would need declarations, just as we do.

Could Baldan \& Crafa do characteristic formulas for HH, H?
} % commented out all appendices not forming part of our submission to Express.
\end{document}